\definecolor{arXiv}{named}{OliveGreen}
\definecolor{ColorCite}{named}{BrickRed}
\definecolor{ColorLink}{named}{NavyBlue}
\definecolor{ColorURL}{named}{RoyalBlue}
\newtheorem{definition}{Definition}[section]
\newtheorem{proposition}{Proposition}[section]
\theoremstyle{remark}
\newtheorem{remark}{Remark}[section]
 \newcommand{\badat}{\begin{alignedat}}
 \newcommand{\eadat}{\end{alignedat}}
\newcommand\ynote[1]{\textcolor{green!60!black}{\bf [Y:\,#1]}}
\newcommand{\ymod}[1]{\textcolor{green!60!black}{#1}}
 \newcommand{\e}{{\color{red} \varepsilon}}
 \newcommand{\correc}[1]{#1}
\newcommand{\Ti}{\textnormal{Ti}}
\newcommand{\Spi}{\textnormal{Spi}}
\newcommand{\Scri}{\mathscr{I}}
\newcommand{\pM}{\hat{M}}
\newcommand{\pg}{\hat{g}}
\newcommand{\y}{\bm{y}} 
\newcommand{\R}{\mathbb{R}}
\newcommand{\SPI}{\textnormal{SPI}}
\newcommand{\BMS}{\textnormal{BMS}}
\newcommand{\scal}{\cdot} 
\newcommand{\bigO}[1]{\textrm{O(}#1\textrm{)}}
\newcommand{\smallo}[1]{\textrm{o(}#1\textrm{)}}
\newcommand{\pa}{\partial} 
\DeclareMathOperator{\sgn}{sgn}
\title{Ti and Spi, Carrollian extended boundaries at timelike and spatial infinity}
\author{Jack Borthwick$^{\dagger}$, Maël Chantreau$^{\ddagger,\star}$, Yannick Herfray$^{\ddagger}$.}
\affiliation{$^{\dagger}$ Corresponding Author, Department of Mathematics and Statistics, McGill University, Burnside Hall
805 Sherbrooke Street West, Montreal, Canada}
\affiliation{$^{\ddagger}$Institut Denis Poisson UMR 7013, Université de Tours, 
Parc de Grandmont, 37200 Tours, France}
\affiliation{$^{\star}$ ENS de Lyon, 46 allée d’Italie, 69007 Lyon, France}
\emailAdd{mael.chantreau@univ-tours.fr, jack.borthwick@mcgill.ca, yannick.herfray@univ-tours.fr}
 \abstract{
 The goal of this paper is to provide a definition for a notion of \emph{extended boundary} at time and space-like infinity which, following Figueroa-O'Farril--Have--Prohazka--Salzer, we refer to as $\Ti$ and $\Spi$. This definition applies to asymptotically flat spacetime in the sense of Ashtekar--Romano and we wish to demonstrate, by example, its pertinence in a number of situations. The definition is invariant, is constructed solely from the asymptotic data of the metric and is such that automorphisms of the extended boundaries are canonically identified with asymptotic symmetries. Furthermore, scattering data for massive fields are  realised as functions on $\Ti$ and a geometric identification of cuts of $\Ti$ with points of Minkowski then produces an integral formula of Kirchhoff type. Finally, $\Ti$ and $\Spi$ are both naturally equipped with (strong) Carrollian geometries which, under mild assumptions, enable to reduce the symmetry group down to the BMS group, or to Poincaré in the flat case. In particular, Strominger's matching conditions are naturally realised by restricting to Carrollian geometries compatible with a discrete symmetry of Spi.} 
\begin{document}
\maketitle
\newpage

\section{Introduction}

In this article we wish to promote a definition\footnote{See Definitions \ref{Definition: Spi} and \ref{Definition: Ti}, section \ref{Subsection: definition Ti/Spi}.} for \emph{extended} boundaries at timelike and spatial infinity, which, following \cite{Figueroa-OFarrill:2021sxz}, we refer to as $\Ti$ and $\Spi$\footnote{These are meant to rhyme with ``scri''.}. It applies to any asymptotically flat spacetime in the sense of Ashtekar--Romano \cite{Ashtekar:1991vb}. Our goal is to demonstrate, by example, the pertinence of this extended boundary in a number of situations.

The origin of the construction lies in the seminal work by Ashtekar and Hansen~\cite{ashtekar_unified_1978,Ashtekar:1991vb}. There, spatial infinity $\mathcal{I}^0\simeq dS^3$ was constructed as a blow up of $\iota^0$, the singular point in the conformal boundary of a four dimensional asymptotically flat spacetime where future $\mathscr{I}^+$ and past $\mathscr{I}^-$ meet \cite{penrose_asymptotic_1963,Wald:1984rg}, see figure \ref{fig:CMinkowski}. The authors then go on to consider a natural line bundle $\mathbb{R} \times \mathcal{I}^0 \to \mathcal{I}^0$ over spatial infinity. As was remarked in \cite{gibbons_ashtekar-hansen_2019}, the resulting $4$D manifold is -- weakly pseudo -- Carrollian (in the sense of \cite{duval_carroll_2014}). However, this construction was really tied up to assumptions about how future and past null infinity are joined together along a point and does not immediately extend to spacetimes which are asymptotically flat in the sense of Ashtekar--Romano \cite{beig_einsteins_1982,Ashtekar:1991vb}.

\begin{figure}[h!]
\begin{subfigure}{.5\textwidth}
\centering
\includegraphics[scale=.7]{./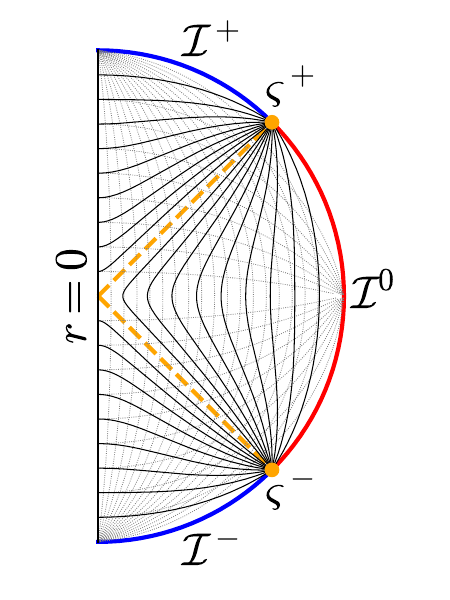}
\subcaption{\label{fig:CProjMinkowski}Projective boundaries}
\end{subfigure}%
\begin{subfigure}{.5\textwidth}
\centering
\includegraphics[scale=0.47]{./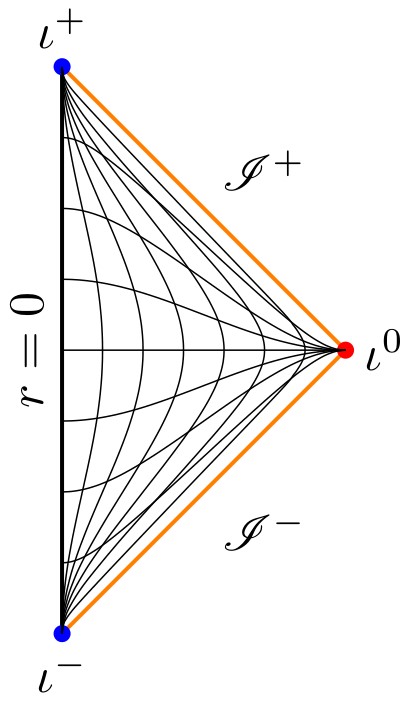}
\subcaption{Conformal boundaries\label{fig:CConfMinkowski}}
\end{subfigure}
\caption{The projective and conformal compactifications of Minkowski space $\mathbb{M}^{n+1}$.}\label{fig:CMinkowski}

\hfill\begin{minipage}{\dimexpr\textwidth-1cm}\small
 On the left (a), the projective compactification: $\mathcal{I}^{\pm} \simeq H^n$ denotes future/past projective infinity, isometric to hyperbolic space. $\mathcal{I}^0 \simeq dS^n$ denotes projective spatial infinity, isometric to de Sitter space. Projective null infinities $\varsigma^{\pm}\simeq S^{n-1}$ are (Riemannian) conformal spheres. Altogether the boundary is topologically $S^n$. The lightcone of the origin is represented by the dotted orange line, it divides the compactified spacetime into different regions that are foliated with the level hypersurfaces of $\rho=\frac{1}{\sqrt{\lvert{-t^2+x^2}\rvert}}$. The faint black horizontal (resp. vertical) curves are the $t=\textrm{cst}$ (resp. $r=\textrm{cst}$) slices.\\
On the right (b), the conformal compactification: the inside is foliated horizontally (resp. vertically) by the $t=\textrm{cst}$ (resp. $r=\textrm{cst}$) hypersurfaces.  $\mathscr{I}^{\pm} \simeq S^{n-1} \times \mathbb{R}$ denotes future/past conformal null infinity while conformal future/past $i^{\pm}$ and spatial $i^0$ infinities are reduced to points. The total boundary is topologically $S^1 \times S^{n-1}$.
\end{minipage}~\hspace{0.5cm}
\end{figure}

In most of the recent literature on spatial infinity, and following the seminal work \cite{beig_einsteins_1982,Ashtekar:1991vb,Friedrich:1998-83}, spatial\footnote{We will use the terms \enquote{space-like} and \enquote{spatial} interchangeably.} infinity $\mathcal{I}^0$ is indeed rather thought of as a co-dimension one boundary added at spatial infinity \cite{Friedrich:1999wk,Friedrich:2002ru,Kroon:2003ix,Kroon:2004pu,Kroon:2004me,Compere:2011ve,Hintz:2017xxu,troessaert_bms4_2018,Henneaux:2018mgn,AliMohamed:2021nuc,Capone:2022gme,Fuentealba:2022xsz,Mohamed:2023ttb,Mohamed:2023jwv,Hintz:2023kjj,Fuentealba:2023rvf,Fuentealba:2023hzq,Fuentealba:2023syb,Gasperin:2024bfc,Fiorucci:2024ndw,Fuentealba:2024lll} (this, however, does not prevent Ashtekar-Hansen's initial idea of a one-point completion from being useful \cite{Chrusciel:1989ye,Herberthson:1992gcz,Prabhu:2019fsp,Prabhu:2021cgk,Ashtekar:2023zul,Ashtekar:2024aa}). Similarly, past (future) timelike infinity have been investigated in \cite{Gen:1997az,Cutler:1989,Porrill:1982,Campiglia:2015kxa,Campiglia:2015qka,Campiglia:2015lxa,Chakraborty:2021sbc,Compere:2023qoa}.  This is similar in spirit to Penrose's conformal compactification, allowing one to treat asymptotics of the fields in terms of local differential geometry close to $\mathcal{I}^0$, but, as we will review later on, rather genuinely belongs to the realm of projective compactification as introduced by Čap and Gover in \cite{Cap:2014aa,Cap:2014ab}; see Figure \ref{fig:CMinkowski} for comparison and see \cite{eastwood_bgg_2011,cap_projective_2012,cap_holonomy_2014,cap_projective_2016-1,Gover:2018vui,Borthwick_Proca_Projective,Flood:2018aa,RSTA20230042,Borthwick:2024wfn} for related works. 

Nevertheless, the (projective) boundaries $\mathcal{I}^-, \mathcal{I}^0,\mathcal{I}^+$ at past/spatial/future infinity are too rigid. For example, one cannot realise the asymptotic symmetry group, i.e. the $\SPI$ group from \cite{Ashtekar:1991vb}, as some intrinsic symmetry group of $\mathcal{I}^-, \mathcal{I}^0,\mathcal{I}^+$ alone. A closely related point is that the scattering data for massive fields are not, strictly speaking, functions on $\mathcal{I}^\pm$. The extended boundaries $\Ti^{\pm} \simeq \mathbb{R} \times \mathcal{I}^{\pm}$ and $\Spi \simeq \mathbb{R} \times \mathcal{I}^{0}$ will solve these problems.

\paragraph{Extended boundary at spatial and timelike infinity.}\mbox{}

In introducing these extended boundaries, we are heavily influenced by the work~\cite{Figueroa-OFarrill:2021sxz} of Figueroa O'Farrill, Have, Prohazka and Salzer which, in the flat case of Minkowski, described natural line bundles over spatial and time infinity. They introduced the names Ti and Spi for the (pseudo) Carrollian homogeneous spaces associated to the Poincaré group $\textrm{ISO}(3,1) = \textrm{SO}(3,1) \ltimes \mathbb{R}^{4}$:
\begin{align}\label{Intro: flat models}
\begin{alignedat}{2}
        \Ti^{(3)} = \frac{\textrm{SO}(3,1) \ltimes \mathbb{R}^{4}}{ \textrm{SO}(3) \ltimes \mathbb{R}^{3}} &\to  H^3 = \frac{\textrm{SO}(3,1) }{ \textrm{SO}(3)},\\[0.5em]
        \Spi^{(3)} = \frac{\textrm{SO}(3,1) \ltimes \mathbb{R}^{4}}{ \textrm{SO}(2,1) \ltimes \mathbb{R}^{3}} &\to  dS^3 = \frac{\textrm{SO}(3,1) }{ \textrm{SO}(2,1)}.
\end{alignedat}
\end{align}
These homogeneous spaces naturally fibre over the (projective) boundaries $I^{\pm} \simeq H^3$, $I^{0} \simeq dS^3$ and it was conjectured in \cite{Figueroa-OFarrill:2021sxz} that they are related to the construction of Ashtekar--Hansen \cite{ashtekar_unified_1978}. This situation was generalised to a curved setting in~\cite{RSTA20230042,Borthwick:2024wfn}, under the implicit assumption that the mass aspect was vanishing. In the present article, we will relax this assumption and therefore complete the program of describing $\Ti$ and $\Spi$ as canonical extensions of time/spatial infinity. 

We shall prove that, in the flat case, our definition for $\Ti$ and $\Spi$ implies that they are naturally equipped with a transitive action of the Poincaré group, induced by the action of the bulk, and are therefore canonically identified with the homogeneous space \eqref{Intro: flat models}. Furthermore, we shall see that this definition can be related to that of Ashtekar--Hansen \cite{ashtekar_unified_1978} in the curved case and, consequently, that it bridges between the two approaches.\\

\paragraph{Integral formula for massive fields.}\mbox{}
 In Minkowski space-time, the intersection of the null cone emanating from a given spacetime point $X^{\mu}$ with $\Scri^{\pm}$ defines a cut $S^2 \hookrightarrow \Scri \simeq \mathbb{R} \times S^2$. The value $\Phi(X)$ of a massless field at $X^{\mu}$ is then given by some integral of the scattering data \emph{on this section} $S^2 \subset \Scri$; this is the Kirchhoff-d'Adhémar formula \cite{Kirchhoff:1883aa,Penrose_1980,penrose_spinors_1984} which encodes causality. In Section \ref{Section: An integral formula on Ti} we will generalise these results to past/future time infinity and massive fields. We will show that the set of timelike geodesics passing through $X^\mu$ asymptotically defines a cut $H^3 \subset \Ti$ of $\Ti \simeq \mathbb{R} \times H^3$, that massive fields induce scattering data at Ti and that the value $\Phi(X)$ of the massive field at $X^{\mu}$ can be recovered by an integral formula over the corresponding section $H^3 \subset \Ti$, see Proposition \ref{Proposition: Reconstruction formula}. 

Once more, we are inspired by \cite{Figueroa-OFarrill:2021sxz} and \cite{Have:2024dff}: in the former, cuts of $\Ti$ and $\Spi$ were introduced, however, without relating it to the geometry of spacetime geodesics and, in the latter, it was shown that massive unitary representations of the Poincaré group can be naturally realised on the homogeneous spaces \eqref{Intro: flat models} and that the corresponding Carrollian fields relate to Fourier modes of massive fields. However, in this approach, and since $\Ti$ was again defined as an abstract homogeneous space with \emph{a priori} no relationship to Minkowski space, the dictionary between Carrollian fields and spacetime massive fields can only be tautological. This is in contrast with the present work, in which Ti is defined in terms of the asymptotic geometry. In particular, since our construction makes sense on general grounds, the results are expected to extend to the boundary value of a massive field propagating on an asymptotically flat spacetime.\\

\paragraph{Asymptotic symmetries and Carrollian geometry.}\mbox{}
Finally, Definitions \ref{Definition: Spi} and \ref{Definition: Ti} for $\Spi/\Ti$ will enable us to generalise Ashtekar's description of radiative data at null infinity \cite{geroch_asymptotic_1977, ashtekar_radiative_1981,ashtekar_geometry_2015} in terms of (an equivalence class of) Carrollian connections at $\Scri$; see \cite{Herfray:2021qmp} for the precise relation to Carroll geometries and \cite{herfray_asymptotic_2020,herfray_tractor_2022} for a conformally invariant realisation in terms of Cartan geometries. The presence of gravitational radiation at  $\Scri$ is characterised by curvature, while flatness is equivalent to the possibility of picking up a Poincaré group $\textrm{ISO}(3,1) \subset \BMS_4$ inside the BMS group.

Similarly, we will show that $\Spi/\Ti$ is canonically equipped with a Carrollian connection. When flat, it allows us to single out a Poincaré group $\textrm{ISO}(3,1) \subset \SPI_4$, this time inside the $\SPI$ group. In the presence of curvature, this connection encodes late-time asymptotics of the gravitational fields and allows us to define a BMS group $\BMS_4 \subset \SPI_4$ inside the $\SPI$ group. At $\Ti$, this is straightforward and, along the way, yields a representation of the BMS group on massive fields. In fact, it realises the Longhi-Materassi representation \cite{Longhi:1997zt} geometrically. At $\Spi$, the parity conditions studied in \cite{troessaert_bms4_2018,Henneaux:2018cst,Henneaux:2018gfi,Henneaux:2018mgn,Henneaux:2019yax,Fuentealba:2022xsz,AliMohamed:2021nuc,Capone:2022gme,Mohamed:2023jwv,Compere:2023qoa,Mohamed:2023ttb}, which are known to imply Strominger's matching conditions \cite{strominger_bms_2014}, are very naturally realised by requiring that a certain discrete symmetry of the model is preserved in the curved case.\\

The article is organised as follows. In Section \ref{Section: Ashtekar-Romano spacetimes}, we review the definition of asymptotically flat spacetimes from \cite{Ashtekar:1991vb}. In Section \ref{Section: Ti and Spi}, we introduce the definition of Ti and $\Spi$ and demonstrate that it neatly accommodates both asymptotic symmetries and scattering data of massive fields. In Section \ref{Section: An integral formula on Ti}, we establish, at Ti, the integral formula that generates the massive field from its scattering data. Finally, in Section \ref{Section: Carrollian geometry of Spi}, we show that $\Spi/\Ti$ are examples of Carrollian geometries and use the corresponding Carrollian connection to geometrically realise the restriction of the $\SPI$ group to BMS and Poincaré.\\

\subsection*{Acknowledgements}
Research supported by NSERC grant RGPIN 105490-2018.

\newpage
\section{Ashtekar-Romano spacetimes}\label{Section: Ashtekar-Romano spacetimes}

Let us begin by reviewing, from \cite{beig_einsteins_1982,Ashtekar:1991vb}, the features of asymptotically flat spacetimes at spatial/time infinity that we will need in the rest of the article. We will primarily follow the notation and conventions of \cite{Compere:2023qoa}, which very nicely gathers together and improves upon results that are scattered in the literature. Nevertheless, our main definition for asymptotically flat spacetimes at spatial/time infinity will be different (but equivalent). We believe that this new definition, heavily influenced by the work of Čap and Gover \cite{Cap:2014aa,Cap:2014ab} on projective compactification of Einstein metrics,  is not only more compelling geometrically, but also more economical.

\subsection{Definition}

\correc{In what follows, the generic symbol $\mathcal{I}$ will stand for either $\mathcal{I}^{-}$, $\mathcal{I}^{0}$ or $\mathcal{I}^{+}$; }we will always assume that $M$ is of dimension $d=n+1$, consequently, these boundaries are of dimension $n$.\

\begin{definition}\label{Definition:AsymptoticFlatness}
    A spacetime ($\pM$, $\pg_{ab}$) will be said to be asymptotically flat at, respectively, past time-like, space-like or future time-like infinity, if it can be embedded onto the interior of a manifold $M$ with boundary respectively denoted by $\mathcal{I}^{-}$, $\mathcal{I}^{0}$ or $\mathcal{I}^{+}$, and there is a boundary defining function $\rho$
    \begin{align}
        \rho\vert_{\mathcal{I}} &= 0, &  \nabla_a\rho\vert_{\mathcal{I}} & \neq 0,
    \end{align}
    such that $\pg$ is Einstein up to order $2$, i.e. \[ \hat{G}_{ab} = O(\rho^2),\] and the rescaled inverse metric:
    \begin{equation}\label{eq:rescaled_inverse_metric}
        g^{ab} := \rho^{-2} \pg^{ab}
    \end{equation}
    satisfies the following axioms.
    \begin{enumerate}
        \item $g^{ab}$ extends\footnote{See equation \eqref{Ashtekar-Romano: metric Coordinates expansion} for the typical level of differentiability that we will implicitly assume here.} to $M$ and is invertible on $M \setminus \mathcal{I}$. \label{AFSTCondition1}
        \item $N^a := \rho^{-2} g^{ab} \nabla_b \rho$ and $\nu := N^a \nabla_a \rho$ have finite non-vanishing limits on $\mathcal{I}$. \label{AFSTCondition2}
        \item The restriction $h^{ab} := g^{ab}\vert_{\mathcal{I}}$, which defines a metric on $\mathcal{I}^0/\mathcal{I}^\pm$ by the previous assumptions, is invertible and has Euclidean signature $(+ + \dots +)$ for past/future time-like infinity $\mathcal{I}^{\pm}$ (resp. Lorentzian signature $(- + \dots +)$ for space-like infinity $\mathcal{I}^0$). \label{AFSTCondition3}
    \end{enumerate}
\end{definition}

One can check that, up to regularity considerations, these assumptions are equivalent to those required by Ashtekar--Romano to have an asymptote at (spatial, in their case) infinity~\cite{Ashtekar:1991vb}. Note that the second assumption implies that, introducing a local chart $(\rho, y^{\alpha})$, where $y^{\alpha}$ are coordinates on $\mathcal{I}^{0}$/ $\mathcal{I}^{\pm}$ extended to $M$ by the requirement $N^a \nabla_a y^{\alpha}=0$, one can write
\begin{equation}
g^{ab} =  \nu \rho^2  \partial_\rho \partial_\rho + \gamma^{\alpha \beta} \partial_\alpha \partial_\beta, \label{Eq::InverseMetric}
\end{equation}
and this implies that we have an asymptotic expansion for the metric of Beig-Schmidt type \cite{beig_einsteins_1982}. Following the conventions and assumptions of the recent works \cite{Compere:2011ve,Compere:2023qoa}, we will assume that $g_{ab}$ has the following behaviour:
\begin{align}\label{Ashtekar-Romano: metric Coordinates expansion}
\badat{2}
    g_{ab} =& \rho^{-2}\; \nu_0^{-1}\,\Big(  1 +2 \rho \sigma +\rho^2 \sigma^2 + O(\rho^3 \ln \rho)  \Big) d\rho^2 \\
    &+ \Big( h_{\alpha\beta} + \rho \left( k_{\alpha\beta} -2 \sigma h_{\alpha \beta}\right) +  O(\rho^2 \ln \rho) \Big) d y^{\alpha} d y^{\beta};
\eadat
\end{align}
the potential logarithmic terms hidden in the remainders will play no role in this article. 
We deduce that:
\begin{align}\label{Ashtekar-Romano: Coordinates expansion}
    \nu &= \nu_0\left( 1  - 2 \rho \sigma \right) + o(\rho^{2}),&
    \gamma^{\alpha\beta} &= h^{\alpha\beta} - \rho \left( k^{\alpha\beta} -2 \sigma h^{\alpha \beta}\right) + O(\rho^2\ln\rho).
\end{align}
In the above, $h_{\alpha \beta}$ is the inverse of $h^{\alpha \beta}$ and this metric along the boundary is used to lower and raise indices $\alpha,\beta, \gamma, \dots$ from the beginning of the Greek alphabet. Also note that:
\begin{align}
    \sgn(\nu_0) &= +1 \qquad \text{on} \quad \mathcal{I}^{0}, & \sgn(\nu_0) &= -1 \qquad \text{on} \quad \mathcal{I}^{\pm},
\end{align}
where again $\mathcal{I}^{-}$/$\mathcal{I}^{+}$, $\mathcal{I}^{0}$ respectively stand for past/future time-like and spatial infinity.

\paragraph{Einstein's equations}\mbox{}

Here we give an overview of some of the consequences of Einstein's equations on $\pg$.
Since Definition \ref{Definition:AsymptoticFlatness} is equivalent to that of Ashtekar--Romano, the results summarised below are equivalent to those in the literature \cite{Compere:2011ve,Compere:2023qoa}. We will, however, work in arbitrary dimension $d = n+1$, $n > 1$. 
$\hat{G}_{a b} = \bigO{\rho^2}$ implies that the first two orders of $\hat{R}_{ab}$ vanish. To leading order in $\rho$, one obtains:
\begin{equation}
\begin{aligned}\label{Ashtekar Romano: Einstein equations on h}
    &D_\alpha \nu_0 = 0 \\
    &R^{(h)}{}_{\alpha \beta} = \nu_0 (n-1) h_{\alpha \beta},
\end{aligned}\end{equation}
where $D_\alpha$ and $R^{(h)}{}_{\alpha \beta}$ are, respectively, the covariant derivative and the Ricci tensor associated to $h_{\alpha \beta}$. The first equation in~\eqref{Ashtekar Romano: Einstein equations on h}, combined with the ambiguity in the boundary defining function that will be emphasised in the next section, means that $|\nu_0|$ can be prescribed arbitrarily. In the literature, it is common practice to assign $|\nu_0|=1$; we will divert slightly from this by  allowing it to take an arbitrary positive value. 
The second equation is the vacuum Einstein equation on $\mathcal{I}^{0}$/ $\mathcal{I}^{\pm}$. In dimensions $n \leq 3$, the Weyl tensor vanishes identically and it imposes that $\mathcal{I}^{0}$/ $\mathcal{I}^{\pm}$ have constant sectional curvature $\kappa = \nu_0$. In higher dimensions, this is no longer necessary but we will nevertheless assume that past/future timelike infinity $\mathcal{I}^{\pm}$ is (locally) hyperbolic space $H^n$, and spacelike infinity is (locally) De Sitter space $dS_n$; this will simplify the presentation but a number of the results in this article will not rely on this assumption. 

At next order in $\rho$, Einstein's equations on $\pg$ yield:
\begin{equation}
    (D^2 + n \nu_0) \sigma =0, \label{Ashtekar Romano: Einstein equations on sigma}
\end{equation}
and
\begin{align*}
 &h^{\gamma\beta}D_{[\beta} k_{\alpha]\gamma} =0, \\
    &\frac{1}{2}(D^2k_{\alpha\beta}-n\nu_0k_{\alpha\beta}-D_{\alpha}D_{\beta} k + \nu_0h_{\alpha\beta}k) -{W^{(h)}}_{\phantom{\delta}(\alpha\beta)\gamma}^{\delta}k^\gamma_{\,\delta}=(n-3)\left(D_{\alpha}D_{\beta}\sigma + \nu_0h_{\alpha\beta} \sigma\right). %
\end{align*}
The first and third equations are respectively evolution equations on $\sigma$ and $k_{\alpha \beta}$, while the second is a constraint on $k_{\alpha \beta}$. It is interesting to note that it is only possible to separate the equations when $n=3$; this is achieved through the definition of $k_{\alpha \beta}$. In other dimensions, $\sigma$ will source the evolution on $k_{\alpha \beta}$. We briefly observe that the last two equations can be rewritten in the more compact form:
\begin{gather}
   \begin{cases}h^{\gamma\beta}D_{[\beta} k_{\alpha]\gamma} =0, \\[0.5em]
h^{\gamma\delta}\left(D_{\gamma} D_{[\delta}k_{\alpha]\beta} + \frac{1}{2}k_{\gamma\epsilon}{W^{(h)}}_{\phantom{\epsilon}\beta\delta\alpha}^{\epsilon}\right) =(n-3)(D_\alpha D_\beta \sigma + \nu_0h_{\alpha\beta}\sigma), \label{Ashtekar Romano: Einstein equations on k}
    \end{cases}
\end{gather}
 These equations have some important invariance properties in terms of projective geometry, that we will discuss in a subsequent work.

\paragraph{Ambiguity in the boundary defining function}\mbox{}

\noindent Definition~\ref{Definition:AsymptoticFlatness} of asymptotical flatness at time/space-like infinity guarantees the existence of a boundary defining function $\rho$, such that conditions \ref{AFSTCondition1}-\ref{AFSTCondition3} are satisfied. However, these conditions do not uniquely determine $\rho$. For instance, if Condition~\ref{AFSTCondition1} holds for $\rho$ it will also hold for \emph{any} other boundary definition function 
\begin{equation*}
    \rho' = \lambda \rho = \big(\lambda_0 + O(\rho)\big) \rho,
\end{equation*}
where $\lambda_0$ is a nowhere vanishing function on $\mathcal{I}^{\pm}/\mathcal{I}^{0}$. Condition~\ref{AFSTCondition2}, on the other hand, requires that $\lambda_0$ must be a \emph{constant} function on $\mathcal{I}^{\pm}/\mathcal{I}^{0}$. This constant can be fixed to one, if we require that the scalar curvature of the boundary metric $h_{\alpha \beta}$ be fixed to some constant value $n(n-1)\nu_0$ or, equivalently, that $\nu_0$ is kept fixed under change of boundary defining function.
Overall, with this minor restriction, we see that the boundary defining function $\rho$ is defined up to:
\begin{equation}
   \rho \quad \mapsto \quad \rho(1 + O(\rho)). 
\end{equation}

Finally, there is also a famous four-parameter ambiguity \cite{beig_einsteins_1982,Chrusciel:1989ye} in the differential structure at $\mathcal{I}^0$/$\mathcal{I}^\pm$, pertaining to the fact that changing the boundary defining function according to:
\begin{equation}
   \rho \quad \mapsto \quad \rho(1 -H \rho\log(\rho) + O(\rho)),
\end{equation}
where $(D_{\alpha}D_{\beta} +\nu_0 h_{\alpha \beta} )H=0$, will give another asymptotically flat spacetime. These log-symmetries will play no role in the present article.

\subsection{Asymptotic symmetries}

The algebra of asymptotic symmetries of \eqref{Ashtekar-Romano: metric Coordinates expansion} is given by vector fields $\xi = \xi^{\rho} \partial_{\rho} + \xi^{\alpha} \partial_{\alpha}$ whose Lie derivative preserves the form of the physical metric, including the values of $\nu_0$ and $h_{\alpha \beta}$. In this case, we have\footnote{Here, we are omitting potential log-symmetries; they play no role in the present work.} \cite{Ashtekar:1991vb,Compere:2023qoa}:
\begin{align}\label{Ashtekar-Romano: spi algebra}
    \xi^{\rho} &= -\rho^2 \omega(y) + o(\rho^2),  &
    \xi^{\alpha} &= \chi^{\alpha}(y) + \rho \nu_0^{-1}\partial^\alpha\omega(y) + o(\rho),
\end{align}
where $\chi^{\alpha}$ is a Killing vector for the hyperbolic metric $h_{\alpha \beta}$ on $\mathcal{I}^{0}$/ $\mathcal{I}^{\pm}$, $\mathcal{L}_{\chi}h_{\alpha \beta}=0$. Altogether, these vector fields form the algebra of the $\SPI$ group\footnote{Note that, since a generic Einstein metric $h_{\alpha\beta}$ need not to have any isometries, this really makes use of the assumption that $\mathcal{I}^{0}$/ $\mathcal{I}^{\pm}$ is $dS_n$/ $H^n$} \cite{Ashtekar:1991vb}:
\begin{equation}
    \SPI_{n+1} \;\simeq\; \textrm{SO}(1,n) \ltimes C^{\infty}\correc{\left(\mathcal{I}\right)}.
\end{equation}
It will be of importance for us that asymptotic symmetries \eqref{Ashtekar-Romano: spi algebra} act infinitesimally on the tensor $k_{\alpha\beta}(y)$ in the metric \eqref{Ashtekar-Romano: Coordinates expansion} according to:
\begin{align}\label{Ashtekar Romano: transformation of k}
    k_{\alpha \beta}(y) \quad\mapsto\quad k_{\alpha \beta}(y)  + \mathcal{L}_{\chi}k_{\alpha \beta}(y) + 2 \left( h_{\alpha \beta} + \nu_0^{-1} D_\alpha D_\beta \right)\omega(y).
\end{align}

\subsection{Projective compactness}\label{ssection:Projective compactness}

For comparison with the work of Ashtekar--Hansen, it will be enlightening to relate the statement of Definition~\ref{Definition:AsymptoticFlatness} to the notion of projective compactness (of order 1) as introduced by Čap and Gover in~\cite{Cap:2014ab}. 

This section is independent from the rest of the article and can safely be skipped upon first reading.

\begin{definition}
Let $M$ be a manifold with boundary with interior $\hat{M}$ and boundary $\partial M$. A metric $\hat{g}$ on $\hat{M}$ is said to be projectively compact (of order $1$) if near every point $x_0\in \partial M$ one can find a neighbourhood $U$ of $x_0$ and a boundary defining function $\rho: U\rightarrow \R_+$ such that the connection $\nabla$ on $U\cap \hat{M}$  constructed from the Levi-Civita connection $\hat\nabla$  by: 
\begin{equation}\label{eq:ProjectivelyCompactConnection}
\nabla_\eta \xi^a = \hat\nabla_\eta \xi^a + \frac{\eta^c\nabla_c\rho}{\rho}\xi^a + \frac{\xi^c\nabla_c\rho}{\rho}\eta^a, \end{equation}
has a smooth extension to $U\cap \pa M$. 
\end{definition}
 In practice, one checks the above condition by considering the Christoffel symbols in a local boundary chart near $x_0$ which are related to those of $\hat{\nabla}$ by
\begin{equation}\label{eq:ProjectivelyCompactConnection, christofel}
    \Gamma^i_{jk}=\hat\Gamma^{i}_{jk} + \frac{2}{\rho}\delta^i_{(j}\nabla_{k)}\rho.
\end{equation}

Equation~\eqref{eq:ProjectivelyCompactConnection} is precisely the condition that $\nabla$ and $\hat{\nabla}$ are \emph{projectively equivalent}, that is, they have the same \emph{unparametrised} geodesics. The following proposition gives the precise sense in which boundary points can be thought of as end points of geodesics of $\hat{g}$.
\begin{proposition}[{From \cite{Cap:2014aa}, see Proposition 2.4}] Let $\hat\nabla$ be the Levi-Civita connection of a projectively compact metric $\hat{g}$ on $\hat{M}$ (of order 1). Let $x_0\in\partial M$ and suppose that there is a geodesic curve  $\gamma : [0, 1] \rightarrow M$ for the connection \eqref{eq:ProjectivelyCompactConnection} which reaches $x_0$ transversely i.e. $\gamma(1)=x_0$, $\dot{\gamma}(1) \notin T\partial M$. Then there exists an asymptotic parametrisation $\hat{\gamma} : [0, \infty) \rightarrow M$ of the curve which satisfies $\hat{\gamma}([0, \infty)) \subset \hat M$,  $\displaystyle \lim_{t\to \infty} \gamma(t) = x_0$ and such that $\gamma$ is an affinely parameterised geodesic for $\hat\nabla$.\end{proposition}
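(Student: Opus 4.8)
The plan is to exploit the fact, emphasised just before the statement, that $\nabla$ and $\hat\nabla$ are \emph{projectively equivalent}, so that the $\nabla$-geodesic $\gamma$ is automatically an unparametrised geodesic for $\hat\nabla$; the entire content then reduces to showing that the affine parameter of $\hat\nabla$ which straightens $\gamma$ diverges exactly as $x_0$ is reached. Reading off \eqref{eq:ProjectivelyCompactConnection}, the two connections are related by the projective one-form $\Upsilon_a = \rho^{-1}\nabla_a\rho = \nabla_a\log\rho$, i.e. $\hat\nabla_\eta\xi^a = \nabla_\eta\xi^a - (\Upsilon_c\eta^c)\xi^a - (\Upsilon_c\xi^c)\eta^a$. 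First I would parametrise $\gamma$ affinely for $\nabla$ by $t\in[0,1]$, so $\nabla_{\dot\gamma}\dot\gamma=0$, and substitute into this relation to obtain
\[
\hat\nabla_{\dot\gamma}\dot\gamma = -2(\Upsilon_c\dot\gamma^c)\,\dot\gamma = -2\,\frac{d\log\rho}{dt}\,\dot\gamma =: \phi(t)\,\dot\gamma,
\]
using $\Upsilon_c\dot\gamma^c = \rho^{-1}\,d\rho/dt$ along the curve. This makes explicit that $\gamma$ is a $\hat\nabla$-geodesic up to reparametrisation.

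Next I would determine the reparametrisation. Writing $\hat\gamma(s)=\gamma(t(s))$ and $u=dt/ds$, a short computation gives $\hat\nabla_{\hat\gamma'}\hat\gamma' = (u\dot u + u^2\phi)\,\dot\gamma$, where $\dot u = du/dt$, so $\hat\gamma$ is affine for $\hat\nabla$ precisely when $\dot u + u\phi = 0$. Integrating this linear ODE with $\phi = -2\,d\log\rho/dt$ yields $u = dt/ds \propto \rho^2$, equivalently
\[
\frac{ds}{dt} \;\propto\; \rho(t)^{-2}.
\]
Thus the singular factor $\rho^{-1}$ built into $\nabla$ is exactly what stretches the affine parameter of $\hat\nabla$ near the boundary.

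Finally I would feed in the transversality hypothesis $\dot\gamma(1)\notin T\partial M$, which reads $\tfrac{d\rho}{dt}(1)=d\rho(\dot\gamma(1))\neq 0$. Since $\rho(\gamma(1))=0$, this forces $\rho(t)\sim c\,(1-t)$ with $c>0$ as $t\to 1^-$, and in particular $\rho>0$ on some $[t_\ast,1)$, so $\gamma$ remains in the interior $\hat{M}$ right up to the endpoint. Then $\tfrac{ds}{dt}\sim (1-t)^{-2}$ is non-integrable at $t=1$, whence $s(t)\to+\infty$ as $t\to 1^-$; after an affine shift, $s\colon[t_\ast,1)\to[0,\infty)$ is an increasing diffeomorphism, and $\hat\gamma(s):=\gamma(t(s))$ is an affinely parametrised $\hat\nabla$-geodesic with $\hat\gamma([0,\infty))\subset\hat{M}$ and $\lim_{s\to\infty}\hat\gamma(s)=x_0$.

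The ODE integration is routine; the step I would treat as the crux is the transversality input, since it is precisely the order-one vanishing of $\rho$ along $\gamma$ that simultaneously keeps the curve inside $\hat{M}$ and makes $\int^1\rho^{-2}\,dt$ diverge. A tangential approach to $\partial M$ would spoil these clean asymptotics, so the hypothesis $\dot\gamma(1)\notin T\partial M$ is essential rather than cosmetic.
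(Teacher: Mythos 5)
Your proof is correct. Note that the paper itself gives no proof of this statement --- it is imported verbatim from \v{C}ap--Gover (Proposition 2.4 of the cited reference) --- so there is nothing internal to compare against; your argument is essentially a reconstruction of the standard proof from that reference: projective equivalence of $\nabla$ and $\hat\nabla$ makes $\gamma$ an unparametrised $\hat\nabla$-geodesic, the relation $\hat\nabla_{\dot\gamma}\dot\gamma = -2\,(d\log\rho/dt)\,\dot\gamma$ together with the reparametrisation ODE $\dot u + u\phi = 0$ gives $dt/ds \propto \rho^2$, and transversality forces the simple zero $\rho(t)\sim c(1-t)$, so that $s=\int \rho^{-2}\,dt$ diverges while the curve stays in $\hat M$ on $[t_*,1)$. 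The computations check out (including the sign bookkeeping: $\rho\geq 0$ on $M$ with $\rho(1)=0$ and $d\rho(\dot\gamma(1))\neq 0$ indeed forces $d\rho/dt(1)<0$, hence $c>0$), and your closing observation correctly identifies transversality as the hypothesis that guarantees both positivity of $\rho$ up to the endpoint and the clean rate of divergence of the affine parameter.
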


A necessary condition for projective compactness of an Einstein metric is the following.
\begin{proposition}[{From \cite{Cap:2014aa,Cap:2014ab}}]
  Let ($\hat{M}$,$\hat{g}$) be a projectively compact Einstein metric with vanishing cosmological constant, then it must be asymptotically flat at time/space-like infinity in the sense of Definition \ref{Definition:AsymptoticFlatness}.
\end{proposition}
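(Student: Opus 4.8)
The plan is to work entirely in an adapted boundary chart and to convert the hypothesis of projective compactness into precise control on the singular part of the Levi-Civita Christoffel symbols of $\hat g$, from which the asymptotic expansion \eqref{Ashtekar-Romano: metric Coordinates expansion} — and hence Conditions \ref{AFSTCondition1}--\ref{AFSTCondition3} — can be read off. Note first that the curvature hypothesis is essentially free: since $\hat g$ is Einstein with $\Lambda = 0$ it is Ricci-flat, so $\hat G_{ab} = 0$ and in particular $\hat G_{ab} = O(\rho^2)$ holds trivially. The content of the statement is therefore entirely about the boundary behaviour of the rescaled inverse metric $g^{ab} = \rho^{-2}\hat g^{ab}$.

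I would fix $x_0 \in \partial M$, take a boundary defining function $\rho$ as supplied by projective compactness, and introduce the adapted coordinates $(\rho, y^\alpha)$ of Definition \ref{Definition:AsymptoticFlatness} (the $y^\alpha$ transported inward by $N^a$), so that the cross terms $g_{\rho\alpha}$ vanish and $g^{ab}$ takes the block form \eqref{Eq::InverseMetric}. In such a chart $\nabla_a\rho = \delta_a^\rho$, and \eqref{eq:ProjectivelyCompactConnection, christofel} shows that the only Levi-Civita symbols with a pole at $\rho=0$ are $\hat\Gamma^\rho_{\rho\rho} = -2/\rho + O(1)$ and $\hat\Gamma^\alpha_{\rho\beta} = -\rho^{-1}\delta^\alpha_\beta + O(1)$, every other component extending smoothly to $\rho = 0$.

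Next I would feed this into the metric-compatibility identity $\partial_\rho \hat g_{ij} = \hat\Gamma^{l}_{\rho i}\hat g_{lj} + \hat\Gamma^{l}_{\rho j}\hat g_{li}$, read as a linear ODE system in $\rho$. With the off-diagonal terms removed by the adapted gauge, the poles above force $\partial_\rho\hat g_{\rho\rho} = -(4/\rho)\hat g_{\rho\rho} + O(1)\hat g_{\rho\rho}$ and $\partial_\rho \hat g_{\alpha\beta} = -(2/\rho)\hat g_{\alpha\beta} + O(1)$, whose integration yields $\hat g_{\rho\rho} = \rho^{-4}(\nu_0^{-1} + O(\rho))$ and $\hat g_{\alpha\beta} = \rho^{-2}(h_{\alpha\beta} + O(\rho))$ for smooth boundary data $\nu_0, h_{\alpha\beta}$. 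Equivalently $g_{ab} = \rho^2\hat g_{ab}$ has precisely the form \eqref{Ashtekar-Romano: metric Coordinates expansion}, so $g^{ab} = \rho^{-2}\hat g^{ab}$ extends smoothly to $M$ and stays invertible off $\mathcal I$ (Condition \ref{AFSTCondition1}); moreover $N^a = \rho^{-2}g^{a\rho}$ and $\nu = \rho^{-2}g^{\rho\rho} \to \nu_0$ acquire finite, non-vanishing limits (Condition \ref{AFSTCondition2}).

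The delicate point — and the one I expect to be the main obstacle — is Condition \ref{AFSTCondition3}: that the integration constant $h_{\alpha\beta}$ is genuinely \emph{non-degenerate} and carries the asserted signature. Projective compactness constrains only the connection and, on its own, does not prevent the rescaled boundary metric from degenerating; this is exactly where Ricci-flatness is indispensable. I would substitute the expansion into $\hat R_{ab} = 0$ and extract the leading order, obtaining the boundary equation \eqref{Ashtekar Romano: Einstein equations on h}, $R^{(h)}{}_{\alpha\beta} = \nu_0(n-1)h_{\alpha\beta}$ with $\nu_0$ constant, from which non-degeneracy of $h_{\alpha\beta}$ follows; comparing the Lorentzian signature of $\hat g$ with the causal character of $\nabla\rho$ then fixes $\sgn(\nu_0)$ and the signature of $h^{ab}$ (Euclidean at $\mathcal I^\pm$, Lorentzian at $\mathcal I^0$). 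This is the analytically substantive step, carried out in the work of Čap--Gover \cite{Cap:2014aa,Cap:2014ab}, which I would invoke rather than reprove in full; the remaining items are the routine bookkeeping sketched above.
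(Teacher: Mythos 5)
First, a structural point: the paper does not prove this proposition at all — it is imported verbatim from \v{C}ap--Gover \cite{Cap:2014aa,Cap:2014ab} — so there is no internal proof to compare against, and your proposal must stand or fall on its own. It does not quite stand, for two reasons. The first is a circularity in your set-up: you pass to the adapted chart in which $\hat g_{\rho\alpha}=0$ by transporting the $y^\alpha$ along $N^a$, but the statement that $N^a$ extends, finitely and non-trivially, to the boundary \emph{is} Condition \ref{AFSTCondition2}, i.e.\ part of what is being proven. In an arbitrary boundary chart your own bookkeeping produces a coupled linear ODE system with a regular singular point whose indicial exponents for $(\hat g_{\rho\rho},\hat g_{\rho\alpha},\hat g_{\alpha\beta})$ are $(-4,-3,-2)$; these differ by integers, so resonances (logarithmic terms) and non-vanishing leading cross terms $\hat g_{\rho\alpha}\sim \rho^{-3}\,b_\alpha(y)$ are not excluded, and once such terms are present neither the smooth extension of $g^{ab}$ nor the finiteness of $N^{\alpha}=\rho^{-2}g^{\alpha b}\nabla_b\rho$ is ``routine bookkeeping''. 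This is precisely the part of the argument that cannot be waved through.

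Second, your treatment of the ``delicate point'' is logically circular: non-degeneracy of $h_{\alpha\beta}$ cannot ``follow from'' $R^{(h)}{}_{\alpha\beta}=\nu_0(n-1)h_{\alpha\beta}$, because writing down the Ricci tensor $R^{(h)}$ of $h_{\alpha\beta}$ already presupposes that $h_{\alpha\beta}$ is an invertible metric; an equation cannot certify the hypotheses needed to state it. In fact, granting the adapted gauge, non-degeneracy and the signature statement of Condition \ref{AFSTCondition3} require no field equation at all: your ODE $\partial_\rho\hat g_{\alpha\beta}=-(2/\rho)\hat g_{\alpha\beta}+\dots$ integrates to a congruence $\hat g_{\alpha\beta}(\rho)=\rho^{-2}\,\Phi(\rho)^{\gamma}{}_{\alpha}\,h_{\gamma\delta}\,\Phi(\rho)^{\delta}{}_{\beta}$ with $\Phi$ smooth and invertible down to $\rho=0$, so the rank and signature of $h_{\alpha\beta}$ coincide with those of $\hat g_{\alpha\beta}$ at interior points, which in turn are fixed by the Lorentzian signature of $\hat g$ and the causal character of $\nabla\rho$; similarly $\hat g_{\rho\rho}=C(y)\,\rho^{-4}E(\rho,y)$ with $C\neq 0$, since otherwise $\hat g(\partial_\rho,\cdot)$ would vanish identically along a fibre. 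The Ricci-flatness hypothesis is therefore not doing the job you assign to it; in \v{C}ap--Gover it enters elsewhere (through parallel-tractor arguments that control the boundary data and justify the asymptotic form in the first place, and it is what yields constancy of $\nu_0$ and equations \eqref{Ashtekar Romano: Einstein equations on h} \emph{after} non-degeneracy is known). Since you end by invoking \v{C}ap--Gover for ``the analytically substantive step'' anyway — which is all the paper itself does — the honest summary is that your proposal reduces the proposition to the cited theorem, while misidentifying both what actually remains to be proven and where the Einstein condition is used.
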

That the converse statement is not true in general can be seen from computing the Levi-Civita connection of \eqref{Ashtekar-Romano: metric Coordinates expansion}
\begin{equation}
    \hat\Gamma^{i}_{jk} = - \frac{2}{\rho}\delta^i_{(j}\nabla_{k)}\rho - \frac{\nu_0}{\rho} \left( g^{il}\nabla_l  \sigma\, \nabla_j\rho \nabla_k \rho \right)  + O(\rho^0).
\end{equation}
Thus, a metric that is asymptotically flat at time/space-like infinity in the sense of Definition \ref{Definition:AsymptoticFlatness} is projectively compact if and only if the mass aspect $\sigma$ is constant along the boundary $\partial_{\alpha}\sigma=0$. However, by \eqref{Ashtekar Romano: Einstein equations on k}, this means that $\sigma$ vanishes.

 Allowing a non-vanishing mass aspect therefore necessarily breaks projective compactness, leading to $O(1/\rho)$ terms in the connection $\nabla$ defined by~\eqref{eq:ProjectivelyCompactConnection}. Nevertheless, one can observe that projective compactness still \emph{almost} holds:  the divergent terms are located in the transverse components of the connection forms. Hence, although the projective structure does not extend to the boundary, the general picture of projective compactness remains intact. Importantly, this is true with regards to the geometric structures one can obtain at the boundary, therefore projective geometry still provides insight and intuition for studying asymptotic flatness at space/time-like infinity.

 \newpage
\section{Ti and Spi}\label{Section: Ti and Spi}

In this section, we introduce our definition of $\Ti$ and $\Spi$,
\begin{align}
    \Ti^{\pm} \simeq \mathcal{I}^{\pm} \times \mathbb{R} &\to \mathcal{I}^{\pm}, & \Spi \simeq \mathcal{I}^{0} \times \mathbb{R} &\to \mathcal{I}^{0},
\end{align}
as canonical extensions of the boundary $\mathcal{I}^{\pm}$/ $\mathcal{I}^{0}$ of asymptotically flat spacetime at time/spatial infinity (in the sense of Ashtekar--Romano). As we will prove later, in the flat case, our definition will be isomorphic to the homogeneous model of \cite{Figueroa-OFarrill:2021sxz}.  When the comparison makes sense, it will also be equivalent to the construction of Ashtekar--Hansen \cite{ashtekar_unified_1978}.

After introducing these extended boundaries, we explain why they are the natural places for asymptotic data of massive fields and how asymptotic symmetries are canonically realised as the group of automorphisms.

\subsection{Definition of Ti and Spi}\label{Subsection: definition Ti/Spi}

As we have already emphasised, the boundaries
\begin{equation}
    \mathcal{I}\in \{\mathcal{I}^{\pm}, \mathcal{I}^{0}\} \underset{i}{\hookrightarrow} M
\end{equation}
come with a preferred boundary defining function $\rho$ defined up to $\rho \mapsto \rho + O(\rho^2)$. At the level of jets, this defines a section
 \begin{equation}\label{Jlambda section}
 	\phi:\;  \mathcal{I} \xrightarrow{J^2 \rho}  \frac{i^*\left(J^2 M\right)}{ i^* \left(J^2 \rho^2\right)},
 \end{equation}
 of the canonical projection $i^*(J^2M) \rightarrow \frac{i^*(J^2M)}{i^*(J^2\rho^2)}$.
 
On the other hand, there is no canonical way to make a choice of $\rho$ in a neighbourhood of $\mathcal{I}$ in $M$ and thus no preferred way to uniquely extend the section $\phi$ to a section of $i^*\left(J^2M\right) \to \mathcal{I}$; this is the basis of the definition below for $\Ti$/$\Spi$. We will make use of the fact that 
\begin{equation}
   L := i^*\left(J^2M\right) \to  \frac{i^*\left(J^2M\right)}{ i^* \left(J^2 \rho^2\right)}
\end{equation}
is a line bundle and use the map \eqref{Jlambda section} to pull back this line over $\mathcal{I}$.

\begin{definition}\label{Definition: Spi}\mbox{}
The line bundle $\Spi \to \mathcal{I}^{0}$ which we refer to as ``spi'' is defined as the pullback bundle
\begin{equation}
\phi^* L \longrightarrow \mathcal{I}^{0}.
\end{equation}
This is summarised by the diagram:
\begin{center}
\begin{tikzcd}
\Spi=\phi^*L \arrow[d] & L=i^*\left(J^2M \right)\arrow[d]\\
\mathcal{I}^{0} \arrow[r,"\phi"]&\frac{i^*\left(J^2M \right)}{ i^* \left(J^2 \rho^2\right)}
\end{tikzcd}
\end{center}
\end{definition}
The definition is identical for ``past/future tie'' $\Ti^{\pm} \to \mathcal{I}^{\pm}$:
\begin{definition}\label{Definition: Ti}\mbox{}
The line bundles $\Ti^{\pm} \to \mathcal{I}^{\pm}$ are defined as the pullback bundles
\begin{equation}
\phi^* L \longrightarrow \mathcal{I}^{\pm}.
\end{equation}
\end{definition}

The definition should be read as follows: a point $(\rho_u, \bm{y})$ in $\Ti$ or $\Spi$, with coordinates $(u,\y) \in \mathbb{R}\times \mathcal{I}$, consists in a point $\bm{y}$ in $\mathcal{I}$ together with a choice of $2$-jet $\rho_u \in J^2_{\y}M$ of the form $\rho_u = j^2_{\y} \rho + u  j^2_{\y} \rho^2$. In practice however, when performing computations, we will often abuse notation and extend this $2$-jet to a boundary defining function $\rho_u \in \mathcal{C}^{\infty}(M)$
\begin{equation}\label{eq:ConventionsCoordinateu}
   \rho_u = \rho \Big(1 + \rho\, u + O(\rho^2) \Big),
\end{equation}
and check a posteriori that our results do not depend on the choice of extension. In particular, no derivatives in $u$ should appear.

\subsection[Application: Boundary value of a massive scalar field at Ti]{Application: Boundary value of a massive scalar field at $\Ti^+$}

As an example of application of this definition, let us study how massive scalar fields induce a boundary value on $\Ti^{+}$ in flat space.

Let
\begin{equation}
    \bm{k} \in \mathbb{R}^n \mapsto  \begin{pmatrix}
        \sqrt{1 + \bm{k}^2}\\ \bm{k}
    \end{pmatrix} \in \mathbb{R}^{1,n}
\end{equation}
be coordinates on the future unit hyperboloïd $H^n \subset \mathbb{R}^{1,n}$. Consequently, a moment of mass $m$ is parametrised by $\bm{k} \mapsto P^{\mu} = m \tilde{P}^{\mu}(\bm{k})$ with
\begin{equation}\label{Massive scalar field: P tilde definition}
\tilde{P}^{\mu}(\bm{k}) :=  \begin{pmatrix}
        \sqrt{1 + \bm{k}^2}\\ \bm{k}
    \end{pmatrix}.   
\end{equation}
A real valued solution $\Phi$ of the massive scalar field equation (with mass $m$) in $d=n+1$ dimensions can then be written in integral form:
\begin{equation}\label{Massive scalar field: Fourier coefficients}
    \Phi(X) = \frac{m^{n-1}}{2 (2\pi)^n}\int \frac{d^{n}\bm{k}}{\sqrt{1 + \bm{k}^2}} \Big( a(\bm{k}) e^{i m \tilde{P} \scal X} + a(\bm{k})^{\dagger} e^{-i m \tilde{P} \scal X}\Big).
\end{equation}

Introducing Beig-Schmidt type of coordinates $(\rho, \bm{y})\in \mathbb{R}^+\times \mathbb{R}^{n}$ on the inside of the future light-cone $\mathbb{M}^+ \subset \mathbb{M}^{1,n}$,
\begin{equation}\label{Massive scalar field: BS coordinate}
    X^{\mu} = \rho^{-1} \begin{pmatrix}
        \sqrt{1 + \bm{y}^2}\\ \bm{y}
    \end{pmatrix},
\end{equation}
the standard flat-space metric can then be written:
\begin{equation}\label{Massive field: flat metric expression}
    \pg_{ab} = - \rho^{-4} d \rho^2 + \rho^{-2} h_{ab}, \qquad h_{ab} = \left(\delta_{\alpha\beta} - \frac{y_{\alpha}y_{\beta}}{ 1+ |y|^2}\right) dy^{\alpha}dy^{\beta}.
\end{equation} 
 The asymptotic behaviour of the field at (future) time-like infinity is then given by \cite{JEDP_1987____A1_0},\cite[Chapter 7]{Hormander:1997aa}, see also \cite{Campiglia:2015kxa,Have:2024dff},\footnote{There appears to be some discrepancy in the literature regarding the phase factor $e^{in\frac{\pi}{4}}$ in this expression. For this reason, we include details on the computation in Appendix \ref{Section::AppendixSPA}.}
\begin{equation}
    \Phi(X)   \quad \underset{\rho \to 0}{\sim} \quad  \rho^{\frac{n}{2}}  \, \frac{m^{\frac{n}{2}-1}}{2 (2\pi)^{\frac{n}{2}}} \left( a(\y) e^{-im \rho^{-1}-in\frac{\pi}{4}} + a(\y)^{\dagger} e^{im \rho^{-1}+in\frac{\pi}{4}} \right)  + O(\rho^{\frac{n}{2}+1} ).
    \label{Eq::FallOffScalar}
\end{equation}

We can use this to derive the value of the field at $\Ti^{+}$, where ``plus" here stands for ``future".
\begin{proposition}\mbox{}

\label{Proposition::MassiveScalar}

Let $(u, \y )$ be a point of $\Ti^{+}$, then the limit
\begin{align}
    \varphi(u, \y) &:= \lim_{\rho\to 0} \quad \frac{1}{2}\Big(    \Phi(X) - \frac{1}{i m} \nabla^a \rho_u \nabla_a\Phi(X) \Big) \rho_u^{-\frac{n}{2}}e^{in\frac{\pi}{4}} e^{i m \rho_u^{-1}} \label{Eq::AsymptoticScalar1}
\end{align}
exists and defines a field on $\Ti^+$. In terms of the Fourier coefficients \eqref{Massive scalar field: Fourier coefficients}, this reads
\begin{equation}
    \varphi(u, \y) = \, \frac{m^{\frac{n}{2}-1}}{2 (2\pi)^{\frac{n}{2}}}  a(\y) e^{-im u}. \label{Eq::AsymptoticScalar2}
\end{equation}
We will call this function the scattering data induced by the massive field on $\Ti$.
\end{proposition}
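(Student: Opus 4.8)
The plan is to feed the integral representation \eqref{Massive scalar field: Fourier coefficients} directly into the limit \eqref{Eq::AsymptoticScalar1} and evaluate it by stationary phase, rather than to manipulate the pointwise asymptotics, since the $a^\dagger$-term carries a rapid oscillation $e^{+im\rho^{-1}}$ that has no pointwise limit and must be disposed of more carefully. Writing $X^\mu = \rho^{-1}\tilde{P}^\mu(\y)$ as in \eqref{Massive scalar field: BS coordinate}, the two exponents become $\pm i m\,\tilde{P}\scal X = \mp i m\rho^{-1}\cosh d(\bm{k},\y)$, where $d(\bm{k},\y)$ is the geodesic distance on $H^n$ between $\bm{k}$ and $\y$; both phases are stationary only at $\bm{k}=\y$, where $\cosh d = 1$ and $\nabla_{\bm{k}}\cosh d = 0$. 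First I would differentiate under the integral and record that, with the metric \eqref{Massive field: flat metric expression}, $\nabla^a\rho_u\nabla_a$ acts on the integrand as the proper-time derivative $\partial_\tau$ ($\tau=\rho^{-1}$), multiplying the $a$- and $a^\dagger$-integrands by $-im\cosh d(\bm{k},\y)$ and $+im\cosh d(\bm{k},\y)$ respectively.

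The decisive observation is that $\tfrac12\big(\Phi - \tfrac1{im}\nabla^a\rho_u\nabla_a\Phi\big)$ is a positive-frequency projector: it reweights the $a$-integrand by $\tfrac12(1+\cosh d)$ and the $a^\dagger$-integrand by $\tfrac12(1-\cosh d)$. At the stationary point the first weight equals $1$, so the $a$-term reproduces the full leading asymptotics of $\Phi$, while the second weight vanishes to second order (since $\cosh d = 1 + \tfrac12 d^2 + \dots$). Hence the leading stationary-phase contribution of the $a^\dagger$-term is annihilated, and the surviving piece is suppressed by one extra power of $\rho$, i.e. is $O(\rho^{n/2+1})$. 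This is precisely the mechanism that removes the offending $e^{+2im\rho^{-1}}$ oscillation.

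It then remains to carry out the two limits. From $\rho_u = \rho(1+u\rho+O(\rho^2))$ one gets $\rho_u^{-1} = \rho^{-1}-u+O(\rho)$ and $\rho^{n/2}\rho_u^{-n/2}\to 1$, so the multiplier $\rho_u^{-n/2}e^{im\rho_u^{-1}}$ cancels both the decay and the positive-frequency phase, turning $e^{-im\rho^{-1}}e^{im\rho_u^{-1}}$ into $e^{-imu}$; the $a$-term converges to $\tfrac{m^{n/2-1}}{2(2\pi)^{n/2}}a(\y)e^{-imu}$, while the $O(\rho^{n/2+1})$ remainder of the $a^\dagger$-term vanishes after multiplication by $\rho_u^{-n/2}=O(\rho^{-n/2})$. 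A final routine check confirms independence of the chosen smooth extension of $\rho_u$: the $O(\rho^2)$ ambiguity enters $\rho_u^{-n/2}$, $e^{im\rho_u^{-1}}$ and $\partial_\rho\rho_u$ only at orders that vanish in the limit, and no $u$-derivative ever appears, as anticipated after \eqref{eq:ConventionsCoordinateu}.

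The main obstacle is the rigour of the stationary-phase step, and in particular justifying the extra $\rho$-suppression of the $a^\dagger$-contribution produced by the amplitude zero at the critical point. This needs uniform control of the expansion in $\y$ together with enough decay and smoothness of $a(\bm{k})$ to handle the non-compact $\bm{k}$-integration over $H^n$, in the standard setting of \cite{Hormander:1997aa}. Granting the differentiable asymptotic expansion quoted before the statement, the same conclusion also follows by applying the projector $\tfrac12\big(\Phi - \tfrac1{im}\nabla^a\rho_u\nabla_a\Phi\big)$ term by term, where it keeps the $e^{-im\rho^{-1}}$ mode with coefficient $1$ and kills the $e^{+im\rho^{-1}}$ mode to leading order, the residual being $O(\rho^{n/2+1})$.
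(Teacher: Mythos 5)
Your argument is correct, and it reaches the conclusion by a genuinely different route from the paper. The paper's proof is a two-step manipulation of the \emph{pointwise} asymptotic expansion quoted just before the statement: it differentiates that expansion term by term (so that $-\tfrac{1}{im}\nabla^a\rho_u\nabla_a\Phi$ flips the sign of the $a^\dagger$-mode), observes that the half-sum cancels the negative-frequency mode, and then multiplies by the lift factors $\rho_u^{-n/2}e^{im\rho_u^{-1}}$ exactly as you do in your third paragraph. You instead work directly under the integral sign of \eqref{Massive scalar field: Fourier coefficients}: the identification of the phases as $\mp im\rho^{-1}\cosh d(\bm{k},\y)$, the reweighting of the two integrands by $\tfrac12(1\pm\cosh d)$, and the observation that the $a^\dagger$-weight vanishes to second order at the unique critical point $\bm{k}=\y$ (hence an extra factor of $\rho$ from stationary phase) are all correct, and this is precisely the mechanism hidden inside the citation to H\"ormander in the paper's approach. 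What your route buys is self-containedness: you never need to assume that the asymptotic expansion of $\Phi$ can be differentiated term by term, which is the (true, but uncited) fact the paper's proof leans on; what it costs is the need for genuine stationary-phase hypotheses (smoothness and decay of $a(\bm{k})$ on the non-compact hyperboloid), which you correctly flag. Your closing remark recovers the paper's proof verbatim as the shortcut version.

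One convention should be fixed: you attribute the identity $\nabla^a\rho_u\nabla_a\approx\partial_\tau$ to the physical metric \eqref{Massive field: flat metric expression}. With $\pg^{ab}$ one actually gets $\nabla^a\rho_u\nabla_a=-\rho^4\partial_\rho+\dots=\rho^2\partial_\tau$, under which the limit \eqref{Eq::AsymptoticScalar1} would degenerate. The operator you use, $-\rho^2\partial_\rho+O(\rho^3)$, is the one obtained by raising the index with the \emph{rescaled} metric $g^{ab}=\rho^{-2}\pg^{ab}$ of Definition \ref{Definition:AsymptoticFlatness}; this is the convention the paper's proof makes explicit when it writes $\nabla^a\rho_u=\rho^2\left(-\partial_\rho+h^{\alpha\beta}\nabla_\alpha u\,\partial_\beta\right)+O(\rho^3)$. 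Since your subsequent computation uses the correct operator, this is a misattribution rather than a gap, but it is exactly the factor of $\rho^2$ on which the whole projector mechanism hinges.
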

\begin{proof}\mbox{}
By hypothesis, $\nabla^a \rho_u = \rho^{2}\left( - \partial_{\rho} + h^{\alpha\beta}\nabla_{\alpha} u \,\partial_{\beta} \right) +O(\rho^3)$; note that the index has been raised using the rescaled inverse metric $g^{ab}$. Consequently,
\begin{equation}
    - \frac{1}{i m}\nabla^a \rho_u \nabla_a\Phi(X) \underset{\rho \to 0}{\sim} \rho^{\frac{n}{2}} \, \frac{m^{\frac{n}{2}-1}}{2 (2\pi)^{\frac{n}{2}}}  \left( a(\y) e^{- im \rho^{-1}-in\frac{\pi}{4}} - \, a(\y)^{\dagger} e^{im \rho^{-1}+in\frac{\pi}{4}} \right)  + O(\rho^{\frac{n}{2}+1} )
\end{equation}
from which we obtain
\begin{equation}
\frac{1}{2}\left(\Phi(X) - \frac{1}{i m}\nabla^a \rho_u \nabla_a\Phi(X) \right) \underset{\rho \to 0}{\sim} \rho^{\frac{n}{2}} \, \frac{m^{\frac{n}{2}-1}}{2 (2\pi)^{\frac{n}{2}}}  \, a(\y) e^{- im \rho^{-1}-in\frac{\pi}{4}}    + O(\rho^{\frac{n}{2}+1} ).
\end{equation}
This selects one of the asymptotic plane waves, and the remaining factors lift it to a field on $\Ti^+$:
\begin{equation}
\frac{1}{2}\left(\Phi(X) - \frac{1}{i m}\nabla^a \rho_u \nabla_a\Phi(X) \right)  (0^+-i\rho_u^{-\frac{n}{2}})  e^{i m  \rho_u^{-1}} \underset{\rho \to 0}{\sim}  \frac{m^{\frac{n}{2}-1}}{2 (2\pi)^{\frac{n}{2}}} \, a(\y) e^{- im u}    + O(\rho).
\end{equation}

\end{proof}
Note that in the final result no derivatives of $u$ appear, indicating that the limit only depends on the $2$-jet of $\rho_u$,  which is the necessary condition for it to genuinely define a field on $\Ti^+$.

\subsection{Carrollian geometry and Asymptotic symmetries}

The Carroll limit was first studied by Levy-Leblond \cite{levy-leblond_nouvelle_1965} (and independently by Sen Gupta \cite{SenGupta:1966qer}) as an alternative to the more usual Galilean limit. Both limits can be realised as Inönü-Wigner contractions of the Poincaré group but the geometry (and physics) of the resulting spaces is obviously very different, we refer to \cite{duval_carroll_2014} for a wonderful exposition of the subject. As was emphasised in \cite{duval_conformal_2014} the BMS group can be realised as the group of conformal automorphisms of a (weak) Carrollian geometry; see also \cite{geroch_asymptotic_1977,ashtekar_geometry_2015} for classical work relating the BMS group to the null geometry of \enquote{scri} $\mathscr{I}$.

At spatial infinity, where asymptotic symmetries form the SPI group of \cite{ashtekar_unified_1978,Ashtekar:1991vb}, the situation might appear to be quite different at first glance. However, as was remarked in \cite{gibbons_ashtekar-hansen_2019}, these symmetries can still be understood as the Carroll automorphisms of Ashtekar-Hansen's boundary at spatial infinity. As we mentioned previously, and will be discussed again in further detail in Section \ref{ssection: Comparison with the previous literature}, when Definition \ref{Definition: Spi} and Ashtekar-Hansen's definition both make sense, they essentially coincide. It is therefore not surprising that the SPI group will be realised a group of Carrollian automorphisms of $\Spi$ and $\Ti$. This result will in particular lead to a realisation of a representation of the SPI group on massive fields. 

As was first pointed out in \cite{troessaert_bms4_2018}, the BMS group can also be recovered as a preferred subgroup of asymptotic symmetries at spatial infinity. We will postpone discussion of this point to Section \ref{Section: Carrollian geometry of Spi}, where we will show that $\Spi$ and $\Ti$ are actually examples of strongly Carrollian geometries.

\subsubsection{Carrollian geometry}\label{sssection: Carrollian geometry}

\begin{proposition}\mbox{} \label{Proposition: principal bundle}

$\Ti/\Spi$ are naturally $\mathbb{R}$-principal bundles over $\mathcal{I}^{\pm}/\mathcal{I}^{0}$. In other terms, we have a canonical complete vector field $n^{a}$ along the fibres.
 
The natural coordinate system $(u,\bm{y})$ previously described, provides a natural global trivialisation: the $\mathbb{R}$-action is a just constant translation in $u$ and $n^{a} = \partial_u$.
\end{proposition}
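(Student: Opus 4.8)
The plan is to exhibit a canonical, free and transitive, fibrewise $\R$-action on $\Ti/\Spi$ and to identify its infinitesimal generator with $n^a$. I would start from the fibre description read off from Definitions \ref{Definition: Spi}--\ref{Definition: Ti}: the fibre of $\phi^*L$ over a point $\y \in \mathcal{I}$ is the set of $2$-jets
\begin{equation*}
  A_{\y} = \left\{\, j^2_{\y}\rho + u\, j^2_{\y}\rho^2 \;:\; u \in \R \,\right\} \subset J^2_{\y}M,
\end{equation*}
namely the coset over $\phi(\y) = [\,j^2_{\y}\rho\,]$ of the quotient map $i^*(J^2M) \to i^*(J^2M)/i^*(J^2\rho^2)$ (this map is fibrewise, so its preimage of $\phi(\y)$ lies in $J^2_{\y}M$). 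Thus $A_{\y}$ is an affine line, i.e. a torsor modelled on the one-dimensional vector space $V_{\y} := i^*(J^2\rho^2)\big|_{\y} = \R\cdot j^2_{\y}\rho^2$, and producing the $\R$-principal structure amounts to trivialising the line bundle $V = i^*(J^2\rho^2)$ over $\mathcal{I}$ in a way that does not depend on any choice.

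The key step—and the one I expect to be the only genuine point to check—is that $j^2\rho^2$ is a \emph{canonical} global nowhere-vanishing section of $V$, independent of the admissible choice of boundary defining function. Under the residual freedom established earlier, $\rho \mapsto \rho' = \rho\,(1 + O(\rho)) = \rho + O(\rho^2)$, one computes $\rho'^2 = \rho^2 + O(\rho^3)$; since a function that is $O(\rho^3)$ has vanishing value and vanishing first and second derivatives wherever $\rho = 0$, its $2$-jet along $\mathcal{I}$ vanishes, whence $j^2_{\y}\rho'^2 = j^2_{\y}\rho^2$. Hence $s := j^2\rho^2$ is well defined on all of $\mathcal{I}$ and nowhere zero—its Hessian part along the boundary is $2\,\nabla_a\rho\,\nabla_b\rho$, which is nonzero because $\nabla_a\rho|_{\mathcal{I}}\neq 0$ by Definition \ref{Definition:AsymptoticFlatness}—giving a canonical identification $V \cong \mathcal{I}\times\R$.

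It then remains to assemble the statement. The translations $t\cdot p := p + t\, s$, for $p$ in a fibre $A_{\y}$, define a smooth fibrewise action that is manifestly free and transitive, since adding $t\,j^2_{\y}\rho^2$ sends $j^2_{\y}\rho + u\, j^2_{\y}\rho^2$ to $j^2_{\y}\rho + (u+t)\, j^2_{\y}\rho^2$; therefore $\Ti/\Spi$ is an $\R$-principal bundle over $\mathcal{I}^{\pm}/\mathcal{I}^{0}$, and its infinitesimal generator is the sought-after vertical vector field $n^a$. Reading this in the coordinates $(u,\y)$ attached to a fixed $\rho$ shows the action is $u\mapsto u+t$ and $n^a = \partial_u$, whose flow is defined for all $t$, so $n^a$ is complete; the assignment $(u,\y)\mapsto j^2_{\y}\rho + u\, j^2_{\y}\rho^2$ is the asserted global trivialisation $\R\times\mathcal{I} \xrightarrow{\sim} \Ti/\Spi$. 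Finally, I would note that a change $\rho\mapsto\rho + c(\y)\,\rho^2 + O(\rho^3)$ shifts $j^2_{\y}\rho$ by $c(\y)\,s$ and hence merely reparametrises $u \mapsto u - c(\y)$: this is a principal-bundle gauge transformation leaving $n^a$ and the $\R$-action untouched, confirming that the bundle structure is canonical while only the zero section—i.e. the trivialisation—depends on the extension, consistent with the earlier requirement that no $u$-derivatives enter.
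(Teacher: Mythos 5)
Your proof is correct and follows essentially the same route as the paper's: the canonical $\mathbb{R}$-action is translation of $2$-jets by $\alpha\, j^2_{\y}\rho^2$ (the paper writes it as $\rho_u \mapsto \rho_u + \alpha\,\rho_u^2$ and then notes this equals $\rho_u + \alpha\, j^2_{\y}\rho^2$), which in the coordinates of \eqref{eq:ConventionsCoordinateu} is $u \mapsto u+\alpha$ with generator $\partial_u$. Your explicit verification that $s = j^2\rho^2$ is a canonical, nowhere-vanishing section of $i^*(J^2\rho^2)$ -- independent of the admissible choice of $\rho$, with non-degeneracy coming from the Hessian part $2\nabla_a\rho\,\nabla_b\rho$ -- spells out a point the paper leaves implicit, but does not change the substance of the argument.
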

\begin{proof}\mbox{}
Let $\rho_u$ be an arbitrary admissible boundary defining function. Then, referring to the discussion in section~\ref{Section: Ashtekar-Romano spacetimes}, $(\rho_u)'=\rho_u + \alpha \rho_u^2$ is also an admissible boundary defining function and its $2$-jet at any boundary point defines a point in $\Ti/\Spi$ that only depends on the $2$-jet of $\rho_u$ at that point, i.e. the point of $\Ti/\Spi$ defined by $\rho_u$. For each $\alpha \in \R$, we obtain in this way a diffeomorphism $R_\alpha$ of the extended boundary. In the canonical coordinates $(u,\bm{y})$ given by \eqref{eq:ConventionsCoordinateu} is simply given by:
\[ R_\alpha: (u,\bm{y}) \mapsto (u+\alpha, \bm{y}).\]
Note that the $\mathbb{R}$-action on the 2-jets at $\y$ can also be concisely be written as $R_\alpha (\rho_u) = \rho_u + \alpha \,j^2_{\y} \rho^2$. 
\end{proof}

In particular, one sees from \eqref{Eq::AsymptoticScalar2} that the scattering data $\varphi$ induced on $\Ti$ by a massive scalar field must satisfy $\big(\mathcal{L}_n + i m \big) \varphi =0$. These are in fact equivariant functions under the $\mathbb{R}$-action, 
\begin{equation}
    R_{\alpha}^*\varphi(u,\y) = \varphi(u+\alpha,\y) = e^{-im \alpha}\varphi(u,\y) = \rho_m(-\alpha) . \varphi(u,\y)
\end{equation}
for the representation 
\begin{equation}
    \rho_{m} \left|\begin{array}{ccc}
       \mathbb{C} \times \mathbb{R}  & \mapsto &  \mathbb{C}  \\
        (\varphi , \alpha)  &\mapsto & e^{i m \alpha} \varphi
    \end{array}\right..
\end{equation}
We therefore have the following:
 \begin{proposition}\label{Proposition: scattering data are sections}
     The space of scattering data $\varphi(u,\y)$ of massive fields of mass $m$ on $\Ti^+$ is in to one-to-one correspondence with the space of sections $\y \mapsto \underline{a}(\y)$ of the associated bundle $\Ti^+ \times_{\rho_m} \mathbb{C} \to H^n$ over $H^n$.
 \end{proposition}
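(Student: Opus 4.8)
The plan is to recognise this statement as an instance of the standard bundle-theoretic identification, valid for any principal $G$-bundle $\pi:P\to B$ and representation $\rho:G\to \mathrm{GL}(V)$, between sections of the associated bundle $P\times_\rho V\to B$ and $G$-equivariant $V$-valued functions on $P$. Here $G=\mathbb{R}$ acting on $\Ti^+$ as in Proposition \ref{Proposition: principal bundle}, $B=H^n$, $V=\mathbb{C}$ and $\rho=\rho_m$. The whole content of the proposition then reduces to checking that the scattering data $\varphi(u,\y)$ are \emph{precisely} the equivariant functions for $\rho_m$, and the rest is formal.

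First I would recall the general correspondence: a section $s$ of $P\times_\rho V$ is equivalent to a map $f:P\to V$ satisfying $f(p\cdot g)=\rho(g)^{-1}f(p)$, with $f$ recovered from $s$ by writing $s(\pi(p))=[p,f(p)]$, this expression being well defined exactly because of the equivariance. This is purely formal and I would simply invoke it. Second, I would verify that scattering data are exactly such equivariant functions. The computation preceding the statement already establishes, for any $\varphi$ coming from a massive field, that $R_\alpha^*\varphi=\rho_m(-\alpha)\cdot\varphi=\rho_m(\alpha)^{-1}\varphi$, i.e. the required equivariance, equivalently $(\mathcal{L}_n+im)\varphi=0$. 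For the converse I would use the explicit form \eqref{Eq::AsymptoticScalar2}: since $\varphi(u,\y)=\frac{m^{n/2-1}}{2(2\pi)^{n/2}}\,a(\y)\,e^{-imu}$, the function $\varphi$ is completely determined by its restriction to the zero section $\{u=0\}$, namely by $a(\y)$ up to the fixed normalisation; and conversely any admissible $a(\y)$ is the Fourier coefficient of a genuine solution through \eqref{Massive scalar field: Fourier coefficients}, so every equivariant function is realised as scattering data. Identifying $\underline{a}(\y)$ with the normalised $a(\y)$ then yields the bijection with sections.

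The only genuinely delicate point is the bookkeeping of conventions: one must confirm that the inverse in the associated-bundle equivariance $f(p\cdot g)=\rho(g)^{-1}f(p)$ matches the sign in $R_\alpha^*\varphi=e^{-im\alpha}\varphi$, i.e. that $\rho_m(-\alpha)=\rho_m(\alpha)^{-1}$, which holds because $\rho_m$ is a one-dimensional unitary character. Beyond this sign-tracking (and a choice of regularity class for the coefficients, inherited from Proposition \ref{Proposition::MassiveScalar}), the argument is entirely formal, the substantive input $(\mathcal{L}_n+im)\varphi=0$ having already been supplied.
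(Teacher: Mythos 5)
Your proposal is correct and takes essentially the same route as the paper: both rest on the standard identification of sections of the associated bundle $\Ti^+\times_{\rho_m}\mathbb{C}\to H^n$ with $\rho_m$-equivariant functions on the principal $\mathbb{R}$-bundle $\Ti^+$, the required equivariance $(\mathcal{L}_n+im)\varphi=0$ having been established just before the statement. The paper's own proof is even terser---it merely exhibits the section $\y\mapsto\big[(u,\y),\varphi(u,\y)\big]$ and notes it is well defined by the equivalence relation defining the associated bundle---while you additionally spell out the converse (that every equivariant function, i.e. every admissible $a(\y)$, is realised as scattering data via \eqref{Massive scalar field: Fourier coefficients}), a point the paper leaves implicit.
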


 \begin{proof}
     By definition, a section of the associated bundle $\Ti^+ \times_{\rho_m} \mathbb{C} \to H^n$ over $H^n$ is given by
 \begin{equation}
    \underline{a}(\y) \;\left|\quad  \begin{array}{ccc}
       H^n  & \to & \Ti^+ \times_{\rho_m} \mathbb{C} \\[0.4em]
       \bm{y}  &  \;\mapsto\; & \Big[  \big( u \;,\; \bm{y} \big)  \;,\; \varphi(u, \y) \Big]
    \end{array}\right.,
\end{equation}
where brackets denote the equivalence class for the relation, $\big( ( u , \bm{y} ) ,  \varphi \big) \sim \big( ( u + \alpha , \bm{y} ) ,  \rho_{m}(-\alpha). \varphi \big)$, defining $\Ti^+ \times_{\rho_m} \mathbb{C}$.
 \end{proof}

It follows from our definition and Proposition \ref{Proposition: principal bundle} that $\Spi/\Ti$ are naturally equipped with a (weakly) Carrollian geometry $(h_{ab}, n^{a})$, consisting of a vector field and degenerate metric (of signature $(0,\mp,+ \dots +)$ such that
\begin{align}
    n^a h_{ab}&=0, & \mathcal{L}_n h_{ab}&=0.
\end{align}
In our preferred local coordinates $(u,\y)$, these are $n^a= \partial_u$ and $h_{ab} = h_{\alpha \beta }(\y) dy^{\alpha}dy^{\beta}$ where $ h_{\alpha \beta }(\y)$ is the hyperbolic metric on $\mathcal{I}^{0}/\mathcal{I}^{\pm}$.

\subsubsection{The SPI group}

As was already noted in \cite{RSTA20230042}, asymptotic symmetries of the spacetime coincides with automorphisms of the Carrollian geometry on  $\Spi/\Ti$. This parallels the situation at null infinity where the asymptotic symmetry group reduces along null infinity to conformal Carrollian symmetries \cite{geroch_asymptotic_1977,duval_conformal_2014}.

\begin{proposition}\label{Proposition: Spi group as automorphism}
    The asymptotic algebra symmetries \eqref{Ashtekar-Romano: spi algebra} naturally acts on $\Spi/\Ti$. Moreover, this action coincides with the algebra automorphisms of the Carrollian structure $(h_{ab}, n^{a})$, i.e. vector fields $\bar\xi$ satisfying $\mathcal{L}_{\bar\xi} h_{ab} = \mathcal{L}_{\bar\xi} n^a = 0$:
     \begin{equation}\label{Ti: spi algebra}
         \bar\xi(y) = \omega(y)\partial_u + \chi^\alpha(y) \partial_{y^\alpha}.
     \end{equation}

\end{proposition}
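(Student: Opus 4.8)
The plan is to establish the two halves of the statement separately: first, that an SPI generator \eqref{Ashtekar-Romano: spi algebra} induces a well-defined vector field $\bar\xi$ on $\Ti/\Spi$ of the claimed form \eqref{Ti: spi algebra}; and second, that the conditions $\mathcal{L}_{\bar\xi}h_{ab}=\mathcal{L}_{\bar\xi}n^a=0$ select exactly the vector fields of this form, with the same parameter content, namely $\chi^\alpha$ a Killing field of $h_{\alpha\beta}$ and $\omega$ an arbitrary function on $\mathcal{I}$. Since the two descriptions then carry identical data, the identification of the SPI action with the algebra of Carrollian automorphisms follows immediately.

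For the first half I would use that an asymptotic symmetry $\xi$ is, by construction, (asymptotically) a diffeomorphism of $M$ fixing $\mathcal{I}$ and preserving both $\nu_0$ and $h_{\alpha\beta}$; hence its flow $\Phi_\epsilon$ maps admissible boundary defining functions to admissible ones, and therefore acts on their $2$-jets, i.e. on the fibres of $L$ and thus on $\Ti/\Spi=\phi^*L$. To read off the induced vector field I would compute the image of a point $(u,\y)$, represented by the $2$-jet of $\rho_u=\rho(1+u\rho+O(\rho^2))$, under the natural (pushforward) action $[\rho_u]_{\y}\mapsto[(\Phi_{-\epsilon})^*\rho_u]_{\Phi_\epsilon(\y)}$. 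The base point moves by $\xi^\alpha|_{\mathcal{I}}=\chi^\alpha$, giving the horizontal part $\chi^\alpha\partial_{y^\alpha}$. For the fibre, since $\mathcal{L}_\xi\rho_u=\xi(\rho_u)=\xi^\rho(1+O(\rho))=-\rho^2\omega+O(\rho^3)$ by \eqref{Ashtekar-Romano: spi algebra}, one gets $(\Phi_{-\epsilon})^*\rho_u=\rho\big(1+(u+\epsilon\omega)\rho+O(\rho^2)\big)=\rho_{u+\epsilon\omega}$ at the level of $2$-jets, so that $\dot u=\omega$ independently of $u$. With the convention \eqref{eq:ConventionsCoordinateu}, this yields precisely $\bar\xi=\omega\partial_u+\chi^\alpha\partial_{y^\alpha}$.

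For the second half I would simply impose the Carrollian automorphism conditions on a general $\bar\xi=f(u,\y)\partial_u+\zeta^\alpha(u,\y)\partial_{y^\alpha}$, using $n^a=\partial_u$ and $h_{ab}=h_{\alpha\beta}(\y)\,dy^\alpha dy^\beta$ from Proposition \ref{Proposition: principal bundle}. The condition $\mathcal{L}_{\bar\xi}n^a=-\partial_u\bar\xi^a=0$ forces $f=f(\y)$ and $\zeta^\alpha=\zeta^\alpha(\y)$, while the only nontrivial component of $\mathcal{L}_{\bar\xi}h_{ab}$ is the $\alpha\beta$ one, which reduces (because $h_{\alpha\beta}$ is $u$-independent and the $u$-rows of $h$ vanish) to $(\mathcal{L}_\zeta h)_{\alpha\beta}=0$, i.e. $\zeta$ is a Killing field of $h_{\alpha\beta}$. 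Thus $f=\omega$ is arbitrary and $\zeta^\alpha=\chi^\alpha$ is Killing, matching both \eqref{Ti: spi algebra} and the parameter content of the SPI algebra.

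The main obstacle is not the algebra but the well-definedness of the action on the jet bundle $\phi^*L$. I would need to check carefully that the flow of an SPI generator preserves the normalisation built into $\Ti/\Spi$ — that $(\Phi_{-\epsilon})^*\rho_u$ is again of the form $\rho(1+u'\rho+O(\rho^2))$ with the same $\nu_0$ and $h_{\alpha\beta}$, so that it genuinely defines a point of $\phi^*L$ — which is exactly where the structure of \eqref{Ashtekar-Romano: spi algebra}, in particular the subleading $\rho\,\nu_0^{-1}\partial^\alpha\omega$ term in $\xi^\alpha$ needed to preserve the metric form, enters. I would also verify, as in Proposition \ref{Proposition::MassiveScalar}, that the resulting $\dot u=\omega$ depends only on the $2$-jet of $\rho_u$ and not on the chosen smooth extension, i.e. that no $\partial_u$-derivatives are generated; this is what guarantees that $\bar\xi$ descends to $\Ti/\Spi$.
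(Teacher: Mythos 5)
Your proposal is correct and follows essentially the same route as the paper: there, the induced action on $2$-jets is likewise defined by $j^2 f_x \mapsto j^2(f\circ\Phi^{\xi}_{-t})_{\Phi^{\xi}_t(x)}$ and differentiated at $t=0$ (in explicit jet coordinates, where the fibre coordinate sits in the $\rho\rho$-entry $2u$ and the generator's corresponding entry is $-\partial^2_\rho\xi^\rho=2\omega$), which is precisely your computation $(\Phi_{-\epsilon})^*\rho_u=\rho_{u+\epsilon\omega}$ giving $\dot u=\omega$, and the well-definedness points you flag (preservation of the admissible class of boundary defining functions, dependence only on the $2$-jet) are exactly the ones the paper's jet formalism handles. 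The only difference is that you also prove the converse characterisation of the Carrollian automorphisms of $(h_{ab},n^a)$ explicitly, which the paper states without proof in the surrounding text; this is a harmless, indeed welcome, addition.
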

\begin{proof}\mbox{}
Let $\Phi_t^\xi$ denote the local 1-parameter group of diffeomorphisms generated by the vector field $\xi$ given by Eq.~\eqref{Ashtekar-Romano: spi algebra}. Define a local $1$-parameter group of diffeomorphisms on $J^2M$ by:
\[ j^2f_x \mapsto j^2(f\circ\Phi_{-t}^\xi)_{\Phi_t^\xi(x)}. \] 

Explicitly, in canonical coordinates on $J^2M$ i.e. such that:
\[ j^2f_p = (\rho(p), y^\alpha(p), f(p), \partial_\rho f(p), \partial_\alpha f(p),\begin{bmatrix} \partial^2_\rho f(p) & \partial_\rho\partial_\alpha f(p) \\  \partial_\rho\partial_\alpha f(p) & \partial_\alpha\partial_\beta f(p)\end{bmatrix}),\]
a point $(u,y)$ of the extended boundary can be identified with a $2$-jet:
\[ (0,y^\alpha, 0,1,0,\begin{bmatrix}2u&0\\0&0\end{bmatrix}).\]
The image of which under the local diffeomorphism is given explicitly by:
\begin{equation}\label{eq:InducedSymmetry} (0, y^\alpha(\Phi_{t}(p)), 0, \partial_\rho\Phi^\rho_{-t}(\Phi_t(p)), \partial_\alpha \Phi^\rho_{-t}(\Phi_t(p)),[2u\partial_i\Phi^\rho_{-t}(\Phi_{t}(p))\partial_j\Phi^\rho_{-t}(\Phi_{t}(p)) + \partial_i\partial_j \Phi^\rho_{-t}(\Phi_t(p))]),\end{equation}
taking the derivative with respect to $t$ evaluated at $t=0$ we find that at these points the infinitesimal generator is:
\[(0, \chi^\alpha(y), 0, 0, 0, \begin{bmatrix}-\partial^2_\rho \xi^\rho (p) & 0 \\ 0 & 0 \end{bmatrix})=(0,\chi^\alpha(y),0,0,0,\begin{bmatrix}2\omega(y) & 0 \\ 0 & 0\end{bmatrix}).\]
this shows that the restriction of the diffeomorphism to these jets yields a diffeomorphism on the extended boundary generated by the vector field: 
\[ \bar\xi(y) = \omega(y)\partial_u + \chi^\alpha(y) \partial_{y^\alpha}. \]
\end{proof}

Combining this realisation of the $\SPI$ group with the expression \eqref{Eq::AsymptoticScalar2} for massive fields we obtain the following.
\begin{proposition}\label{Proposition: Action of the SPI group on massive fields}
   The action of the $\SPI$ group on the scattering data \eqref{Eq::AsymptoticScalar2} of a massive scalar field is given by
   \begin{align}\label{Action of the SPI group on massive fields}
      \varphi\big(u, \y\big) &\;\mapsto   \varphi\big(u +\omega(\y), \y'(\y)\big) && \Longleftrightarrow &  a(\y) &\;\mapsto a(\y') e^{-im \omega(\y)}.
   \end{align}
\end{proposition}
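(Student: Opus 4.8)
The plan is to transport the $\SPI$ action that Proposition \ref{Proposition: Spi group as automorphism} already realises on $\Ti^{+}$ directly onto the scattering data, and then read off the induced action on the Fourier coefficient $a$ by inserting the explicit expression \eqref{Eq::AsymptoticScalar2}. Proposition \ref{Proposition: Spi group as automorphism} shows that an $\SPI$ transformation acts infinitesimally on $\Ti^{+}$ by the Carrollian automorphism $\bar\xi = \omega(\y)\partial_u + \chi^{\alpha}(\y)\partial_{y^{\alpha}}$ of \eqref{Ti: spi algebra}, with $\chi$ a Killing field of $h_{\alpha\beta}$. The corresponding finite group element, built from this generator, acts in the canonical coordinates $(u,\y)$ as $g:(u,\y)\mapsto(u+\omega(\y),\y'(\y))$, where $\y\mapsto\y'(\y)$ is the isometry of $H^n$ produced by the $\mathrm{SO}(1,n)$ factor and $\omega$ is the supertranslation function supplying the shift along the fibre coordinate $u$. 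Since the $\SPI$ group acts on fields by pullback along $g$, the transformed scattering data is $\varphi\mapsto g^{*}\varphi$, i.e. $\varphi(u,\y)\mapsto\varphi(u+\omega(\y),\y'(\y))$, which is precisely the left-hand arrow of \eqref{Action of the SPI group on massive fields}.

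To obtain the right-hand arrow I would simply substitute \eqref{Eq::AsymptoticScalar2}: writing $C=\frac{m^{n/2-1}}{2(2\pi)^{n/2}}$ so that $\varphi(u,\y)=C\,a(\y)\,e^{-imu}$, one finds $\varphi(u+\omega(\y),\y'(\y))=C\,a(\y')\,e^{-im\omega(\y)}\,e^{-imu}$, and comparing with the defining form $C\,a(\y)\,e^{-imu}$ of $\varphi$ reads off the transformation $a(\y)\mapsto a(\y')\,e^{-im\omega(\y)}$. The phase $e^{-im\omega}$ is exactly the supertranslation weight, so the two arrows in \eqref{Action of the SPI group on massive fields} are equivalent restatements of one substitution, and the map $a\mapsto a(\y')e^{-im\omega}$ is the geometric realisation of the Longhi--Materassi representation announced in the introduction.

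The proof is therefore essentially a direct computation, and the only genuine care needed is bookkeeping rather than analysis. First, one must fix the pullback convention once and for all so that the $u$-shift carries the stated sign, and confirm that the supertranslation part of $\bar\xi$ exponentiates to the additive shift $u\mapsto u+\omega(\y)$ on the $2$-jet coordinate $u$, consistently with the $\mathbb{R}$-principal structure of Proposition \ref{Proposition: principal bundle}. Second, and most importantly, I would check that this action is compatible with the bundle-theoretic description of Proposition \ref{Proposition: scattering data are sections}: the scattering data is not an arbitrary function but an equivariant one satisfying $(\mathcal{L}_n+im)\varphi=0$. That $g^{*}$ preserves this constraint follows because $\bar\xi$ is by construction an automorphism of the Carrollian structure and in particular commutes with $n^{a}=\partial_u$ (the coefficients $\omega,\chi$ being $u$-independent, $[\bar\xi,n]=0$). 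Hence $g^{*}$ descends to an action on sections of the associated bundle $\Ti^{+}\times_{\rho_m}\mathbb{C}$, and the transformation of the underlying section $\underline a$ is exactly the one above. The main obstacle, such as it is, is thus ensuring the finite-versus-infinitesimal and sign conventions align with the semidirect-product structure $\mathrm{SO}(1,n)\ltimes C^{\infty}(\H)$ of the $\SPI$ group, not the calculation itself.
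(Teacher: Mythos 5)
Your proposal is correct and follows exactly the route the paper takes: the paper offers no separate proof, simply stating that the result follows by combining the realisation of the $\SPI$ group as Carrollian automorphisms of $\Ti$ (Proposition \ref{Proposition: Spi group as automorphism}) with the explicit form \eqref{Eq::AsymptoticScalar2} of the scattering data, which is precisely your substitution argument. Your additional checks (equivariance under the $\mathbb{R}$-action and sign conventions) are sound bookkeeping but not needed beyond what the paper implicitly assumes.
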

It follows from Proposition \ref{Proposition: Action of the SPI group on massive fields} that scattering data of massive fields form a unitary irreducible representation (UIR) of the $\SPI$ group. The norm preserved by the representation coincides with the usual norm on the space of massive fields,
\correc{\begin{align*}
    ||\varphi||^2 &= \frac{m^{n-1}}{2 (2\pi)^{n}} \int_{H} \frac{d^{n}\y}{\sqrt{1 + \y ^2}}\, a(\y) a^*(\y).
\end{align*}}
 Therefore, this shows that usual massive UIR of the Poincaré group can naturally be extended to UIR of the SPI group. This is similar to the situation of massless fields at null infinity: there, as was highlighted in \cite{Bekaert:2022ipg}, the massless UIR of the Poincaré group lifts to an UIR of the BMS group (Sachs representation \cite{Sachs:1962zza}).
 
 As we shall discuss in Section \ref{ssection: BMS group at Ti/Spi}, restricting the action of Proposition \ref{Proposition: Action of the SPI group on massive fields} to the BMS group will yield (see Proposition \ref{Proposition: Action of the BMS group on massive fields}) a geometrical realisation of Longhi–Materassi BMS representation \cite{Longhi:1997zt}. \correc{BMS UIRs have been classified by McCarthy in a series of papers~\cite{Mccarthy:1972ry,McCarthy_72-I,McCarthy_73-II,McCarthy_73-III,McCarthy_76-IV,McCarthy_75,McCarthy_78,McCarthy_78errata} (see also \cite{Girardello:1974sq}); from this perspective, the representation that will be obtained in Proposition \ref{Proposition: Action of the BMS group on massive fields} is very specific: see \cite{Bekaert:2024jxs,Bekaert:2025kjb} for a physical interpretation of generic BMS UIRs as well as a detailed discussion on how the Longhi–Materassi BMS representation sits in McCarthy's classification.}

\subsubsection{SPI versus BMS supertranslations}

The group of $\SPI$ supertranslations $\mathcal{T}_{\SPI} \simeq \mathcal{C}^{\infty}(\mathcal{I})$ is the abelian group of functions $\omega(\y)$ on $\mathcal{I}$. BMS supertranslations $\mathcal{T}_{\BMS} \simeq \mathcal{E}[1](S^{n-1})$ form the abelian group of weight one conformal densities $\xi(z^i)$ on the celestial sphere $S^{n-1}$.

Let $z^i \in \mathbb{R}^{n-1}$ be stereographic coordinates on the celestial sphere $S^{n-1}$ and let $z^i \mapsto \tilde{Q}^{\mu}\left(z^i\right)$ be the embedding of this sphere into the null cone of $\mathbb{R}^{n+1}$ given by
\begin{align}
    \tilde{Q}^{\mu}\left(z^i\right) = \begin{pmatrix}
        1+ z^2 \\ 2 z^i \\ 1-z^2
    \end{pmatrix}.
\end{align}

It is well known that this gives rise to a canonical embedding $\mathbb{R}^{n,1}  \hookrightarrow \mathcal{T}_{BMS}$ of translations into BMS supertranslations given by
\begin{equation*}
   \xi(z^i) = T^{\mu} \tilde{Q}_{\mu}\left(z^i\right).
\end{equation*}
Despite this, there is however no preferred embedding of the Poincare group inside the BMS group.

Similarly, at Ti, we have a canonical embedding $\mathcal{T}_{BMS}  \hookrightarrow \mathcal{T}_{\SPI}$ of BMS supertranslations into $\SPI$ supertranslations given by \cite{Campiglia:2015lxa,Campiglia:2015kxa}
\begin{equation}\label{Asympotic symmetries: BMS in SPI}
    \omega_\xi(\y) = \bigintsss_{S^{n-1}} (dz)^{n-1} \,\frac{ \xi\left(z^i\right)}{\left( \tilde{P}^{\mu}(\y) \tilde{Q}_{\mu}\left(z^i\right)\right)^{n}}
\end{equation}
where $\tilde{P}^{\mu}(\y)$ is given by \eqref{Massive scalar field: P tilde definition}. 
\begin{proof}
This is just the application of a bulk to boundary propagator for the equation $(D^2 + \nu_0 n) \omega(\y)=0$. For completeness, let us give a brief proof.
    By construction $\tilde{Q}_{\mu}$ is a map from $\mathbb{R}^{3,1}$ to $\mathcal{E}[1](S^{n-1})$ while $\tilde{P}_{\mu}$ is a map from $\mathbb{R}^{3,1}$ to $C^{\infty}(H^n)$. Since we are working in stereographic coordinates $(dz)^{n-1}$ coincides with the canonical volume form of conformal weight $n-1$ on the sphere. Therefore the integrand is a genuine, i.e. non weighted, volume form on the celestial sphere. Now since $\tilde{P}^{\mu}$ is timelike and $\tilde{Q}^{\mu}$ is null then $\tilde{P}^{\mu}(\y) \tilde{Q}_{\mu}$ is nowhere zero and the volume form is smooth. We conclude that the integral is finite and defines a function on $\mathcal{I}^{\pm} = H^n$. That \eqref{Asympotic symmetries: BMS in SPI} is always a solution $(D^2 + \nu_0 n) \omega(\y)=0$, can be checked using identities \eqref{Flat model: identity on P}.
\end{proof}
Here again, without introducing further geometrical data, there is no preferred embedding of the BMS group inside the $\SPI$ group. However, it turns out that Ti is always equipped with a Carrollian connection that will provide such a choice of BMS group.

Note that for $\Spi$ the situation is more complicated since the above propagator will have some divergences which then need to be regularised and leads to an extra branch of solution.

\subsection{Comparison with the previous literature}\label{ssection: Comparison with the previous literature}

\paragraph{Homogeneous models}\mbox{}

In \cite{Figueroa-OFarrill:2021sxz}, Figueroa-O'Farril--Have--Prohazka--Salzer introduced the names $\Ti$ and $\Spi$ for the (pseudo) Carrollian homogeneous spaces associated to the Poincaré group $ISO(1,n) = \textrm{SO}(n,1) \ltimes \mathbb{R}^{n+1}$:
\begin{align}\label{Ti and Spi: flat models}
\begin{alignedat}{2}
        \Ti^{n} = \frac{\textrm{SO}(n,1) \ltimes \mathbb{R}^{n+1}}{ \textrm{SO}(n) \ltimes \mathbb{R}^{n}} &\to  H^n = \frac{\textrm{SO}(n,1) }{ \textrm{SO}(n)},\\[0.5em]
        \Spi^{n} = \frac{\textrm{SO}(n,1) \ltimes \mathbb{R}^{n+1}}{ \textrm{SO}(n-1,1) \ltimes \mathbb{R}^{n}} &\to  dS_n = \frac{\textrm{SO}(n,1) }{ \textrm{SO}(n-1,1)}.
\end{alignedat}
\end{align}
We shall prove, see Section \ref{subsection: Poincaré action and the flat model of Ti}, that, in the flat case, Definition \ref{Definition: Ti} of $\Ti$ implies that this space is naturally equipped with a transitive action of the Poincaré group, induced by the action of the bulk, and is therefore canonically identified with the homogeneous space \eqref{Ti and Spi: flat models}. It should also be clear that, when adapted to spacelike infinity, the same results are true for $\Spi$. Furthermore, we shall also see (below) that Definition \ref{Definition: Spi} can be understood to coincide with that of Ashtekar--Hansen \cite{ashtekar_unified_1978} in the curved case and, therefore, that it bridges between the two approaches.

\paragraph{Ashtekar--Hansen construction}\mbox{}

In the seminal work \cite{ashtekar_unified_1978}, spatial infinity $\mathcal{I}^0=H^3$ was constructed as a blow up of $\iota^0$, but the authors also introduced a natural line bundle over $\mathcal{I}^0$. It was already remarked in \cite{gibbons_ashtekar-hansen_2019} that the resulting $4$d manifold is (weakly pseudo) Carrollian. As we shall now explain, this construction of Ashtekar--Hansen is essentially equivalent to Definition \ref{Definition: Spi}. The main difference is that Ashtekar--Hansen's definition is really tied up to assumptions about how future and past null infinity connect to each other while Definition \ref{Definition: Spi} relies on the existence on Ashtekar--Romano's asymptotics. Furthermore, as will be shown in Section \ref{Section: Carrollian geometry of Spi}, it follows quite straightforwardly from our definition of $\Ti/\Spi$ that they are \emph{strongly} Carrollian: the ``first order structure" of \cite{ashtekar_unified_1978,Ashtekar:1991vb}, in fact, corresponds to a Carrollian connection on $\Ti/\Spi$. This allows for a completely geometrical realisation of the BMS and Poincaré group at $\Ti / \Spi$

We now explain in which sense Definition \ref{Definition: Spi} can be understood to coincide with Ashtekar--Hansen's. In \cite{ashtekar_unified_1978}, the blow up $\mathcal{I}^0=H^{3}$ of $\iota^0$ is constructed as the space of endpoints of spatial geodesics $\gamma:\mathbb{R} \to \mathbb{M}$, $|\dot{\gamma}|>0$. This is, in essence, the idea underlying projective compactification, in the sense of Čap--Gover \cite{Cap:2014aa,Cap:2014ab} (see the discussion in Section \ref{ssection:Projective compactness}). As we recalled, projectively compact manifolds are asymptotically flat in the sense of Definition \ref{Definition:AsymptoticFlatness}; in the absence of mass aspect the converse is also true and end points of geodesics are then genuinely identified with $\mathcal{I}^{\epsilon}$. Asymptotically flat manifolds with non zero mass aspect are however not, strictly speaking, projectively compact and the identification in this case is more heuristic. 

Ashtekar--Hansen then define the line bundle attached to $\mathcal{I}^0$ by making the following remark: the norm of the asymptotic speed $|\dot\gamma|$ of the geodesics ending at $\iota^0$ can be fixed uniquely to be some fixed constant value, but this is not the case of the asymptotic tangential acceleration $\ddot{\gamma}\cdot \dot{\gamma}$. Points in the fibres over $\mathcal{I}^0$ were then defined to be a choice of such asymptotic tangential acceleration. This can be rephrased by saying that geodesics have an asymptotic preferred parametrisation $\rho$, with $\rho=0$ being the boundary, defined up to $\rho \mapsto \rho + u \rho^2 + O(\rho^3)$ and that point in the fibres are identified with choice of $u$. This should make it clear that, when the spacetime is projectively compact at least, Definition \ref{Definition: Spi} coincides with the extended boundary of Ashtekar--Hansen. This is in particular is the case at time-like infinity in the absence of mass or at spatial infinity for Minkowski space.

\section{An integral formula on Ti for massive fields on Minkowski space}\label{Section: An integral formula on Ti}

In this section, we will present an integral formula at $\Ti$ which allows to produce solutions of the massive Klein-Gordon equation,
\begin{equation*}
    \left( \Box - m^2 \right) \Phi(X) = 0.
\end{equation*}

It will proved a complete parallel of the Kirchhoff-d'Adhémar formula at null infinity \cite{Kirchhoff:1883aa,Penrose_1980,penrose_spinors_1984} for massless particles.

We first show that any point $X$ in Minkowski space uniquely defines a cut $\xi_X : H^n \to \Ti$, i.e. a section of $\Ti$, $\y \to (\y , u = \xi_X(\y))$. Secondly, we prove that integration of suitable functions $\varphi(u,\y)$ on these cuts produces solutions $\Phi(X)$ of the massive Klein-Gordon equation.

\subsection{Cuts of Ti}

Let us introduce again Beig-Schmidt coordinates  $(\rho, \bm{y})$ on the inside of the future light-cone in Minkoswki $\mathbb{M}^{+} \subset \mathbb{M}^{1,n}$, by 
\begin{equation}\label{cuts: beig-schmidt chart}
    \begin{array}{ccc}
        \mathbb{R} \times \mathbb{R}^n  & \to & \mathbb{M}^{+}  \\[0.2em]
         (\rho, \bm{y}) & \mapsto  & \rho^{-1} \begin{pmatrix}
        \sqrt{1 + \bm{y}^2}\\ \bm{y}
    \end{pmatrix}
    \end{array}.
\end{equation}
Expressed in these coordinates, the flat metric is then given by \eqref{Massive field: flat metric expression}.

We now pick a point $X^{\mu} \in \mathbb{M}^{1,n}$ of Minkowski space and a time-like future directed unit vector
\begin{equation}\label{cuts: beig-schmidt chart on momenta}
  \tilde{P}^{\mu}(\bm{k}) = \begin{pmatrix}
        \sqrt{1 + \bm{k}^2}\\ \bm{k}
    \end{pmatrix},
\end{equation}
 and consider the map, describing a geodesic shot from this point in the direction $\tilde{P}^{\mu}(\bm{k})$, given by:
\begin{equation}\label{cuts: flow chart}
    \begin{array}{ccc}
        \mathbb{R} \times \mathbb{R}^n  & \to & \mathbb{M}^{1,n}  \\[0.2em]
         (s, \bm{k}) & \mapsto  & X^{\mu} + s\; \tilde{P}^{\mu}(\bm{k})
    \end{array}.
\end{equation}
Expressing the flow \eqref{cuts: flow chart} in our coordinates \eqref{cuts: beig-schmidt chart}, we arrive at:
\begin{align}
\badat{2}
    \rho(s, \bm{k})  &= s^{-1} + s^{-2} X \scal \tilde{P}(\bm{k}) + O(s^{-3}) \\
    \bm{y}(s, \bm{k}) &= \bm{k} + O(s^{-1}).
    \eadat
\end{align}
As there is no mass aspect in Minkowski space-time, projective compactness implies that geodesics really extend to infinity (see Section \ref{ssection:Projective compactness}), and in a suitable neighbourhood of future time-like infinity \eqref{cuts: flow chart} actually defines a local chart. This chart can be compared to \eqref{cuts: beig-schmidt chart} to obtain the corresponding transition functions. Asymptotically, one has
\begin{align}
\badat{2}
    s^{-1}  &= \rho - \rho^2 X \scal \tilde{P}(\bm{y}) + O(\rho^{-3}) \\
    \bm{k}  &= \bm{y} + O(\rho).
    \eadat
\end{align}

We have therefore found that any point $X^{\mu}$ of Minkowski space generates a boundary defining function on time-like infinity, obtained by taking the inverse arc length $\rho_X := s^{-1}$ of the geodesic flow starting at $X^{\mu}$:
\begin{equation}
    \rho_X = \rho -\rho^2 X \scal \tilde{P}(\bm{y}) + O(\rho^{-3}).
\end{equation}
It associates a $2-$jet to each point $\y$ of the asymptotic hyperboloid and, consequently and by definition of $\Ti^{+}$, defines a section (or ``cut'') $\xi_X \in \Gamma\left[\Ti^+\right]$:
\begin{equation}\label{Cut of Ti: image of the cut}
    \xi_X \;\left|\quad  \begin{array}{ccc}
       H^n  & \to & \Ti^{+} \\[0.4em]
       \bm{y}  &  \;\mapsto\; & \Big( u=  -X \scal \tilde{P}(\bm{y}) \;,\; \bm{y}\Big)
    \end{array}\right. .
\end{equation}
A similar calculation for $\Ti^-$ would give exactly the same result (with no sign difference in the expression for the cuts). Note that, since we are working in mostly plus conventions for the metric and $\tilde{P}$ is future-directed, $\delta X \scal \tilde{P}(\bm{y})<0$ for any future-directed translation $\delta X$ and therefore the orientation of the $\mathbb{R}$ action at $\Ti$ coincides with the orientation induced by time translations.

\subsection{Reconstruction formula on Ti}

We now present a generalisation of the Kirchhoff-d'Adhémar formula for massive scalar fields. 

Recall from Proposition \ref{Proposition: scattering data are sections} that the scattering data $\varphi(u,\y)$ for massive fields of mass $m$ are equivariant functions on $\Ti$ for the representation $\rho_m$ of $(\R,+)$ given by $\rho_m(\alpha)\cdot \varphi = e^{i m \alpha} \varphi$; equivalently, these are functions on $\Ti$ that satisfy:
\begin{align}\label{Reconstruction formula: homogeneity condition}
        \big( \mathcal{L}_n + im \big)\varphi(u,\y) &=0 &\Leftrightarrow&& \varphi(u, \y) &= \, A(\y) e^{-im u}.
\end{align}

   The formula below will reconstruct the corresponding massive field. In parallel with the classical Kirchhoff formula, it involves, for any point $X$ of Minkowski, an integral $\int_{\xi_X}d\mu_{H}$ over the image of the corresponding cut \eqref{Cut of Ti: image of the cut}. Here $d\mu_{H}$ is the volume form on $H^n$, naturally pulled-back on the cut by the projection $\pi : \Ti \to H^n$.

\begin{proposition}\label{Proposition: Reconstruction formula}\mbox{}

    Given a section of $\Ti^+ \times_{\rho_m} \mathbb{C}$ or, equivalently, a function $\varphi: \Ti^+ \rightarrow \mathbb{C}$ satisfying \eqref{Reconstruction formula: homogeneity condition} then
    \begin{align}
        \Phi (X) = \left( \frac{m}{2 \pi} \right)^\frac{n}{2} \int_{\xi_X} d \mu_{H} \big( \varphi + \varphi^{\dagger}\big)
    \end{align}
satisfies the  Klein-Gordon equation, $(\Box - m^2)\Phi (X) =0$. Moreover, $\varphi(u,\y)$ can be recovered uniquely from $\Phi (X)$ through equation \eqref{Eq::AsymptoticScalar1}. From Proposition \ref{Proposition::MassiveScalar} all\footnote{\correc{A precise discussion of the functional spaces involved to make this statement precise is beyond the scope of this paper and is left for future work. As an initial step, one may restrict to solutions whose initial value is an element of the Schwartz space $\mathcal{S}(\R^3)$. }} solutions of the Klein-Gordon equations are obtained in that way.
\end{proposition}

\begin{proof}
The above proposition is coordinate free and therefore we can prove it in any coordinate system that we like. We will pick Beig-Schmidt coordinates in a neighbourhood of time-like infinity and prove that the proposed prescription recovers a field of the form \eqref{Massive scalar field: Fourier coefficients}.

Since $\varphi(u, \y)$ satisfies \eqref{Reconstruction formula: homogeneity condition} one can write $\varphi(u, \y) = \, A(\y) e^{-im u}$.
 The unit hyperboloid metric, $h_{ab}$
 in the Beig-Schmidt coordinates is given by \eqref{Massive field: flat metric expression} and the corresponding volume form is $\frac{d^n y}{\sqrt{1+\y^2}}$. Thus,
    \begin{align}
    \begin{alignedat}{3}
        & \left( \frac{m}{2 \pi} \right)^\frac{n}{2}\int_{\xi_X} d\mu_{H} \big( \varphi + \varphi^{\dagger} \big)\\
        &= \left( \frac{m}{2 \pi} \right)^\frac{n}{2} \int_{H^n} \left. \frac{d^n \y}{\sqrt{1 + \y^2}} \Big( A(\y) e^{- i m u}  + A(\y)^{\dagger} e^{i m u} \Big)\right\vert_{u = - \tilde{P}(\y) \scal X} \\ &= \left( \frac{m}{2 \pi} \right)^\frac{n}{2} \int \frac{d^n \bm{k}}{\sqrt{1 + \bm{k}^2}} \Big( A(\y) e^{i m \tilde{P}(\bm{k}) \scal X} +  A(\y)^{\dagger} e^{-i m \tilde{P}(\bm{k}) \scal X} \Big)
    \end{alignedat}
    \end{align}
which satisfies the massive Klein-Gordon equation. Comparing with \eqref{Massive scalar field: Fourier coefficients}, we find $A(\y) = \frac{m^{\frac{n}{2}-1}}{2 (2\pi)^{\frac{n}{2}}} a(\y)$ and thus $\varphi(u,\y) = \frac{m^{\frac{n}{2}-1}}{2 (2\pi)^{\frac{n}{2}}} a(\y) e^{-im u}$ which matches with \eqref{Eq::AsymptoticScalar1}.
\end{proof}

\subsection{Poincaré action and the flat model of Ti}\label{subsection: Poincaré action and the flat model of Ti}

It also follows from the previous discussion that an element of the Poincaré group $(m^{\mu}{}_{\nu}, T^{\mu}) \in \textrm{SO}(n,1) \ltimes \mathbb{R}^{n+1}$ acts on $\Ti$ according to
\begin{align*}
    \Big(u \;,\; \y^{\alpha}\Big) \quad \mapsto\quad  \Big(u - T\scal \tilde{P}(\y)\;,\; (\y')^{\alpha}\Big) \qquad \text{where} \quad \tilde{P}(\y')^{\mu} = m^{\mu}{}_{\nu}\tilde{P}(\y)^{\nu};
\end{align*}
see the detailed proof below. The action is transitive and the stabiliser of a given point $\Big(u \;,\; \y^{\alpha}\Big)$ is defined by the relations:
\begin{align*}
   T\scal \tilde{P}(\y) =0 \qquad  \tilde{P}(\y)^{\mu} = m^{\mu}{}_{\nu}\tilde{P}(\y)^{\nu},
\end{align*}
and is thus isomorphic to $\textrm{SO}(n) \ltimes \mathbb{R}^{n}$.  Therefore, $\Ti \to H^n$ coincides, in this situation, with the homogenous space
\begin{equation}
    \Ti^{(n)} = \frac{\textrm{SO}(n,1) \ltimes \mathbb{R}^{n+1}}{ \textrm{SO}(n) \ltimes \mathbb{R}^{n}} \to  H^n = \frac{\textrm{SO}(n,1) }{ \textrm{SO}(n)}.
\end{equation}
Hence we recover that, in the flat case, our definition of \eqref{Definition: Ti} matches the homogeneous model \eqref{Ti and Spi: flat models} discussed in \cite{Figueroa-OFarrill:2021sxz}.

\begin{proof}
Let us consider a Poincaré transformation $\phi : X^\mu \mapsto (X^\mu)' =T^\mu +m^\mu{}_\nu X^\nu$ on Minkowski spacetime, we shall first show that this extends to a transformation at timelike infinity (for definiteness, the same proof would apply at spatial infinity). For this, we shall consider the expression in the coordinates defined by Eq.~\eqref{cuts: beig-schmidt chart}, $X^\mu = \rho^{-1}\tilde{P}^{\mu}(\y)$, let us write $(\rho', \bm{y}')$ the coordinates of the image point, $(X^\mu)' = (\rho')^{-1}\tilde{P}^{\mu}(\y')$ , then for $\rho$ sufficiently small:
\[\begin{cases} \rho'=  \rho \left(1-2\rho T\cdot m\tilde{P}(\bm{y})- \rho^2T\cdot T\right)^{-\frac{1}{2}} 
\\[0.75em] \tilde{P}^{\mu}(\bm{y}')=\rho'\left(T^{\mu}+\rho^{-1} m^{\mu}{}_{\nu} \tilde{P}^{\nu}(\bm{y}) \right)
\end{cases}. \]
In the limit $\rho \to 0$, one sees that $\tilde{P}^{\mu}(\bm{y}') = m^\mu{}_\nu\tilde{P}^\nu(\bm{y}) + O(\rho)$ and $\rho' =  \rho + \rho^2  \,T \cdot \tilde{P}(\bm{y}') + O(\rho^3)$. Therefore, the Poincaré transformation restricts to a Lorentz transformation on the hyperboloid at infinity. It thus acts as an asymptotic symmetry and we can therefore consider the transformation that it induces on $\Ti$ by the construction in Proposition~\ref{Proposition: Spi group as automorphism}. Using Eq.~\eqref{eq:InducedSymmetry}, one finds:
\[(u,\bm{y}) \mapsto (u - T \cdot \tilde{P}(\bm{y}'), \bm{y}').\]
Observe that if $m^\mu{}_\nu = \delta^\mu_\nu$ and we consider a cut $\xi_X$ of $\Ti$ then this is consistent with:
\[\xi_X \mapsto  \xi_{X+T}.\]
\end{proof}

\section{Carrollian geometry of Spi/Ti and the BMS group}\label{Section: Carrollian geometry of Spi}

At null infinity, it is well known \cite{ashtekar_radiative_1981,ashtekar_geometry_2015} that radiative data can be encoded in terms of (equivalence class of) Carrollian connections, with flat connections being identified with gravity vacua; see \cite{Herfray:2021qmp} for more on the relation between null infinity and Carroll geometry. The situation at spatial/time infinity might seem quite different, we will however show that they are in close analogy: $\Ti$ and $\Spi$ are always strongly Carrollian, i.e. equipped with a compatible torsion free connection. What is more, the appearance of the BMS group, as a subgroup of the SPI group, stems from this extra geometric structure.

As was first shown by Troessaert in \cite{troessaert_bms4_2018} for the linearised theory, the BMS group can be realised as a group of asymptotic symmetries at spatial infinity. Corresponding, non linear, asymptotic conditions (in Hamiltonian formalism) were described in \cite{Henneaux:2018cst,Henneaux:2019yax}. These involve parity conditions on the asymptotic data that generalise the Regge-Teitelboim conditions \cite{Regge:1974zd}. Many of these developments have been inspired and motivated by Strominger's matching condition \cite{strominger_bms_2014} at spatial infinity. These conditions define a global BMS group, acting simultaneously at future and past null infinity; it is in turn the global nature of this group which allows BMS symmetries to constrain scattering observables \cite{He:2014laa}. Building upon these works and motivations, there has been a lot of recent progress on the global nature of conservation laws at spatial and timelike infinity, see \cite{troessaert_bms4_2018,Henneaux:2018gfi,Henneaux:2018mgn,Prabhu:2019fsp,AliMohamed:2021nuc,Prabhu:2021cgk,Chakraborty:2021sbc,Capone:2022gme,Fuentealba:2022xsz,Mohamed:2023jwv,Fuentealba:2023rvf,Mohamed:2023ttb,Ashtekar:2023zul,Fuentealba:2023hzq,Compere:2023qoa,Fuentealba:2023syb,Ashtekar:2024aa,Gasperin:2024bfc,Fuentealba:2024lll,Fiorucci:2024ndw}. In particular, Strominger's matching conditions have been shown to be closely related to parity conditions at spatial infinity, see \cite{Donnay:2023mrd} for a review of these results in relations to S-matrix observables.

As is explained in Appendix \ref{section: appendix Ti/Spi}, it turns out that the parity conditions associated to Strominger's matching are in fact very naturally realised in the model for $\Spi$. In the curved case, as we shall see, these are also neatly realised in terms of a symmetry requirement on the Carrollian connection at $\Spi$. Therefore, not only BMS symmetries are encoded by the Carrollian connection at $\Spi$ but matching conditions stems from preserving, in the curved case, a discrete symmetry of the model.

\subsection{Carrollian connections and their flat models}\label{ssection: Carrollian connections and their flat models}

As we discussed in Section \ref{sssection: Carrollian geometry}, $\Spi/\Ti^{\pm} \to \mathcal{I}^0/\mathcal{I}^{\pm}$ are equipped with a weakly Carrollian geometry $(h_{ab}, n^a)$. We shall see that, in fact, we have more: there is always a canonically induced Carrollian connection on $\Spi$/$\Ti^{\pm}$, i.e. a torsion free connection $\nabla$ satisfying
\begin{align}\label{Carrollian geometry of Spi: Carrollian connection}
 \nabla_c h_{ab}& =0, & \nabla_c n^a&=0.
\end{align}
The data $(h_{ab}, n^a,\nabla)$ constitutes a strongly Carrollian geometry in the sense of \cite{duval_carroll_2014}. As will briefly be reviewed at the end of this subsection, when the connection is suitably flat, the group of residual automorphisms coincides with the Poincaré group $ISO(n,1) = \textrm{SO}(n,1) \ltimes \mathbb{R}^{n+1}$.

We recall (see e.g. \cite{bekaert_connections_2018,Herfray:2021qmp}) that, due to the degeneracy of the metric, connections satisfying \eqref{Carrollian geometry of Spi: Carrollian connection} are not unique. However, if $\nabla$ and $\tilde{\nabla}$ are two Carrollian connections they can only differ by a tensor of the form
\correc{\begin{equation}
   ( \nabla_c -\tilde{\nabla}_c )^a{}_b = C_{bc}\;n^a \label{Eq::Cfirst}
\end{equation}}
where $C_{bc}$ is symmetric and satisfies, $n^b C_{bc}=0$. Let $(u, y^{\alpha})$ be adapted coordinates on $\Spi/\Ti^{\pm}$ i.e. such that
\begin{align}
    n^a &= \partial_u, & h_{ab} &= h_{\alpha\beta}(y) dy^{\alpha} dy^{\beta},
\end{align}
then the Christoffel symbols of any Carrollian connection $\nabla$ on $\Spi$/$\Ti$ must be of the form
\begin{align}\label{Carrollian geometry of Spi: Carrollian connection coefficients}
   (\Gamma_{c})^a{}_b  = \begin{pmatrix}
        \;0\; & -\frac{1}{2}C_{\gamma\beta}\, dy^{\gamma} \\[0.2em]
        \;0\; & \; \Gamma^{\alpha}_{\gamma\beta}\, dy^{\gamma} 
    \end{pmatrix}
\end{align}
where $\Gamma_{\gamma\beta}^{\alpha}$ are the Christoffel symbols of $h_{ab}$ and $C_{\alpha\beta}=C_{(\alpha\beta)}$ is the only free parameter, i.e. unconstrained by \eqref{Carrollian geometry of Spi: Carrollian connection}.

In the following subsection, we will prove that the geometry of $\Spi$/$\Ti$ fixes this ambiguity uniquely.

\paragraph{The flat models}\mbox{} 

We here review the flat models of Carrollian geometry (see e.g. \cite{Herfray:2021qmp}). Let $h_{\alpha\beta}$ be a constant curvature metric i.e. with Riemann tensor\footnote{We can take e.g. $2 \Lambda = \nu_0$ to match with \eqref{Ashtekar Romano: Einstein equations on h}.}
\begin{equation}\label{Base curvature: flat model}
    R^{(h)}{}^{\alpha}{}_{\beta \gamma \delta} = 4\Lambda \, \delta^{\alpha}{}_{[\gamma}h_{\delta]\beta}
\end{equation}
then taking, in the expression \eqref{Carrollian geometry of Spi: Carrollian connection coefficients} for a Carrollian connection,
    \begin{equation}\label{Carrollian connection: flat model}
        -\frac{1}{2}C_{\alpha\beta} = u \,2\Lambda\,h_{\alpha \beta}
    \end{equation}
yields the flat models of Carrollian geometry. These connections can be invariantly characterised as those with constant curvature in the sense that 
\begin{align}\label{Carrollian curvature: flat model}
R^{a}{}_{bcd}=4\Lambda \, \delta^{a}{}_{[c}h_{d]b}.
\end{align}
For $\Lambda$ positive/null/negative these respectively give the Carroll--dS / Carroll / Carroll--AdS models \cite{bacry_possible_1968,figueroa-ofarrill_spatially_2019}. At this stage we have not said anything about the signature of $h_{\alpha\beta}$: in the literature the names Carroll--dS / Carroll / Carroll--AdS tend to be restricted to models with Euclidean signature. In what follows, we will consider the model corresponding to $\Ti$, which is genuinely Carroll--AdS with base space $H^n$, and $\Spi$, which is a pseudo Carroll--AdS space with base space $dS_n$.

Symmetries of the strongly Carrollian geometry are vector fields $X$ satisfying\footnote{We remind the reader that the Lie derivative of a connection $\left(\mathcal{L}_X \nabla_c\right)^a{}_b$ is a tensor and, when the connection is torsion-free, is symmetric in $b$ and $c$.}
\begin{align*}
    \mathcal{L}_X h_{ab}&=0, & \mathcal{L}_X n^a&=0,  & \left(\mathcal{L}_X \nabla_c\right)^a{}_b &= 0.
\end{align*}
Making use of the identity for the Lie derivative of a torsion-free connection
\begin{equation}\label{Carrollian connection: Lie derivative of connection}
    \left(\mathcal{L}_X \nabla_c\right)^a{}_b = \nabla_c\nabla_b X^a -  R^a_{\phantom{a}bcd} X^d,
\end{equation}
 the third condition is equivalent to $\nabla_c\nabla_b X^a =2 \Lambda\left( \delta^a{}_c X^d  h_{db} - X^a h_{cb}\right)$. From these conditions we find that a symmetry is of the form
\begin{equation*}
    X =  \omega(\y) \partial_u + \chi^{\alpha}(\y) \partial_{\alpha} ,
\end{equation*}
where $\chi^{\alpha}(\y)$ and $\omega(\y)$ must satisfy:
\begin{align}
   D_{(\alpha} \chi_{\beta)}&=0, & \left(D_{\alpha}D_{\beta} + 2\Lambda h_{\alpha\beta} \right) \omega &=0.
\end{align}
These can be solved explicitly as follows, when $\Lambda \neq 0$. Let $\y^{\alpha} \to P^{\mu}(\y)$ with $P(\y)^2 = (2\Lambda)^{-1}$ be an embedding of $H^n$/$dS_{n}$ into $\mathbb{R}^{n,1}$ (e.g. as in \eqref{Massive scalar field: P tilde definition}). The Carroll symmetries are given by
\begin{align}
    \chi_{\alpha}(\y) &= \partial_{\alpha} P^{\mu}(\y) M_{[\mu\nu]} P^{\nu}(\y), &\omega(\y) = P(\y)^{\mu} T_{\mu}
\end{align}
where $(M_{[\mu\nu]},T^{\mu}) \in \mathfrak{so}(n,1) \times \mathbb{R}^{n,1}$. One can check that the resulting algebra of vector fields coincides with the Poincaré algebra
\begin{equation}
    \mathfrak{iso}(n,1) = \mathfrak{so}(n,1) \ltimes \mathbb{R}^{n,1}.
\end{equation}
In order to check these results, it is useful to have in mind the identities\footnote{This follows from the covariant derivative $D_\alpha$ of a one-form on the hyperboloïd, induced from the $n+1$ dimensional metric. Explicitly, for vectors, $D_{\alpha}(V^{\beta} \partial_{\beta} P^{\mu}) := \partial_{\alpha}\left(V^{\beta} \partial_{\beta} P^{\mu} \right)_{\perp} = \partial_{\alpha}\left(V^{\beta} \partial_{\beta}  P^{\mu} \right) + 2\Lambda V^{\alpha}h_{\alpha \beta } P^{\mu}$, where $\perp$ indicates projection on the tangent space of the hyperboloïd.} satisfied by the \emph{coordinate functions} of the embedding given by $P^{\mu}$:
\begin{align}\label{Flat model: identity on P}
    D_{\alpha}D_{\beta} P(\y)^{\mu} &= - 2 \Lambda h_{\alpha\beta} P(\y)^{\mu},&
    h^{\alpha \beta}D_\alpha  P(\y)^\mu  D_\beta P(\y)^\nu &= \eta^{\mu \nu} - 2 \Lambda P(\y)^\mu P(\y)^\nu.
\end{align}

\subsection{Carrollian geometry of Spi/Ti}\label{ssec:carroll}

We will realise the Carrollian connection in terms of the asymptotic data \eqref{Ashtekar-Romano: metric Coordinates expansion}. In order to do this, we need to relate coordinates on $\Spi/\Ti$ with Beig-Schmidt charts.

\begin{proposition}
    A coordinate system $(u, {\y}^{\alpha})$ on $\Spi/\Ti$ naturally defines a Beig-Schmidt chart\footnote{At least up to order $O(\rho^3)$ in $\rho$. It is in fact well known from \cite{beig_einsteins_1982} that the leading orders $\rho$ of a boundary defining function defines a unique Beig-Schmidt chart to every order, but we will not need this result here.} $(\rho, {y}^{\alpha})$. If $(u_1, {\y_1}^{\alpha})$ and $(u_2, {\y_2}^{\alpha})$ are two such coordinate systems with $u_2 = u_1 + \omega$ then 
    \begin{align}\label{Carroll connection: action of supertranslation on carroll coordinates}
        u_2 &= u_1 + \omega & \Leftrightarrow && \rho_2 &= \rho_1 - \omega \rho_1^2 + O(\rho_1^3)
    \end{align}
    and the corresponding coefficients $k_{\alpha\beta}$ and $\sigma$, defined in these charts by \eqref{Ashtekar-Romano: metric Coordinates expansion}, are related via
\begin{align}\label{Carrollian connection: k transformation rules}
    k_2{}_{\alpha\beta}(\y_2) \;d\y_2^{\alpha} d\y_2^{\beta}&= \Big( k_1{}_{\alpha\beta} - 2\left( h_{\alpha \beta}\omega + \nu_0^{-1}D_{\alpha}D_{\beta}\omega\right)  \Big)(\y_1) \; d\y_1^{\alpha} d\y_1^{\beta},&
    \sigma_1(\y_1) & = \sigma_2(\y_2).
\end{align}
\end{proposition}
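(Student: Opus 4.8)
The plan is to reduce the statement to a change of boundary defining function followed by a finite, low-order computation in Beig--Schmidt gauge \eqref{Ashtekar-Romano: metric Coordinates expansion}. First I would pin down the radial relation. A point of $\Spi/\Ti$ is by definition the $2$-jet $\rho_u = j^2_{\y}\rho + u\,j^2_{\y}\rho^2$, so once a trivialisation is fixed its fibre coordinate $u_i$ is read through $\rho_{u_i} = \rho_i(1 + u_i\rho_i + O(\rho_i^2))$ as in \eqref{eq:ConventionsCoordinateu}, the reference $\rho_i$ being the $u_i = 0$ representative. Requiring that a single $2$-jet carry the labels $u_1$ and $u_2 = u_1 + \omega$ in the two systems amounts to $\rho_1(1 + u_1\rho_1) = \rho_2(1 + u_2\rho_2)$ at the level of $2$-jets; evaluating at $u_1 = 0$ gives $j^2_{\y}\rho_1 = j^2_{\y}(\rho_2 + \omega\rho_2^2)$, which inverts to $\rho_2 = \rho_1 - \omega\rho_1^2 + O(\rho_1^3)$. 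This is exactly the $\mathbb{R}$-action of Proposition \ref{Proposition: principal bundle} written on boundary defining functions, so the stated equivalence is immediate.

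Next I would build the full chart transformation. Inverting gives $\rho_1 = \rho_2 + \omega\rho_2^2 + O(\rho_2^3)$, but rescaling $\rho$ alone spoils the Beig--Schmidt form by reintroducing $d\rho\,dy$ cross terms, so the transverse coordinates must be corrected at first order, $y_1^\alpha = y_2^\alpha + \rho_2 V^\alpha + O(\rho_2^2)$. I would fix $V^\alpha$ by the requirement that $g_{\rho_2\alpha} = 0$, equivalently that $y_2$ be constant along $N_2^a = \rho_2^{-2}g^{ab}\nabla_b\rho_2$; a short computation gives $V_\alpha = -\nu_0^{-1}D_\alpha\omega$. The point of this step is that the first-order reshuffling of $y$ acts on the leading boundary metric through its Lie derivative, $(\mathcal{L}_V h)_{\alpha\beta} = 2D_{(\alpha}V_{\beta)} = -2\nu_0^{-1}D_\alpha D_\beta\omega$, which is precisely what produces the $\nu_0^{-1}D_\alpha D_\beta\omega$ contribution in the transformation of $k_{\alpha\beta}$.

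Finally I would substitute the inverse change $(\rho_2, y_2) \mapsto (\rho_1, y_1)$ into \eqref{Ashtekar-Romano: metric Coordinates expansion}, expand to order $\rho_2$, and match the $d\rho_2^2$ and $dy_2^\alpha dy_2^\beta$ blocks against the Beig--Schmidt ansatz to read off $\sigma_2$ and $k_2$. The main obstacle, and the step demanding the most care, is the separation of the two scalars: the transverse block only ever fixes the combination $k_{\alpha\beta} - 2\sigma h_{\alpha\beta}$, so the mass aspect must first be extracted from the $d\rho_2^2$ block — where the fixed normalisation of $\nu_0$ is what guarantees that the rescaling affects $\sigma$ alone and not $\nu_0$ — and only then can $k_2$ be disentangled; keeping the sign conventions and the $\y$-dependence of $\omega$ consistent between the two blocks is where errors most easily creep in. As an independent check I would verify that the resulting finite law agrees with the infinitesimal supertranslation transformation \eqref{Ashtekar Romano: transformation of k} taken with $\chi = 0$: since the transformation is affine in $\omega$ and supertranslations form an abelian group, the finite and infinitesimal laws must coincide up to the expected passive-versus-active sign, which gives a strong consistency test on the final formulas.
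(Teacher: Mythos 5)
Your proposal is correct and follows essentially the same route as the paper's proof: the radial relation extracted from the jet/$\mathbb{R}$-action structure, completion to a Beig--Schmidt chart by killing the $d\rho\, dy$ cross terms (the paper's condition $N^a\nabla_a y^\alpha=0$, with your value $V_\alpha=-\nu_0^{-1}D_\alpha\omega$ being the correct one), and then a direct low-order matching cross-checked against \eqref{Ashtekar Romano: transformation of k} up to the passive/active sign. The one execution point your plan leaves implicit is that \eqref{Ashtekar-Romano: metric Coordinates expansion} expands the \emph{rescaled} metric $g_{ab}=\rho^2\hat{g}_{ab}$, whose conformal factor is chart-dependent, so the block matching must be performed on the physical metric $\hat{g}$ (equivalently, the relative factor $(\rho_2/\rho_1)^2 = 1-2\omega\rho_2+O(\rho_2^2)$ must be included): it is precisely this factor that produces the $-2\omega h_{\alpha\beta}$ term in the law for $k_{\alpha\beta}$ and makes $\sigma$ invariant, whereas the naive substitution would yield $\sigma_2=\sigma_1+\omega$ and a wrong sign in the trace part of $k$ --- a slip that your proposed consistency check against \eqref{Ashtekar Romano: transformation of k} would indeed catch.
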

\begin{proof}\mbox{}

    The set $\{u=0\}$ for the chosen coordinate system defines a section of $\Spi/\Ti$. By definition, this section defines, in turn, a boundary defining function $\rho$ up to order 3: $\rho  \mapsto \rho  + O(\rho^3)$. 

    Let $u_1$ and $u_2 = u_1 +\omega$ be two charts. The corresponding sections are given, when written in the first chart, respectively by the equations $u_1=0$ and $u_1 = -\omega$. One concludes that the second is obtained from the first via the $\mathbb{R}$-action $u\mapsto u - \omega$. Once more by definition, this $\mathbb{R}$-action takes the corresponding boundary function $\rho_1$ to $\rho_1 - \omega \rho^2$ and thus $\rho_2 = \rho_1 - \omega \rho_1^2 + O(\rho_1^3)$. It is then always possible to choose this $O(\rho_1^3)$ such that the corresponding chart satisfies $g_{\rho\rho} = \rho^{-2}\nu_0^{-1}(1+ \rho \sigma)^2$ to order three in $\rho$. The chart $(\rho,y^{\alpha})$ itself is then obtained by extending $y^{\alpha}$ from the boundary to the spacetime through the requirement $N^a\nabla_a y^{\alpha} =0$. Here $N^a$ is the vector field appearing in definition \ref{Definition:AsymptoticFlatness}.

   One can now directly compute $k_1$, $k_2$, $\sigma_1$ and $\sigma_2$ in these charts to find the results or adapt \eqref{Ashtekar Romano: transformation of k}, the sign discrepancy corresponding to a difference in passive/active action of the diffeomorphism.
    
\end{proof}

\subsubsection{Carrollian connection}

\begin{proposition}\label{Proposition: Carrollian connection}
    Let $(u, \y^{\alpha})$ be a coordinate system on $\Spi/\Ti$.  Let $( \rho, \y^{\alpha})$ be the associated Beig-Schmidt chart and $k_{\alpha\beta}(\y)$ be the object appearing in the corresponding expansion \eqref{Ashtekar-Romano: metric Coordinates expansion}.
    
    Taking    \begin{align}\label{Carrollian connection: expression in terms of k!}
        -\frac{1}{2}C_{\alpha\beta} &= \nu_0\left(\,  \frac{1}{2} k_{\alpha\beta}\, + u\;  h_{\alpha \beta}\right),
    \end{align}
    in the expression \eqref{Carrollian geometry of Spi: Carrollian connection coefficients} of a Carrollian connection then defines a Carrollian connection on $\Spi/Ti$. Moreover, the result does not depend on the coordinates chosen.
\end{proposition}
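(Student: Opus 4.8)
The plan is to split the statement into two parts: that formula \eqref{Carrollian connection: expression in terms of k!} does define a Carrollian connection in each fixed chart, and the more substantive claim that the resulting connection is chart-independent. The first part is immediate: the tensor $C_{\alpha\beta}$ prescribed by \eqref{Carrollian connection: expression in terms of k!} is symmetric, being built from the symmetric tensors $k_{\alpha\beta}$ and $h_{\alpha\beta}$, and has purely spatial components, so $n^b C_{bc}=0$. By the discussion surrounding \eqref{Eq::Cfirst}--\eqref{Carrollian geometry of Spi: Carrollian connection coefficients}, any such $C_{\alpha\beta}$ produces a torsion-free connection satisfying \eqref{Carrollian geometry of Spi: Carrollian connection}, hence $\nabla$ is Carrollian.

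For coordinate independence, the key observation is that a change of adapted coordinate system on $\Spi/\Ti$ decomposes into a base diffeomorphism $\y\mapsto\y'(\y)$ and a supertranslation $u_2=u_1+\omega(\y)$. Under a base diffeomorphism every ingredient ($h_{\alpha\beta}$, its Christoffel symbols, and $k_{\alpha\beta}$) transforms tensorially, so the connection is manifestly unchanged; the only nontrivial check is the supertranslation, which I would handle via two independent computations and then compare. First, I would feed the transformation law \eqref{Carrollian connection: k transformation rules} for $k_{\alpha\beta}$ (at fixed base point, $\y_2=\y_1$) together with $u_2=u_1+\omega$ into the definition \eqref{Carrollian connection: expression in terms of k!}. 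The $h_{\alpha\beta}\omega$ contributions cancel, and the crucial point is that the factor $\nu_0^{-1}$ accompanying $D_\alpha D_\beta\omega$ in \eqref{Carrollian connection: k transformation rules} is exactly undone by the overall $\nu_0$ in \eqref{Carrollian connection: expression in terms of k!}, leaving
\[ -\tfrac{1}{2}C^{(2)}_{\alpha\beta} = -\tfrac{1}{2}C^{(1)}_{\alpha\beta} - D_\alpha D_\beta\omega. \]

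Second, I would transform the connection coefficients of $\nabla^{(1)}$, read off from \eqref{Carrollian geometry of Spi: Carrollian connection coefficients}, directly under the coordinate change $u_2=u_1+\omega(\y)$ with $\y$ fixed. Because all the $dy^\gamma$ in \eqref{Carrollian geometry of Spi: Carrollian connection coefficients} carry spatial derivative indices and the only second derivative of the transition is $\partial^2 u_1/\partial\y_2^\alpha\partial\y_2^\beta=-\partial_\alpha\partial_\beta\omega$, the sole component that can pick up an inhomogeneous contribution is $\Gamma^{u}{}_{\alpha\beta}$; its transformation yields the homogeneous piece $-\tfrac{1}{2}C^{(1)}_{\alpha\beta}+\Gamma^{\gamma}_{\alpha\beta}\partial_\gamma\omega$ plus the inhomogeneous term $-\partial_\alpha\partial_\beta\omega$, which combine into $-\tfrac{1}{2}C^{(1)}_{\alpha\beta}-D_\alpha D_\beta\omega$. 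I would also note that the base-Christoffel block transforms into itself and the mixed $u$-components remain zero, so no spurious terms appear. Comparing the two computations shows that $\nabla^{(1)}$, expressed in the second chart, carries precisely the $C$-tensor that \eqref{Carrollian connection: expression in terms of k!} assigns in that chart, i.e. $\nabla^{(1)}=\nabla^{(2)}$.

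The main obstacle, and indeed the heart of the proposition, is ensuring that these two a priori unrelated transformation laws agree: the non-tensorial term $-D_\alpha D_\beta\omega$ generated by the connection under the supertranslation must be reproduced exactly by the shift in $k_{\alpha\beta}$ dictated by the asymptotic expansion \eqref{Carrollian connection: k transformation rules}. This is where Einstein's equations enter implicitly, since they fix the transformation of $k_{\alpha\beta}$, and the nontrivial consistency lies in matching both the numerical coefficient and the factor of $\nu_0$.
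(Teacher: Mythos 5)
Your proof is correct and takes essentially the same route as the paper: the paper likewise reduces to the supertranslation $u'=u+\omega$, substitutes the transformation law \eqref{Carrollian connection: k transformation rules} into \eqref{Carrollian connection: expression in terms of k!} to get the shift $-\tfrac{1}{2}C_{\alpha\beta}\mapsto-\tfrac{1}{2}C_{\alpha\beta}-D_\alpha D_\beta\omega$, and then asserts that this is ``the consistent transformation law for a Carrollian connection'' --- the Jacobian computation you carry out explicitly is exactly the verification the paper leaves implicit. One minor correction to your closing remark: the law \eqref{Carrollian connection: k transformation rules} is kinematical, following from the requirement that the metric retain the Beig-Schmidt form \eqref{Ashtekar-Romano: metric Coordinates expansion} under the change of boundary defining function, and does not rest on Einstein's equations.
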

\begin{proof} \mbox{} 

The coordinate expression \eqref{Carrollian connection: expression in terms of k!} defines, together with \eqref{Carrollian geometry of Spi: Carrollian connection coefficients}, a Carrollian connection. Let us now show that it does not depend on the coordinate system. If $(u, {\y}^{\alpha})$ and $(u', {\y'}^{\alpha})$  are related via $u' = u+\omega$ we have, making use of \eqref{Carrollian connection: k transformation rules}, that upon such change of coordinates,
\begin{align*}
    \left(\frac{1}{2}k_{\alpha \beta}\right)' &= \frac{1}{2}k_{\alpha \beta}  - ( \,h_{\alpha \beta} + \nu_0^{-1} D_{\alpha}D_{\beta})\omega, & u' h_{\alpha \beta} &= u h_{\alpha \beta} + \omega h_{\alpha \beta},
\end{align*}
and thus, if we define $C_{\alpha \beta}$ via \eqref{Carrollian connection: expression in terms of k!},
\begin{align*}
   (-\frac{1}{2}C_{\alpha \beta})' \;&\;:= \nu_0\left( (\frac{1}{2}k_{\alpha \beta})' + u'  h_{\alpha \beta}\right) \; =\; -\frac{1}{2}C_{\alpha \beta}  \,-\,D_{\alpha}D_{\beta}\omega.
\end{align*}
This is the consistent transformation law for a Carrollian connection under the change of coordinate $u'= u+\omega$. We therefore obtain that \eqref{Carrollian connection: expression in terms of k!} transforms consistently under change of coordinates and thus that the resulting Carrollian connection is independent of this choice.
\end{proof}

\begin{remark}
If we apply the formula\footnote{After performing the appropriate rescaling and coordinate change to match the conventions of this paper.} in the fibre for the connection coefficient $\upsilon_{\alpha\beta}$ constructed in~\cite{Borthwick:2024wfn}, wherein the connection was obtained from geometric considerations in the absence of mass aspect, we would arrive at:
\[\begin{aligned} -\frac{1}{2}C_{\alpha\beta}&=-\lim_{\rho \to 0}\nu_0\rho^{-1}\left(\nu^{-1}\frac{1}{\rho}\Gamma^\rho_{\alpha\beta} - (h_{\alpha\beta} + \rho(k_{\alpha\beta}-2\sigma h_{\alpha\beta})+o(\rho)\right)+ u\nu_0h_{\alpha\beta}\\&= \nu_0\left(\frac{1}{2}k_{\alpha\beta} + (u-\sigma) h_{\alpha\beta}\right).\end{aligned}\]
This coincides with~\eqref{Carrollian connection: expression in terms of k!} up to a shift of the trace of $k$ by $\sigma$. Note that this possible alternative definition for the Carrollian connection suggests that the gauge fixing $k-2n \sigma =0$ might be geometrically more natural, in the presence of mass, than the choice $k=0$ which has been used extensively in the literature.
\end{remark}

\subsubsection{Carrollian Curvature}
It is a straightforward computation that the curvature $R^{a}{}_{bcd}$ of the Carrollian connection at $\Spi/\Ti$ is given, in a coordinate system $(u,\y^{\alpha})$ by:
\begin{equation*}\label{Carrollian connection: Carrollian curvature}
R^a_{\,\,b}=\begin{pmatrix}\;0\; & \frac{\nu_0}{2} D_\delta k_{\beta\gamma}\,dy^\delta\wedge dy^\gamma + \nu_0h_{\beta\gamma}\, du\wedge dy^\gamma \\[0.2em] \;0\; & {R^{(h)}}^\alpha_{\,\,\beta}  \end{pmatrix}.
\end{equation*}

Note that the difference $R^a{}_{bcd} - 2 \nu_0\, \delta^{a}{}_{[c}h_{d]b}$ between this curvature and the curvature of the model \eqref{Carrollian curvature: flat model} is proportional to
\begin{align}\label{Carrollian curvature: Cartan curvature}
R^{(h)}{}^{\alpha}{}_{\beta \gamma \delta} - 2 \nu_0\, \delta^{\alpha}{}_{[\gamma}h_{\delta]\beta}, && \textrm{and}&& D_{[\gamma} k_{\delta]\beta}. 
\end{align}
In the situation at hand, these in fact parametrise the curvature of a Cartan connection modelled on \eqref{Ti and Spi: flat models}, see \cite{Herfray:2021qmp}; in particular, the vanishing of the above two tensors is the necessary and sufficient condition for $\Spi/\Ti$ to be locally isomorphic to the models (and, for $n=3$, imply the leading Einstein's equations \eqref{Ashtekar Romano: Einstein equations on h}, \eqref{Ashtekar Romano: Einstein equations on k} for the spacetime; in higher dimensions, the left hand side of \eqref{Ashtekar Romano: Einstein equations on k} then vanishes and yields a further constraint on $\sigma$). 

The curl of $k$ appearing here is also well known to be related to the asymptotic Weyl tensor of the spacetime, see e.g. \cite{Compere:2023qoa}. In most of the literature, including \cite{Compere:2023qoa}, it is assumed that the above curvature vanishes. Therefore, the induced Carrollian geometry must be flat and there exists $\mathcal{C}(\y) \in \mathcal{C}^{\infty}(dS_n/H^n)$ such that
\begin{equation}\label{Carrollian connection: pure gauge condition}
    k_{\alpha\beta} =  2\left( \nu_0^{-1}D_{\alpha} D_{\beta} + h_{\alpha \beta} \right) \mathcal{C}.
\end{equation}

\subsection{BMS group at Ti/Spi}\label{ssection: BMS group at Ti/Spi}

For definiteness we assume, in this section, that we are in the most physically relevant situation where $\mathcal{I}^{0}\simeq dS_n$ / $\mathcal{I}^{\pm}\simeq H^n$.

\subsubsection{Carrollian symmetries}

As was recalled in Proposition \ref{Proposition: Spi group as automorphism}, the $\SPI$ group coincides with automorphisms of the weak Carrollian geometry $(h_{ab}, n^a)$. Preserving the (weak) Carrollian geometry
\begin{align*}
    \mathcal{L}_X h_{ab}&=0, & \mathcal{L}_X n^a&=0, 
\end{align*}
is equivalent to 
\begin{equation*}
    X^a =  \omega(\y) \partial_u + \chi^{\alpha}(\y) \partial_{\alpha},
\end{equation*}
where $\chi^{\alpha}(\y) \partial_{\alpha}$ is a Killing vector for the metric, $\mathcal{L}_{\chi} h_{\alpha\beta }=0$.

In general, strongly Carrollian geometries $(h_{ab}, n^a ,\nabla_b)$ do not have any automorphisms: the condition 
\begin{equation*}
    \mathcal{L}_X \nabla_c = 0,
\end{equation*}
is generically too strong because the connection does not have to be flat. However, the Carrollian connections \eqref{Carrollian connection: expression in terms of k!} induced on $\Ti/\Spi$ are rather special so we might be missing something. Let us make use of the identity \eqref{Carrollian connection: Lie derivative of connection} to evaluate the corresponding Lie derivative $(\mathcal{L}_X \nabla_c)^a{}_b$ explicitly
\begin{equation}\label{Carrollian connection: generic transformation of the connection}
(\mathcal{L}_X \nabla_c)^a{}_{b}=\begin{pmatrix}\;0\; & (D_\gamma D_\beta \omega +\nu_0 h_{\beta\gamma}\omega + \frac{\nu_0}{2}\mathcal{L}_\chi k_{\beta\gamma})\, \nabla_c y^\gamma \\[0.2em] \;0\; & (\mathcal{L}_\chi D_\gamma)^\alpha{}_\beta \,\nabla_c y^\gamma \end{pmatrix}.
\end{equation}
Since it is assumed that $\chi^{\alpha}$ is a Killing vector for $h_{\alpha\beta}$, it also is a symmetry of its Levi-Civita connection $D$ and the second line above is automatically vanishing. Therefore, one sees that $X^a$ is a symmetry if and only if

\begin{equation}\label{Carrollian connection: full automorphisms}
 \big(D_\beta D_{\gamma} + \nu_0 h_{\beta\gamma}\Big)\,\omega  = -\frac{\nu_0}{2}  \mathcal{L}_\chi k_{\beta\gamma}.
\end{equation}
If this equation admits any solution then we in fact obtain a four parameter family of solutions by further solving the homogeneous equation $\big(D_\beta D_{\gamma} + \nu_0 h_{\beta\gamma}\Big)\,\omega  = 0$. This would imply that the geometry is maximally symmetric and therefore coincide with the flat model \eqref{Ti and Spi: flat models}. In general, however, the geometry is not flat and this equation has no solution. 

\subsubsection{BMS symmetries at Ti and representation on massive fields}

As just discussed, in general, there will be no symmetries of the strongly Carrollian geometry $(h_{ab}, n^a ,\nabla_b)$. 

Nevertheless, the Carrollian connections that we are considering are, once again, rather special: from \eqref{Carrollian connection: generic transformation of the connection} their Lie derivatives $(\mathcal{L}_X\nabla_c)^a{}_b$ always satisfy
\begin{equation*}
    n^c(\mathcal{L}_X\nabla_c)^a{}_b =0.
\end{equation*}

Therefore, one sees that we can always consider the weaker symmetry condition, inspired from Compère--Dehouck \cite{Compere:2011ve} and Troessaert \cite{troessaert_bms4_2018},
 \begin{align}\label{Carrollian Connection: BMS symmetry condition}
h^{cb}(\mathcal{L}_X\nabla_c)^a{}_b &=0,
\end{align}
which, from  \eqref{Carrollian connection: generic transformation of the connection}, is equivalent to
\begin{equation}\label{Carrollian connection: weaker automorphisms}
( D^2 + \nu_0 n) \omega(\y) = -\frac{\nu_0}{2} \mathcal{L}_{\chi} k(\y).
\end{equation}
where $k:= h^{\alpha\beta}k_{\alpha \beta}$. Note that, when expressed in coordinates $(u,\y^{\alpha})$ on $\Ti$, the condition \eqref{Carrollian Connection: BMS symmetry condition} simply means that the trace $k:= h^{\alpha\beta}k_{\alpha \beta}$ is preserved under such transformations and one thus recovers the condition from \cite{Compere:2011ve,troessaert_bms4_2018}.

Contrary to \eqref{Carrollian connection: full automorphisms}, equation \eqref{Carrollian connection: weaker automorphisms} always admits solutions. These have already been worked out several times in the literature, see e.g. \cite{Compere:2023qoa}, we only review their essential features.

At time-like infinity, $\nu_0=-|\nu_0|$, any regular solution on the hyperboloïd $H^n$ must be of the form
\begin{equation*}
    \omega(\y) = \varpi(\y) + \omega_\xi(\y),
\end{equation*}
where $\varpi(\y)$ is any particular solution while $\omega_{\xi}(\y)$ is entirely parametrised by a free function $\xi(z^i)$ on the sphere $S^{n-1} = \partial(H^n)$ and given by the bulk-to-boundary propagator \eqref{Asympotic symmetries: BMS in SPI}.  The algebra of symmetries obtained in this way then coincides with the BMS group. This therefore proves the following.

\begin{proposition}\label{Proposition: BMS group inside SPI}
    The BMS group coincides with the subgroup of diffeomorphisms of $\Ti$ leaving $(h_{ab},n^a, [\nabla_a])$ invariant, where $[\nabla_a]$ is the equivalence class of connection given by the induced Carrollian connection $\nabla_a$ on $\Ti$ and the equivalence relation
    \begin{equation*}
        \nabla_a \sim \nabla_a  + T_{bc} n^a, \qquad\textrm{where}\quad n^bT_{cb}=0,\quad h^{cb}T_{cb}=0.
    \end{equation*}
\end{proposition}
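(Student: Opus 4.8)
The plan is to show the two-sided inclusion between the BMS group and the group of diffeomorphisms of $\Ti$ preserving the triple $(h_{ab}, n^a, [\nabla_a])$. First I would unpack what it means for a diffeomorphism to preserve the equivalence class $[\nabla_a]$: infinitesimally, a vector field $X$ preserves $[\nabla_a]$ precisely when $\mathcal{L}_X \nabla_c$ lies in the equivalence direction, i.e. $(\mathcal{L}_X\nabla_c)^a{}_b = T_{bc}\,n^a$ for some symmetric $T_{bc}$ with $n^bT_{cb}=0$ and $h^{cb}T_{cb}=0$. Since the induced connection is torsion-free, $\mathcal{L}_X\nabla_c$ is automatically symmetric in $(b,c)$; combined with the earlier observation that $n^c(\mathcal{L}_X\nabla_c)^a{}_b=0$, the content of $(\mathcal{L}_X\nabla_c)^a{}_b = T_{bc}\,n^a$ with $h^{cb}T_{cb}=0$ is exactly the trace-free-transverse condition, which is equivalent to the weaker symmetry condition $h^{cb}(\mathcal{L}_X\nabla_c)^a{}_b=0$ from \eqref{Carrollian Connection: BMS symmetry condition}.

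Having made this reduction, I would invoke the explicit computation \eqref{Carrollian connection: generic transformation of the connection}: requiring that $X = \omega(\y)\partial_u + \chi^\alpha(\y)\partial_\alpha$ preserve $(h_{ab}, n^a)$ already forces $\chi^\alpha$ to be a Killing vector of $h_{\alpha\beta}$ (so the lower-right block vanishes), and then the equivalence-class condition reduces exactly to \eqref{Carrollian connection: weaker automorphisms}, namely $(D^2+\nu_0 n)\omega = -\tfrac{\nu_0}{2}\mathcal{L}_\chi k$. The key point, already asserted in the excerpt, is that \eqref{Carrollian connection: weaker automorphisms} always admits solutions, and that the general solution on $H^n$ decomposes as a particular solution $\varpi(\y)$ plus the kernel $\omega_\xi(\y)$ of $(D^2+\nu_0 n)$, the latter parametrised by a free function $\xi(z^i)$ on the celestial sphere $S^{n-1}$ via the bulk-to-boundary propagator \eqref{Asympotic symmetries: BMS in SPI}. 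I would then identify this data: the Killing vectors $\chi^\alpha$ of $H^n$ furnish the Lorentz factor $\mathrm{SO}(n,1)$, while the supertranslation part $\omega_\xi$, being in one-to-one correspondence with functions on $S^{n-1}$, furnishes precisely the BMS supertranslations $\mathcal{T}_{\BMS} \simeq \mathcal{E}[1](S^{n-1})$. The algebra of such $X$ therefore closes to the BMS algebra, establishing that BMS symmetries preserve $(h_{ab}, n^a, [\nabla_a])$.

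For the converse inclusion I would argue that any diffeomorphism preserving the triple must, in particular, preserve $(h_{ab}, n^a)$, hence by Proposition \ref{Proposition: Spi group as automorphism} lies in the $\SPI$ group and has the form \eqref{Ti: spi algebra}; the additional condition of preserving $[\nabla_a]$ then cuts it down to those $\omega$ solving \eqref{Carrollian connection: weaker automorphisms}, which is exactly the BMS subalgebra just described. This gives equality of the two groups rather than mere inclusion. The main obstacle I anticipate is the bookkeeping at the level of the equivalence class: one must verify carefully that the three algebraic conditions on the defect tensor ($n^bT_{cb}=0$, symmetry, and $h^{cb}T_{cb}=0$) are precisely matched by the structural properties of $\mathcal{L}_X\nabla_c$ (transversality in $n^c$, torsion-free symmetry, and the trace condition), so that preservation of $[\nabla_a]$ is genuinely equivalent to \eqref{Carrollian Connection: BMS symmetry condition} and not something weaker or stronger. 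A secondary subtlety is passing cleanly from the infinitesimal (algebra) statement to the group statement, which requires checking that the supertranslation solutions $\omega_\xi$ integrate to a well-defined global action and that the resulting group is genuinely the BMS group, including that the regularity of solutions on $H^n$ is what pins the supertranslation data to live on $S^{n-1}$.
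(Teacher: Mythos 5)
Your proposal is correct and follows essentially the same route as the paper: the paper's proof is precisely the discussion preceding the proposition, namely that for an $X$ preserving $(h_{ab},n^a)$ the Lie derivative \eqref{Carrollian connection: generic transformation of the connection} is automatically of the form $T_{bc}n^a$ with $T$ symmetric and transverse, so preserving $[\nabla_a]$ reduces to the trace condition \eqref{Carrollian Connection: BMS symmetry condition}, equivalently \eqref{Carrollian connection: weaker automorphisms}, whose regular solutions on $H^n$ are a particular solution plus the kernel parametrised by $\xi(z^i)\in\mathcal{E}[1](S^{n-1})$ via \eqref{Asympotic symmetries: BMS in SPI}. Your extra bookkeeping (matching the three algebraic conditions on the defect tensor to the structural properties of $\mathcal{L}_X\nabla_c$, and the converse inclusion via Proposition \ref{Proposition: Spi group as automorphism}) makes explicit what the paper leaves implicit, but it is the same argument.
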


By means of a $\SPI$ supertranslation, one can in fact always choose a coordinate system $(u,\y^{\alpha})$ on Ti such that $k=0$. These coordinates then realise an explicit isomorphism $BMS_{n+1} \simeq  SO(n,1)  \ltimes \mathcal{E}[1](S^{n-1})$ given, at the level of the algebra, by
\begin{equation}\label{Carrollian connection:  BMS at Ti}
   (\,\xi , \chi^{\alpha}(\y)\,)  \quad \mapsto \quad X^a =  \omega_{\xi}(\y) \partial_u + \chi^{\alpha}(\y) \partial_{\alpha} 
\end{equation}
with $\chi^{\alpha}(\y) \partial_{\alpha} \in \mathfrak{so}(n,1)$ a Killing vector on $H^n$ and $\xi(z^i) \in \mathcal{E}[1](S^{n-1})$ a weight one conformal density on the celestial sphere, and $\omega_{\xi}(\y)$ given by \eqref{Asympotic symmetries: BMS in SPI}.

\paragraph{Longhi--Materassi BMS representation on massive fields}\mbox{}

Combining the previous discussion with Proposition \ref{Proposition: Action of the SPI group on massive fields}, we obtain a geometrical representation of the BMS group on massive fields.
\begin{proposition}\label{Proposition: Action of the BMS group on massive fields}
    Massive fields at Ti form a (unitary irreducible) representation of the BMS group realised by
    \begin{align*}
      \varphi\big(u, \y\big) &\;\mapsto   \varphi\big(u +\omega_{\xi}(\y), \y'(\y)\big) && \Longleftrightarrow &  a(\y) &\;\mapsto a(\y') e^{-im \omega_{\xi}(\y)},
   \end{align*}
   where $\omega_{\xi}(\y)$ is the function on $H^n$ given by \eqref{Asympotic symmetries: BMS in SPI}.
\end{proposition}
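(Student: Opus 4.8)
The plan is to obtain this proposition essentially as a corollary of Proposition \ref{Proposition: Action of the SPI group on massive fields}, by restricting the $\SPI$ action to the BMS subgroup. Indeed, Proposition \ref{Proposition: BMS group inside SPI} together with the explicit isomorphism \eqref{Carrollian connection:  BMS at Ti} identifies $\BMS_{n+1}$ with those elements of the $\SPI$ group whose Lorentz part $\chi^\alpha$ is arbitrary but whose supertranslation parameter is constrained to lie in the image $\omega = \omega_\xi$ of the bulk-to-boundary propagator \eqref{Asympotic symmetries: BMS in SPI}. Substituting $\omega \to \omega_\xi$ into the $\SPI$ action \eqref{Action of the SPI group on massive fields} then reproduces verbatim the two displayed formulas; no separate computation is required, since the BMS action is by construction the pullback of the $\SPI$ action along the subgroup inclusion. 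Because that inclusion is a group homomorphism, the restriction is automatically a representation, and unitarity is inherited word for word from the $\SPI$ case: the factor $e^{-i m \omega_\xi(\y)}$ is a pure phase and the Lorentz transformation $\y \mapsto \y'$ preserves the invariant measure $\tfrac{d^n\bm{k}}{\sqrt{1+\bm{k}^2}}$ on $H^n$, so the norm $\|\varphi\|^2$ is left invariant.

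The only point that is not immediate is irreducibility, since restricting an irreducible representation to a subgroup need not preserve irreducibility. I would handle this by descending further, to the Poincaré subgroup $\textrm{ISO}(n,1) \subset \BMS_{n+1}$. The canonical embedding $\xi(z^i) = T^\mu \tilde{Q}_\mu(z^i)$ of translations into BMS supertranslations, fed into the propagator \eqref{Asympotic symmetries: BMS in SPI}, should return (up to the conventional normalisation) the genuine Poincaré supertranslation $\omega_\xi(\y) \propto - T \scal \tilde{P}(\y)$ already identified in Section \ref{subsection: Poincaré action and the flat model of Ti}; this is the one genuine calculation, carried out with the identities \eqref{Flat model: identity on P}, and it certifies that the BMS action restricts on $\textrm{ISO}(n,1)$ to the massive Poincaré action of that section. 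Since the latter is the irreducible Wigner massive UIR of the Poincaré group, every BMS-invariant subspace is \emph{a fortiori} Poincaré-invariant and hence trivial; irreducibility for the larger group $\BMS_{n+1}$ follows at once.

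The remaining, and most delicate, task is to recognise the resulting representation as the Longhi--Materassi BMS representation \cite{Longhi:1997zt}. I would match their realisation of the BMS generators on functions of $\bm{k}\in H^n$ against the present action, under the correspondence $a(\y) \leftrightarrow \varphi$, checking in particular that their supertranslation kernel coincides with the propagator \eqref{Asympotic symmetries: BMS in SPI} and that their Lorentz action agrees with $\y \mapsto \y'$. This is where the bulk of the care lies, as it requires reconciling two \emph{a priori} unrelated conventions; everything else in the statement is a direct consequence of the restriction from $\SPI$ to $\BMS_{n+1}$.
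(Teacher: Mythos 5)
Your proposal is correct and takes essentially the same route as the paper: there, the proposition is likewise obtained as an immediate corollary, by restricting the $\SPI$ action of Proposition \ref{Proposition: Action of the SPI group on massive fields} to the BMS subgroup singled out by the Carrollian connection through the embedding \eqref{Asympotic symmetries: BMS in SPI}, with no separate computation given. Your extra arguments --- irreducibility via restriction to the Poincar\'e subgroup $\textrm{ISO}(n,1)\subset \BMS_{n+1}$, and the convention-matching with the Longhi--Materassi realisation --- flesh out points the paper only asserts, and they are sound.
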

This representation is isomorphic to the canonical realisation of Longhi--Materassi \cite{Longhi:1997zt}. \correc{As we previously pointed out, from the perspective of McCarthy's classification of BMS UIRs~\cite{Mccarthy:1972ry,McCarthy_72-I,McCarthy_73-II,McCarthy_73-III,McCarthy_76-IV,McCarthy_75,McCarthy_78,McCarthy_78errata,Girardello:1974sq}), this representation is rather specific: a precise description of how this representation fits into the classification can be found in \cite{Bekaert:2024jxs,Bekaert:2025kjb}, see in particular section 5 of \cite{Bekaert:2025kjb}. For completeness, we summarise here some relevant points: first, as should be clear from the present exposition for example, the Longhi--Materassi BMS UIR is obtained by extending the usual (scalar) massive representation of the Poincaré group to the BMS group. In fact, it follows from McCarthy's work that all UIRs of the Poincaré group similarly lift to BMS UIRs: these are the hard BMS representations of \cite{Bekaert:2024jxs,Bekaert:2025kjb} and, by definition, they are therefore in 1-1 correspondence with usual (Poincaré) particles. Generic BMS particles (i.e. BMS UIR) have a much less straightforward interpretation as a quantum superposition of usual particles in all possible gravity vacua, see \cite{Bekaert:2024jxs}. Secondly, hard representations are almost uniquely singled out by the requirement that their BMS little group and their Poincaré little group coincide (see Theorem 5.1 in \cite{Bekaert:2025kjb}). For the massive hard representation of Proposition \ref{Proposition: Action of the BMS group on massive fields} this is in fact exactly the case: the Longhi--Materassi representation of Proposition \ref{Proposition: Action of the BMS group on massive fields} is the unique BMS UIR with BMS little group $SU(2)$ (in the present article we restricted ourselves to scalar fields but this results extends straightforwardly to all spins, with BMS spins corresponding to a choice of UIR of the little group).}

\subsubsection{BMS symmetries at Spi and the matching conditions}

We recall that the metric defining de Sitter space $dS_n$ with scalar curvature $n (n-1) \nu_0$ is given, in the standard coordinate chart $\y^{\alpha}= (\psi,\vartheta)\in \mathbb{R}\times S^{n-1}$, by:
\[h_{dS_n}=\frac{1}{\nu_0}\left( -d\psi^2 + \cosh^2\psi \,d\Omega^{n-1}\right).\]
In particular, it comes with a parity map $\Upsilon:  dS_n \to dS_n$, unique up to $SO(n,1)$, given by
\begin{equation}
 \Upsilon:   = (\psi , \vartheta) \mapsto (-\psi , a^*\vartheta),
\end{equation}
where $a: S^{n-1} \to S^{n-1}$ is the antipodal map on the celestial sphere.

At space-like infinity, $\nu_0=|\nu_0|$ and regular solutions of \eqref{Carrollian connection: weaker automorphisms} on de Sitter $dS^n$ must be of the form (see e.g. \cite{Henneaux:2018mgn,Compere:2023qoa})
\begin{equation}\label{Carrollian connection: BMS at Spi}
    \omega(\y) = \varpi(\y) + \omega^O_{\xi}(\y) +\omega^E_{\xi'}(\y),
\end{equation}
where $\varpi(\y)$ is any particular solution, $\omega^O_{\xi}(\y)$ and $\omega^E_{\xi'}(\y)$ are respectively odd and even for the parity
\begin{align*}
    \Upsilon^* \omega^O_{\xi} &= - \omega^O_{\xi}, & \Upsilon^* \omega^E_{\xi'} &= \omega^E_{\xi'},
\end{align*}
and are parametrised by free functions $\xi(z)$ and $\xi'(z)$ on the sphere $S^{n-1}$. In other words, while at $\Ti$ the condition \eqref{Carrollian Connection: BMS symmetry condition} was enough\footnote{Together with the crucial requirement that the solution be globally defined on time-like infinity $\mathcal{I}^{\pm} \simeq H^n$.} to single out the BMS group as a group of symmetry, at $\Spi$ this will generically lead to twice as many free parameters. In order to select a unique BMS group, one in fact needs, as first explained by Troessaert \cite{troessaert_bms4_2018}, to impose an extra parity condition. We will here interpret this condition in terms of the Carrollian geometry of $\Spi$.\\

As explained in appendix \ref{section: appendix Ti/Spi}, there is a natural discrete symmetry $P$ acting on the model of $\Spi$. In particular, this implies that the (strong) Carrollian geometry of the model is invariant under this discrete symmetry.  

Having in mind the goal of generalising this property in the curved setting, we introduce a notion of parity for Carrollian geometry.
\begin{definition}
    We will say that a map $P : \Spi \to \Spi$ is a parity of $\Spi$ if \begin{enumerate}[i)]\item it is an involution, $P^2 = id$, \item it satisfies
\begin{align*}
    \Big( P_* n^a, P^* h_{ab} \Big) =  \Big( -n^a, h_{ab} \Big),
\end{align*}
\item the map it induces on $dS^n$ is a parity i.e. a non-trivial, time reversing, orientation preserving involution of $dS^n$ for $n$ odd (orientation reversing for $n$ even). \end{enumerate} 
\end{definition}
If $(u,\y^{\alpha})$ is a set of coordinates on $\Spi$ then a parity must be of the form
\begin{equation}\label{Carrollian connection: parity on Spi}
 P:   \Big( u, \y^{\alpha}\Big) \mapsto \Big( \mathcal{C}(\y) -u, \Upsilon^* \y^{\alpha}\Big)
\end{equation}
where $\mathcal{C}$ is even under de Sitter  parity $\Upsilon^* \mathcal{C} = \mathcal{C}$. This last condition follows from the requirement that $P$ is an involution. Indeed, we have
\begin{equation}
 P^2:   \Big( u, \y^{\alpha}\Big) \mapsto \Big( u - \mathcal{C}(\y) + \Upsilon^*\mathcal{C}(\y) , \y^{\alpha}\Big).
\end{equation}

\begin{definition}[Even Carrollian geometry at Spi]\label{Definition: Odd Carrollian geometry}\mbox{}

A (strongly) Carrollian geometry $(n^a, h_{ab}, \nabla_a)$ at $\Spi$ will be said to be even if there exists a parity $P : \Spi \to \Spi$ such that
\begin{equation}\label{Carollian connection: parity condition, covariant}
    P^* \nabla = \nabla.
\end{equation}   
\end{definition}
As discussed in appendix \ref{section: appendix Ti/Spi}, globally flat Carrollian geometries on $dS_{n}$ are always even. In general however, a Carrollian geometry $(n^a, h_{ab}, \nabla_a)$ need not satisfy this property: the requirement that such a symmetry exists is an extra condition restricting the space of admissible connections. In the coordinate system $(u,\y^{\alpha})$ such that $P$ is of the form \eqref{Carrollian connection: parity on Spi} the condition \eqref{Carollian connection: parity condition, covariant} reads
\begin{equation}
    \Upsilon^*\Big( k_{\alpha\beta} + 2(\nu_0^{-1}D_{\alpha}D_{\beta} + h_{\alpha\beta})\mathcal{C}^E(\y) \Big) = - \Big(k_{\alpha\beta}+ 2(\nu_0^{-1}D_{\alpha}D_{\beta} + h_{\alpha\beta})\mathcal{C}^E(\y)\Big), 
\end{equation} 
where we here emphasised that $\mathcal{C}^E = \frac{1}{2} \mathcal{C}$ must be even. In other terms, $k_{\alpha\beta}(\y)$ must be odd under parity of $dS_{n}$, possibly up to a pure gauge even contribution. This is essentially the parity condition at spatial infinity from \cite{Compere:2023qoa} and is very reminiscent of the boundary conditions imposed on the Cauchy data at spatial infinity by Henneaux and Troessaert \cite{Henneaux:2018cst,Henneaux:2019yax}, see also \cite{Mohamed:2021rfg,Mohamed:2023jwv,Mohamed:2023ttb}. On can always adapt our coordinate system to choose $2 \mathcal{C}^E =\mathcal{C}=0$  and this recovers the parity condition from \cite{Compere:2023qoa}
\begin{equation}
    \Upsilon^* k_{\alpha\beta}= -k_{\alpha\beta}.
\end{equation} 
Note, however, that we never had to require here that the the connection is flat, which is always implied in \cite{Compere:2023qoa}.

By construction, in this preferred set of coordinate, the subset of \eqref{Carrollian connection: BMS at Spi} preserving the extra symmetry will be of the form $\omega(\y) =  \varpi(\y)+ \omega^O_{\xi}(\y)$.  Similarly to the situation at $\Ti$, we can now improve these coordinates to set $k=0$ and this then realises an isomorphism $BMS_{n+1}\simeq SO(n,1) \ltimes  \mathcal{E}[1](S^{n-1}) $:
\begin{equation*}
   (\,\xi , \chi^{\alpha}(\y)\,)  \quad \mapsto \quad X^a =  \omega_{\xi}^O(\y) \partial_u + \chi^{\alpha}(\y) \partial_{\alpha}.
\end{equation*}

We thus proved
\begin{proposition}\label{Proposition: BMS group inside SPI (2)}
Let $(n^a, h_{ab}, \nabla_a)$ be an even Carrollian geometry at $\Spi$. The BMS group coincides with the subgroup of diffeomorphisms of $\Spi$ leaving invariant $(h_{ab},n^a, [\nabla_a])$ where $[\nabla_a]$ is the equivalence class of \emph{even} connections given by the induced Carrollian connection $\nabla_a$ on $\Spi$ and the equivalence relation
    \begin{equation*}
        \nabla_c \sim \nabla_c  + T_{bc} n^a, \qquad\textrm{where}\quad n^bT_{cb}=0,\quad h^{cb}T_{cb}=0 ,\quad \Upsilon^*T_{cb}=-T_{cb}.
    \end{equation*}
\end{proposition}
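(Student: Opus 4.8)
The plan is to mirror the $\Ti$ argument of Proposition \ref{Proposition: BMS group inside SPI} and then to show that the extra \emph{even}-connection requirement is exactly what excises the even branch of supertranslations that survives at $\Spi$. First I would reduce to the weak structure: any diffeomorphism of $\Spi$ preserving the full data $(h_{ab}, n^a, [\nabla_a])$ in particular preserves the weak Carrollian geometry $(h_{ab}, n^a)$, so by Proposition \ref{Proposition: Spi group as automorphism} its generator must be of the form $X^a = \omega(\y)\partial_u + \chi^\alpha(\y)\partial_\alpha$ with $\chi^\alpha$ a Killing vector of $h_{\alpha\beta}$ on $dS_n$. The candidate symmetries are thereby already cut down to the $\SPI$ algebra, and it remains only to impose invariance of the equivalence class $[\nabla_a]$.

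Next I would translate that invariance into conditions on $(\omega,\chi)$. Invariance of $[\nabla_a]$ means that $(\mathcal{L}_X\nabla_c)^a{}_b$ is tangent to the equivalence relation, i.e. equals $T_{bc}n^a$ with $n^bT_{cb}=0$, $h^{cb}T_{cb}=0$ and $\Upsilon^*T_{cb}=-T_{cb}$. Using \eqref{Carrollian connection: generic transformation of the connection} and the fact that $\chi^\alpha$ is Killing (so the lower-diagonal block vanishes), I observe that $(\mathcal{L}_X\nabla_c)^a{}_b$ is automatically of the required form $T_{bc}n^a$, with
\[ T_{\beta\gamma} = \big(D_\gamma D_\beta + \nu_0 h_{\beta\gamma}\big)\omega + \tfrac{\nu_0}{2}\,\mathcal{L}_\chi k_{\beta\gamma}. \]
The condition $n^bT_{cb}=0$ then holds identically, the trace condition $h^{cb}T_{cb}=0$ reproduces the weaker BMS equation \eqref{Carrollian connection: weaker automorphisms}, and the genuinely new ingredient is the parity condition $\Upsilon^*T_{\beta\gamma}=-T_{\beta\gamma}$.

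I would then solve these two conditions in coordinates adapted to the parity. Since the geometry is even (Definition \ref{Definition: Odd Carrollian geometry}), I may choose coordinates with $\mathcal{C}=0$, so that $\Upsilon^*k_{\alpha\beta}=-k_{\alpha\beta}$, and use a residual odd supertranslation to set the trace $h^{\alpha\beta}k_{\alpha\beta}=0$; in these coordinates the trace condition becomes the \emph{homogeneous} equation $(D^2+\nu_0 n)\omega=0$, whose regular solutions on $dS_n$ split as $\omega=\omega^O_\xi+\omega^E_{\xi'}$ into odd and even parts parametrised by functions $\xi,\xi'$ on $S^{n-1}$, as in \eqref{Carrollian connection: BMS at Spi}. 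The decisive step is the parity analysis of $T$: because $h_{\alpha\beta}$ and $D_\alpha$ are $\Upsilon$-invariant, the operator $D_\gamma D_\beta+\nu_0 h_{\beta\gamma}$ commutes with $\Upsilon^*$, so imposing $\Upsilon^*T=-T$ forces its even part to vanish. Since the kernel of $D_\gamma D_\beta+\nu_0 h_{\beta\gamma}$ consists precisely of the embedding functions $\omega=P^\mu T_\mu$ of \eqref{Flat model: identity on P}, which are \emph{odd} under $\Upsilon$ (one has $P^\mu\mapsto-P^\mu$ under $(\psi,\vartheta)\mapsto(-\psi,a^*\vartheta)$), the even homogeneous supertranslation $\omega^E_{\xi'}$ is killed, while for $\Upsilon$-odd $\chi$ the even part of $\omega$ is fixed uniquely to the particular solution $\varpi$ of $(D_\gamma D_\beta+\nu_0 h_{\beta\gamma})\varpi=-\tfrac{\nu_0}{2}\mathcal{L}_\chi k$. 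The surviving data are thus $(\chi,\xi)\in\mathfrak{so}(n,1)\times\mathcal{E}[1](S^{n-1})$, which is exactly the isomorphism $\BMS_{n+1}\simeq\textrm{SO}(n,1)\ltimes\mathcal{E}[1](S^{n-1})$.

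Finally, the converse is immediate, since every generator of this form satisfies both the trace and the parity conditions by construction and hence preserves $[\nabla_a]$; the chain of equivalences above then shows the two groups coincide. The main obstacle I expect lies in the parity bookkeeping of the previous step: establishing cleanly that $\ker(D_\gamma D_\beta+\nu_0 h_{\beta\gamma})$ is entirely $\Upsilon$-odd—so that imposing oddness of $T$ genuinely \emph{removes} the even supertranslation branch rather than merely constraining it—and verifying that the inhomogeneous even equation fixing $\varpi$ is solvable, so that the parameter count lands exactly on $\BMS_{n+1}$.
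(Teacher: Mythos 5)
Your proposal follows the same route as the paper's: reduce to the weak Carroll automorphisms of Proposition \ref{Proposition: Spi group as automorphism}, translate invariance of $[\nabla_a]$ into the two conditions $h^{cb}T_{cb}=0$ (which reproduces \eqref{Carrollian connection: weaker automorphisms}) and $\Upsilon^*T_{cb}=-T_{cb}$ on $T_{bc}\propto(\mathcal{L}_X\nabla_c)^a{}_b$, pass to coordinates adapted to the parity ($\mathcal{C}=0$, then trace of $k$ gauged away), and show that the parity condition kills the even branch $\omega^E_{\xi'}$ of \eqref{Carrollian connection: BMS at Spi}, leaving exactly $\textrm{SO}(n,1)\ltimes\mathcal{E}[1](S^{n-1})$. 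Your kernel argument --- oddness of $T$ forces $(D_\gamma D_\beta+\nu_0 h_{\beta\gamma})\omega^{\mathrm{ev}}=0$, and the kernel of the full Hessian operator consists of the functions $P^\mu T_\mu$ of \eqref{Flat model: identity on P}, which are $\Upsilon$-odd --- is precisely the content compressed into the paper's ``by construction'', so on the decisive step your write-up is more explicit than the paper's.

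One clause is wrong, though harmlessly so. There are no ``$\Upsilon$-odd'' Killing vectors on $dS_n$: the parity $\Upsilon$ is the restriction to the hyperboloid of $Y^\mu\mapsto -Y^\mu$, and $-\mathrm{id}$ commutes with every element of $\textrm{SO}(n,1)$, so $\Upsilon_*\chi=\chi$ for \emph{every} Killing field. Consequently $\mathcal{L}_\chi k_{\alpha\beta}$ is always odd in your adapted coordinates, the even part of $T$ is always $(D_\gamma D_\beta+\nu_0 h_{\beta\gamma})\omega^{\mathrm{ev}}$, and the even part of $\omega$ always vanishes; it is never ``fixed uniquely to the particular solution of $(D_\gamma D_\beta+\nu_0 h_{\beta\gamma})\varpi=-\tfrac{\nu_0}{2}\mathcal{L}_\chi k_{\beta\gamma}$''. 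That equation is the full automorphism equation \eqref{Carrollian connection: full automorphisms}, which, as the paper notes, generically admits no solution; had your case been nonempty, those $\chi$'s would have been \emph{excluded} from the symmetry group (shrinking the Lorentz factor) rather than paired with a $\varpi$. For the same reason, the ``main obstacle'' you flag at the end dissolves: once $\mathcal{C}=0$ and $h^{\alpha\beta}k_{\alpha\beta}=0$, the trace equation is homogeneous, every Killing $\chi$ lifts (e.g.\ with $\omega=0$), and the supertranslation sector is exactly the odd solutions $\omega^O_\xi$; the only genuine existence input is the odd/even splitting of regular solutions on $dS_n$, which the paper, like you, takes from the literature.
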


At this stage, one might want to see how this parity reasoning applies to charges. Similarly, one might wonder how Strominger's matching conditions, that are a consequence of the parity conditions \cite{Henneaux:2018mgn,Mohamed:2021rfg,Capone:2022gme,Mohamed:2023jwv,Mohamed:2023ttb,Compere:2023qoa}, are realised in this geometrical setting. We will discuss these aspects in a subsequent work.

\section{Discussion}

The present work intended to highlight new definitions, Definition \ref{Definition: Spi} and \ref{Definition: Ti}, for extended boundaries $\Spi \simeq \mathbb{R} \times \mathcal{I}^0$ and $\Ti^{\pm} \simeq \mathbb{R} \times \mathcal{I}^\pm$ at spatial and timelike infinity respectively; these were first introduced in \cite{RSTA20230042,Borthwick:2024wfn} for projectively compact spacetimes. Although projective compactness implies that the mass aspect is vanishing (see Section \ref{ssection:Projective compactness}), the present work demonstrates that this assumption can be lifted without any harm. 

Our definition cleanly ties up and relates two similar constructions which have appeared in the literature. As we explained in Section \ref{ssection: Comparison with the previous literature}, when both construction make sense, the present definition at spatial infinity is naturally identified with the extended boundary of Ashtekar-Hansen \cite{hansen_r._o._metric_1978}. What is more, when applied to Minkowski space, the obtained extended boundaries are canonically identified with the homogenous space \eqref{Intro: flat models}
introduced by Figueroa O'Farrill, Have, Prohazka and Salzer.

We also sought to demonstrate how efficient this construction is in capturing the essence of the physics occurring at time-like infinity: the extended boundaries are naturally equipped with  (weak) Carrollian geometries and asymptotic symmetries of the spacetime are identified, through this construction, with Carroll automorphisms, see Proposition \ref{Proposition: Spi group as automorphism}. Moreover, massive fields naturally induce on $\Ti$ an invariant notion of scattering data, see Proposition \ref{Proposition::MassiveScalar} and \ref{Proposition: scattering data are sections}. It follows that scattering data for massive fields on $\Ti$ carry a (unitary irreducible) representation of the asymptotic symmetry group, see Proposition \ref{Proposition: Action of the BMS group on massive fields}. The scattering data are identified with the Carrollian fields which were constructed in \cite{Have:2024dff} by means of representation theory. The present construction allows to go one step further for Minkowski space: spatial geodesics emanating from points of Minkowski space naturally define a four-parameter space of cut of $\Ti$, see equation \eqref{Cut of Ti: image of the cut}. Massive fields are then recovered from the scattering data by a Kirchhoff-type formula involving integration over these cuts, see Proposition \ref{Proposition: Reconstruction formula}.

Finally we proved, see Proposition \ref{Proposition: Carrollian connection}, that $\Ti$ and $\Spi$ are examples of (strongly) Carrollian manifolds; the asymptotic geometry of the spacetime naturally induces on $\Ti$/$\Spi$ a Carrollian connection. This is in complete parallel with the geometry of null infinity \cite{ashtekar_radiative_1981,ashtekar_null_2018,Herfray:2021qmp} and ties up with the known physics at time/spatial infinity in the following way: when the Carrollian geometry is flat, it selects a preferred Poincaré group of symmetry at Time/Spatial infinity (see section \ref{ssection: Carrollian connections and their flat models}). When the geometry is not flat, a weaker notion of symmetry of the Carrollian connection allows to select a preferred BMS group inside the SPI group, see Proposition \ref{Proposition: BMS group inside SPI} and \ref{Proposition: BMS group inside SPI (2)}; this gives a geometrical realisation of Troessaert's condition \cite{troessaert_bms4_2018}. In particular, the parity condition, usually imposed in this context, follows from requiring that the discrete symmetry of the model, described in details in appendix \ref{section: appendix Ti/Spi}, is still present in the curved setting.

We hope that the present article convinced the reader that, despite the seemingly abstract form of Definition \ref{Definition: Spi} and \ref{Definition: Ti}, $\Spi$ and $\Ti$ are very practical constructions in nature and allow to put in a coherent framework physical aspects of spatial and time infinity (massive fields, Carroll geometries, asymptotic symmetries and parity conditions) which might otherwise appear falsely disconnected. 

There is in fact more to this construction. It turns out that equations \eqref{Ashtekar Romano: Einstein equations on k} have a projective invariance which allows us to consistently add a projective boundary $\partial\Ti = \mathbb{R} \times S^{n-1}$ (for $\Spi$ the boundary is disconnected with past and future components) to $\Ti$/$\Spi$ . For Minkowski space these boundaries, are identified with null infinity and, in the curved case, encode gravity vacua. This will be discussed in a subsequent work.

\appendix

\section{Models of Ti and Spi} \label{section: appendix Ti/Spi}

We here briefly review the flat model of Ti and Spi introduced in \cite{Figueroa-OFarrill:2021sxz} . We particularly emphasise the realisation of a discrete parity symmetry.\\

We will here use the notation $Y^{\mu} \in \mathbb{R}^{n,1}$ to denote vectors of Minkowski space and $\eta_{\mu\nu}$ the corresponding Minkowski metric. Consider the ambient space $\mathbb{R}^{n+1,2}$ and let 
\begin{align}
    \begin{pmatrix}
        Y^+ \\ Y^{\mu} \\ Y^-
    \end{pmatrix} &\in \mathbb{R}^{n+1,2}, &  Y^{\mu} &\in \mathbb{R}^{n,1},
\end{align}
be corresponding vectors. We take the ambient space to be equipped with a metric $h_{IJ}$ of signature $(n+1,2)$ and a preferred vector null $I^{I}$ given by
\begin{align}
    h_{IJ}  &= \begin{pmatrix}
        0 & 0 & 1 \\ 0 &\eta_{\mu\nu} & 0 \\ 1 & 0 & 0
    \end{pmatrix}, & I^{I} = \begin{pmatrix}
        1 \\ 0 \\ 0
    \end{pmatrix}.
\end{align}

The action of the Poincaré group $ISO(n,1)$ on the ambient space is then obtained by considering the subgroup of $SL(n+3)$ preserving $h_{IJ}$ and $I^{I}$. It can be parametrised as
\begin{align}
    M^I {}_J &= \begin{pmatrix}
        1 & - m^{\rho}{}_{\nu} T_{\rho} & -\frac{1}{2} T^2 \\
        0 & m^{\mu}{}_{\nu} & T^{\mu}\\
        0 & 0 & 1
    \end{pmatrix}, & \left( m^{\mu}{}_{\nu} , T^{\mu}\right) &\in SO(3,1) \ltimes \mathbb{R}^{3,1}.
\end{align}
It is typical, when considering this model, to quotient by the action of
\begin{equation}\label{Appdx: ambient parity in the model}
    P \left| \begin{array}{ccc}
         \mathbb{R}^{n+1,2} & \to & \mathbb{R}^{n+1,2} \\
          Y^I & \mapsto &  - Y^I 
    \end{array}\right. .
\end{equation}
We shall not do that here; rather, as we shall see, this map acts as natural discrete symmetry of $\Spi$.

\subsection{Minkowski space compactifications}

From the ambient space, we obtain (two copies of) a trivial bundle over Minkowski space $\mathbb{R} \times \mathbb{M}^{n,1}$ by considering the open subset
\begin{equation}
   E^{\pm} := \big\{ Y^I \in \mathbb{R}^{n+1,2} \; \vert \; I^{I}  Y^{J} h_{IJ}= \pm 1 \big\}
\end{equation}
The isomorphism is straightforwardly given by:
\begin{equation}
    \begin{array}{ccc}
     \mathbb{R} \times \mathbb{M}^{n,1}  & \to &  E^{\pm}  \\
      (u , x^{\mu})    & \mapsto & \begin{pmatrix}
           u \\ x^{\mu} \\ \pm 1
      \end{pmatrix}
    \end{array}.
\end{equation}
Penrose's conformal compactification is recovered by taking $Y^I$ to be in the null cone (i.e. by restricting to $u = \mp\frac{1}{2} Y^\mu Y^{\nu} \eta_{\mu\nu})$ while projective compactification is obtained by quotienting by the lines of $I^I$. If we stick to the ambient space, we obtain instead the extended boundaries. See \cite{RSTA20230042,Borthwick:2024wfn} for further discussions and extension of this ambient picture in the curved case.

\subsection{Extended Boundaries}

From the ambient space, the different boundaries of Minkowski space are obtained by considering 
\begin{equation}
   B := \big\{ Y^I \in \mathbb{R}^{n+1,2} \;\vert \; I^{I}  Y^{J} h_{IJ}= 0 \big\}_{/ \sim}
\end{equation}
where $Y^I \sim \lambda Y^i,\quad  \lambda\in \mathbb{R}^+$. The boundary $B \simeq S^{n+1}$ decomposes as
\begin{equation}
    B = \Ti^{+} \sqcup \Scri^{+} \sqcup \Spi \sqcup \Scri^{-} \sqcup \Ti^{-} \sqcup \left \{ I \right\} \sqcup \left \{ -I \right\},
\end{equation}
where
\begin{align}
\Ti^{+} &:= B \cap \big\{ Y^I \in \mathbb{R}^{n+1,2} \;\vert\; Y^IY^J h_{IJ} = -1, \; Y^{\mu} \; \textrm{future oriented} \big\},  \nonumber\\
   \mathscr{I}^{+} &:= B \cap \big\{ Y^I \in \mathbb{R}^{n+1,2} \;\vert\; Y^IY^J h_{IJ} =0, \; Y^{\mu} \; \textrm{non zero, future oriented} \big\}, \nonumber\\
        \Spi &:= B \cap \big\{ Y^I \in \mathbb{R}^{n+1,2} \;\vert\; Y^IY^J h_{IJ} = 1 \big\},  \\
   \mathscr{I}^{-} &:= B \cap \big\{ Y^I \in \mathbb{R}^{n+1,2} \;\vert\; Y^IY^J h_{IJ} =0, \; Y^{\mu} \; \textrm{non zero, past oriented} \big\},  \nonumber\\
   \Ti^{-} &:= B \cap \big\{ Y^I \in \mathbb{R}^{n+1,2} \;\vert\; Y^IY^J h_{IJ} = -1, \; Y^{\mu} \; \textrm{past oriented} \big\}.   \nonumber
\end{align}
The subset $B \subset \mathbb{R}\textrm{P}^{n+2}$ here is a boundary for $E^{\pm} = \mathbb{R} \times \mathbb{M}^{n,1}$ in the sense that
\begin{equation}
    \mathbb{R}\textrm{P}^{n+2} = E^{+} \sqcup B \sqcup E^-.
\end{equation}
Once again, Penrose's conformal compactification is obtained by intersecting the above with the null cone of the ambient space and projective compactification is obtained by quotienting the above by the lines of $I$. We will now briefly describe coordinates on the different components of the boundary and discuss the action of the discrete symmetry \eqref{Appdx: ambient parity in the model}.

\paragraph{Past and future null infinity}

Adapted coordinates are given by
\begin{align}
    &\left|\begin{array}{ccc}
     \mathbb{R} \times S^{n-1}  & \to &  \Scri^{\pm}  \\
      (u , z^i)    & \mapsto & \begin{pmatrix}
           u \\ Q^{\mu}\left(z^i\right) \\ 0
      \end{pmatrix}
    \end{array}\right. & \text{where}\quad Q^{\mu}\left(z^i\right) &= \begin{pmatrix}
        \pm(1+ z^2) \\ 2 z^i \\ 1-z^2
    \end{pmatrix}.
\end{align}
The discrete symmetry, $P: Y^I \mapsto -Y^I$, sends $\mathscr{I}^{+}$ to $\mathscr{I}^{-}$. This induces, in terms of the above coordinates, an antipodal map of the sphere.

\paragraph{Past and future Ti}

Adapted coordinates are given by
\begin{align}
    &\left|\begin{array}{ccc}
     \mathbb{R} \times H^{n}  & \to &  \Ti^{\pm}  \\
      (u , \y )    & \mapsto & \begin{pmatrix}
           u \\ P^{\mu}(\y) \\ 0
      \end{pmatrix}
    \end{array}\right. & \text{where}\quad P^{\mu}(\y) :=  \begin{pmatrix}
        \pm\sqrt{1 + \y^2}\\ \y
    \end{pmatrix}. 
\end{align}
The discrete symmetry, $P :Y^I \mapsto -Y^I$, sends $\Ti^{+}$ to $\Ti^{-}$. This induces, in terms of the above coordinates, a parity map on the hyperboloid.

\paragraph{Spi}

Adapted coordinates are given by
\begin{align}
    &\left|\begin{array}{ccc}
     \mathbb{R} \times dS_{n}  & \to &  \Spi  \\
      (u , \psi, \vartheta^i )    & \mapsto & \begin{pmatrix}
           u \\ P(\psi, \vartheta)^{\mu} \\ 0
      \end{pmatrix} 
    \end{array}\right. & \text{where}\quad P(\psi, \vartheta)^{\mu} :=  \begin{pmatrix}
        \sinh(\psi) \\[0.4em] \cosh(\psi) \,\vartheta^i
    \end{pmatrix},\quad \vartheta^2=1.
\end{align}
The discrete symmetry, $P: Y^I \mapsto -Y^I$, is here an automorphism of $\Spi$ which induces on $dS_{n-1}$ the usual parity map. In particular, the whole Carrollian geometry must be invariant under this transformation. This is this symmetry that the matching condition at $\Spi$ generalises.

\section{Stationary phase approximation} \label{Section::AppendixSPA}

There appears to be some discrepancy in the literature regarding the phase factor $e^{in\frac{\pi}{4}}$ appearing in the result of the stationary phase approximation (see \eqref{Eq::FallOffScalar}). For this reason, we briefly provide some elements of justification. Since both parts of the integral can be treated in a similar fashion, we will only consider:
\begin{align}
    \frac{m^{n-1}}{2 (2\pi)^n}\int \frac{d^{n}\bm{k}}{\sqrt{1 + \bm{k}^2}} a(\bm{k}) e^{i m \tilde{P} \scal X}.
\end{align}   
The important point is that the phase in the integrand is of the form:
\begin{align}
    m \tilde{P}\scal X = \rho^{-1}\varphi(\bm{k}), \qquad \varphi(\bm{k})=m(-\sqrt{\bm{k}^2+1}\sqrt{1+\bm{y}^2}\sgn(t) + \bm{k}\cdot\bm{y} ). 
\end{align} 
At fixed $X=\rho^{-1}\bm{y}$ lying \emph{within the light-cone of the origin}, $\varphi$ has one critical point at $\bm{k}=\sgn(t)\mathbf{y}$, at which:
\begin{align}
    \varphi(\y) &= - \sgn(t) m  &\textrm{and}&&
    \left(\textrm{Hess}~\varphi(\bm{y})\right)_{\alpha\beta}&=-m\sgn(t)h_{\alpha\beta}.
\end{align} 
In particular, the Hessian is of rank $n$, \emph{definite negative} when $t>0$ and \emph{definite positive} when $t<0$, whilst its determinant $\Delta(\bm{y})=\textrm{Hess}~\varphi(\bm{y})$ satisfies:
\begin{align}
     \sqrt{\lvert\Delta(\bm{y})\rvert}= \frac{m^\frac{n}{2}}{\sqrt{1+\bm{y}^2}}.
\end{align}

Plugging this into the stationary phase approximation \footnote{For a derivation, see for example~\cite[(S.P.) on p6]{Guillemin_Sternberg_1977}}:
\begin{align}
    \int_{\R^n} g(x) e^{ikf(x)} d^nx \underset{k \to+\infty}{\sim} \sum_{x_0} \left(\frac{2 \pi}{k} \right)^\frac{n}{2} |\Delta(x_0)|^{-\frac{1}{2}} e^{i \sigma(x_0) \frac{\pi}{4}} g(x_0) e^{ikf(x_0)} + \smallo{k^{-\frac{n}{2}}}, 
\end{align}
where $\Delta$ and $\sigma$ are respectively the determinant and the signature\footnote{In this context: number of positive eigenvalues - number of negative values} of the Hessian and where the sum is on all critical points (which must be non-degenerate), we obtain in the limit of timelike infinity $(\rho \to 0^+)$,
\begin{align}
     \frac{m^{n-1}}{2 (2\pi)^n}\int \frac{d^{n}\bm{k}}{\sqrt{1 + \bm{k}^2}} a(\bm{k}) e^{i m \tilde{P} \scal X} \sim \frac{m^{\frac{n}{2}-1}}{2(2\pi)^{\frac{n}{2}}}e^{-i \sgn(t) \left( m\rho^{-1} + n\frac{\pi}{4} \right)} a(\bm{y}).
\end{align}

In order to provide some intuition for the phase factor, it can be interesting to review the 1D case, to which multidimensional problems reduce through a diagonalisation argument exploiting the Morse lemma~\cite[Lemma 3.2.3]{Hormander_1971}. For simplicity, suppose that $f \in C_0^{\infty}(\R)$ and consider:
\begin{align}
    \int_\R e^{i\lambda x^2}f(x)\textrm{d} x.
\end{align} 
Since $f$ has compact support the integral is in fact over $[-\delta,\delta]$ for some $\delta >0$. Using Taylor's theorem, we can write $f(x)=f(0)+xf_1(x)$ where $f_1$ is smooth function and:
\begin{align}
    \int_\R e^{i\lambda x}f(x)\textrm{d}x &= \int_{|x|\leq \delta} f(0)e^{i\lambda x^2} \textrm{d}x  + \int_{|x|\leq \delta} f_1xe^{i\lambda x^2}\textrm{d}x.
\end{align}

Integration by parts in the second integral leads to:
\begin{align}
    \int_{|x|\leq \delta} f_1xe^{i\lambda x^2}\textrm{d}x=\bigO{\frac{1}{\lambda}}.
\end{align}

\begin{wrapfigure}{l}{0.25\textwidth}
    \begin{center}
        \begin{tikzpicture}[decoration={markings, mark=at position .5 with {\arrow{stealth}}},scale=2]
            \draw[] (-0.25,0) -- (1.25,0) node[anchor=north]{$\R$};
            \draw[] (0,-0.25) -- (0,0.85) node[anchor=west]{$i \R$};
            \draw[postaction={decorate},thick,color=red] (45:1)--(0,0) node[pos=0.5,anchor=south east]{$C_{\frac{\pi}{4}}$};
            \draw[thick,postaction={decorate},color=red] (1,0) arc (0:45:1) node[pos=0.5,anchor=south west]{$C_{\infty}$};
            \draw[postaction={decorate},thick,color=red] (0,0) -- (1,0) node[pos=0.5,anchor=north]{$C_0$};
        \end{tikzpicture}
    \end{center}
\end{wrapfigure} 

The first integral, on the other hand, can be evaluated using Cauchy's integral theorem for the holomorphic function $z \mapsto e^{i z^2}$ over the contour represented on the left. Indeed:
\begin{align}
\lim_{\lambda \to +\infty} \int_{|x|\leq \delta} f(0)e^{i\lambda x^2} \textrm{d}x &=  \lim_{\lambda \to +\infty} \frac{2 f(0)}{\sqrt{\lambda}}\int_{0}^{\sqrt{\lambda}\delta} e^{iu^2}\textrm{d}u.
    \end{align}

The integral on the $C_\infty$ curve is $\bigO{\frac{1}{\sqrt{\lambda}}}$, and thus:
\begin{align}
    \int_{0}^{\sqrt{\lambda}\delta} e^{iu^2}\textrm{d}u &= e^{i\frac{\pi}{4}}\int_0^{\sqrt{\lambda}\delta}e^{-x^2}\textrm{d}x +\bigO{\frac{1}{\sqrt{\lambda}}}\\&=\frac{\sqrt{\pi}}{2}e^{i\frac{\pi}{4}}+\bigO{\frac{1}{\sqrt{\lambda}}}.
\end{align}
Overall, we obtain the result: 
\begin{align}
    \int_\R e^{i\lambda x^2}f(x)\textrm{d} x \overset{\lambda \to + \infty}{=} \sqrt{\frac{\pi}{\lambda}}f(0)e^{i\frac{\pi}{4}} + \bigO{\frac{1}{\lambda}}.
\end{align}

If a minus sign was present in the exponential, the appropriate contour would pass instead through the lower half-plane, resulting in a phase of $- \frac{\pi}{4}$. This explains why the signature of the Hessian is involved.

\section{Connection and curvature forms for the Levi-Civita connection of the physical metric $\hat{g}$}
In this appendix, for reference and the convenience of the reader, we provide some intermediate computations. We recall that $\hat{g}=\frac{1}{\rho^2}g$ where $g$ admits the asymptotic expansion in equation~\eqref{Ashtekar-Romano: metric Coordinates expansion}. 

Assuming $\nu_0$ is a constant, the connection and curvature forms of the Levi-Civita connection in the coordinate chart $(x^0, x^\alpha)=(\rho, y^\alpha)$ are given by:
 \subsection*{Connection forms: $\omega^i_j := \Gamma^i_{jk}dx^k$}
 \begin{align}
 \omega^0_0&=\left(-\frac{2}{\rho} + \sigma(1-\rho\sigma) + o(\rho)\right)d \rho + \rho(\partial_\alpha \sigma + o(1))dy^\alpha, \\ 
 \omega^0_\alpha & = \rho(\partial_\alpha \sigma +o(1))d\rho  +\rho\nu_0(h_{\alpha \beta}- \rho(3\sigma h_{\alpha \beta} - \frac{1}{2}k_{\alpha \beta})+o(\rho))dy^\beta,\\
 \omega^\alpha_0 & =-\frac{\nu_0^{-1}}{\rho}(\partial^\alpha \sigma - \rho(k^{\alpha\beta}\partial_\beta\sigma-3\sigma\partial^\alpha\sigma)+o(\rho))d\rho+\frac{1}{\rho}\left(-\delta^\alpha_\beta +\rho(\frac{1}{2}k^\alpha_\beta - \sigma \delta^\alpha_\beta)+o(\rho) \right)dy^\beta,\\
 \omega^\alpha_\beta&=\frac{1}{\rho}\left(-\delta^\alpha_\beta +\rho(\frac{1}{2}k^\alpha_\beta-\sigma\delta^\alpha_\beta) +o(\rho)\right)d\rho \\&\hspace{2.3cm} +\left({\Gamma^{(h)}}^\alpha_{\beta\gamma}+\rho\left(D_{(\gamma}k^{\alpha}_{\beta)}-2D_{(\gamma}\sigma\delta^\alpha_{\beta)}-\frac{1}{2}D^\alpha(k_{\beta\gamma}-2\sigma h_{\beta\gamma})\right)+o(\rho)\right)dy^\gamma.
 \end{align}
 \subsection*{Curvature forms: $\Omega^i_j := d\omega^i_j + \omega^i_k\wedge \omega^k_j=\frac{1}{2}R^i_{\,\,jmn}\textnormal{d}x^m\wedge \textnormal{d}x^n$ }
\begin{align}
\Omega^0_{\,\,0} &= o_\alpha(1)d\rho\wedge dy^\alpha + o_{\alpha \beta}(\rho) dy^\alpha\wedge d y^\beta,\\
\Omega^0_{\,\,\alpha}&=\rho\left(-D_\beta D_\alpha \sigma -\sigma\nu_0h_{\beta\alpha} +o(1) \right)d\rho \wedge dy^\beta\\&\hspace{4cm}+\rho^2\nu_0(-D_{(\gamma}k_{\alpha)\delta}+\frac{1}{2}D_\delta(k_{\alpha \gamma} -2\sigma h_{\alpha\gamma})+o(1))dy^\gamma \wedge dy ^\delta, 
\\
\Omega^\alpha_{\,\, 0} &= \frac{\nu_0^{-1}}{\rho}(D_\beta D^\alpha \sigma + \nu_0\sigma \delta^\alpha_\beta + o(1))d\rho \wedge dy^\beta+(\frac{1}{2}D_\gamma k^\alpha_\beta +o(1))dy^\gamma\wedge dy^\beta,
\\
\Omega^\alpha_{\,\,\beta}&=\frac{1}{2}(D_\beta k^\alpha_\gamma-D^\alpha k_{\beta \gamma} +o(1))d\rho\wedge dy^\gamma +(\frac{1}{2}{R^{(h)}}^{\alpha}_{\phantom{\alpha}\beta\gamma\delta}-\nu_0\delta^\alpha_\gamma h_{\beta\delta} )dy^\gamma\wedge dy^\delta\\&\hspace{1.2cm}+\rho\left(-D_{\delta}\left((D_{(\gamma}k^\alpha_{\beta)} -2D_{(\gamma}\sigma\delta^\alpha_{\beta)}-\frac{1}{2}D^\alpha(k_{\beta\gamma}-2\sigma h_{\beta\gamma})) \right) +o(1)\right)dy^\gamma\wedge dy^\delta
\\&\hspace{4.2cm}+\rho\nu_0 \left(\frac{1}{2}(k^\alpha_\gamma h_{\beta\delta} -k_{\beta\delta}\delta^\alpha_\gamma)+2\sigma h_{\beta\delta}\delta^{\alpha}_{\gamma} +o(1)\right)dy^\gamma\wedge dy^\delta.
\end{align}
\subsection*{Ricci tensor: $R^c_{\phantom{c}acb}:= \Omega^i_{\,\,j}(e_i,e_k)\nabla_ ax^j \nabla_bx^k$}
From the above, one can deduce immediately the expression of the Ricci tensor.
\begin{align} R_{00}&=-\frac{\nu_0^{-1}}{\rho}(D^2\sigma +n\nu_0\sigma +o(1)), \\
R_{0\alpha}&=\frac{1}{2}(D_\beta k^\beta_\alpha-D_\alpha k +o(1)), \\
R_{\alpha \beta}&= (R^{(h)}_{\alpha \beta}-(n-1)\nu_0h_{\alpha \beta}) + \rho(\frac{1}{2}D_\alpha D_\beta k +(n-3)D_\alpha D_\beta\sigma - \frac{1}{2}D^2k_{\alpha\beta}\\&\hspace{2cm}+\nu_0(\frac{1}{2}k h_{\alpha\beta}+\frac{1}{2}(n-2)k_{\alpha \beta} +(n-3)h_{\alpha\beta}\sigma ) - {R^{(h)}}_{\gamma(\alpha\phantom{\delta}\beta)}^{\phantom{\gamma(\alpha}\delta}k^\gamma_{\,\delta}) +o(\rho),
\end{align}

\newpage
\bibliographystyle{utPhys}
\bibliography{BibPC3}

@article{Hormander_1971,
	author = {H{\"o}rmander, Lars},
	date-added = {2025-06-02 10:54:51 -0400},
	date-modified = {2025-06-02 10:54:51 -0400},
	doi = {10.1007/BF02392052},
	journal = {Acta Mathematica},
	language = {en},
	number = {127},
	pages = {79--183},
	title = {Fourier integral operators. I},
	year = {1971},
	bdsk-url-1 = {https://doi.org/10.1007/BF02392052}}

@article{Penrose_1980,
	author = {Penrose, R.},
	date-added = {2024-03-20 18:31:25 -0400},
	date-modified = {2024-03-20 18:31:25 -0400},
	doi = {10.1007/BF00756234},
	issn = {1572-9532},
	journal = {General Relativity and Gravitation},
	number = {3},
	pages = {225--264},
	title = {Null Hypersurface Initial Data for Classical Fields of Arbitrary Spin and for General Relativity},
	volume = {12},
	year = {1980},
	bdsk-url-1 = {https://doi.org/10.1007/BF00756234}}

@book{Kirchhoff:1883aa,
	author = {G.R. Kirchhoff},
	date-added = {2024-03-20 18:28:59 -0400},
	date-modified = {2024-03-20 18:29:55 -0400},
	publisher = {Ann. der Physik},
	title = {Vorlesungen {\"u}ber mathematischen Physik},
	volume = {18},
	year = {1883}}

@article{RSTA20230042,
	author = {Jack Borthwick and Yannick Herfray},
	date-added = {2024-01-27 14:53:34 -0500},
	date-modified = {2024-01-27 14:53:34 -0500},
	doi = {https://doi.org/10.1098/rsta.2023.0042},
	journal = {Philosophical Transactions of the Royal Society A: Mathematical, Physical and Engineering Sciences},
	keywords = {projective compactification,projective differential geometry, holonomy reductions, curved orbit decompositions},
	number = {2267},
	title = {Asymptotic symmetries of projectively compact order one Einstein Manifolds},
	volume = {382},
	year = {2024},
	bdsk-url-1 = {https://doi.org/10.1098/rsta.2023.0042}}

@article{ashtekar_unified_1978,
	author = {Ashtekar, Abhay and Hansen, R. O.},
	date-added = {2023-10-20 11:45:03 -0400},
	date-modified = {2023-10-20 11:45:03 -0400},
	doi = {10.1063/1.523863},
	issn = {0022-2488},
	journal = {Journal of Mathematical Physics},
	keywords = {Asymptotic Symmetries,Classical,Field Theory,Gravity},
	number = {7},
	pages = {1542--1566},
	title = {A Unified Treatment of Null and Spatial Infinity in General Relativity. {{I}}. {{Universal}} Structure, Asymptotic Symmetries, and Conserved Quantities at Spatial Infinity},
	urldate = {2018-01-09},
	volume = {19},
	year = {1978},
	bdsk-url-1 = {https://doi.org/10.1063/1.523863}}

@article{herfray_asymptotic_2020,
	author = {Herfray, Yannick},
	date-added = {2023-10-20 11:45:03 -0400},
	date-modified = {2023-10-20 11:45:03 -0400},
	doi = {10.1063/5.0003616},
	issn = {0022-2488},
	journal = {Journal of Mathematical Physics},
	keywords = {Cartan Geometry,Classical,Conformal Geometry,Field Theory,Gravity,Tractors},
	number = {7},
	pages = {072502},
	publisher = {{American Institute of Physics}},
	title = {Asymptotic Shear and the Intrinsic Conformal Geometry of Null-Infinity},
	urldate = {2020-08-27},
	volume = {61},
	year = {2020},
	bdsk-url-1 = {https://doi.org/10.1063/5.0003616}}

@incollection{geroch_asymptotic_1977,
	address = {{Boston, MA}},
	author = {Geroch, Robert},
	booktitle = {Asymptotic {{Structure}} of {{Space-Time}}},
	date-added = {2023-10-20 11:45:03 -0400},
	date-modified = {2023-10-20 11:45:03 -0400},
	doi = {10.1007/978-1-4684-2343-3_1},
	isbn = {978-1-4684-2343-3},
	keywords = {Classical,Fundations,Gravity,Review},
	langid = {english},
	pages = {1--105},
	publisher = {{Springer US}},
	title = {Asymptotic {{Structure}} of {{Space-Time}}},
	urldate = {2019-06-27},
	year = {1977},
	bdsk-url-1 = {https://doi.org/10.1007/978-1-4684-2343-3_1}}

@article{duval_carroll_2014,
	author = {Duval, C. and Gibbons, G. W. and Horvathy, P. A. and Zhang, P. M.},
	date-added = {2023-10-20 11:45:03 -0400},
	date-modified = {2023-10-20 11:45:03 -0400},
	doi = {10.1088/0264-9381/31/8/085016},
	issn = {0264-9381},
	journal = {Classical and Quantum Gravity},
	keywords = {Carroll Geometry,Classical,Gravity,Newton-Cartan},
	langid = {english},
	number = {8},
	pages = {085016},
	shorttitle = {Carroll versus {{Newton}} and {{Galilei}}},
	title = {Carroll versus {{Newton}} and {{Galilei}}: Two Dual Non-{{Einsteinian}} Concepts of Time},
	urldate = {2019-05-08},
	volume = {31},
	year = {2014},
	bdsk-url-1 = {https://doi.org/10.1088/0264-9381/31/8/085016}}

@article{duval_conformal_2014,
	author = {Duval, C. and Gibbons, G. W. and Horvathy, P. A.},
	date-added = {2023-10-20 11:45:03 -0400},
	date-modified = {2023-10-20 11:45:03 -0400},
	doi = {10.1088/0264-9381/31/9/092001},
	issn = {0264-9381},
	journal = {Classical and Quantum Gravity},
	keywords = {Asymptotic Symmetries,Carroll Geometry,Classical,Gravity},
	langid = {english},
	number = {9},
	pages = {092001},
	title = {Conformal {{Carroll}} Groups and {{BMS}} Symmetry},
	urldate = {2019-05-08},
	volume = {31},
	year = {2014},
	bdsk-url-1 = {https://doi.org/10.1088/0264-9381/31/9/092001}}

@article{bekaert_connections_2018,
	author = {Bekaert, Xavier and Morand, Kevin},
	date-added = {2023-10-20 11:45:03 -0400},
	date-modified = {2023-10-20 11:45:03 -0400},
	doi = {10.1063/1.5030328},
	issn = {0022-2488},
	journal = {Journal of Mathematical Physics},
	keywords = {Carroll Geometry,Cartan Geometry,Classical,Differential Geometry,Gravity,Newton-Cartan},
	number = {7},
	pages = {072503},
	title = {Connections and Dynamical Trajectories in Generalised {{Newton-Cartan}} Gravity. {{II}}. {{An}} Ambient Perspective},
	urldate = {2020-01-08},
	volume = {59},
	year = {2018},
	bdsk-url-1 = {https://doi.org/10.1063/1.5030328}}

@article{ashtekar_geometry_2015,
	author = {Ashtekar, Abhay},
	date-added = {2023-10-20 11:45:03 -0400},
	date-modified = {2023-10-20 11:45:03 -0400},
	doi = {10.4310/SDG.2015.v20.n1.a5},
	issn = {2164-4713},
	journal = {Surveys in Differential Geometry},
	keywords = {Asymptotic Symmetries,Boundaries,Classical,Field Theory,Gravity,Review,Starred},
	langid = {english},
	number = {1},
	pages = {99--122},
	title = {Geometry and Physics of Null Infinity},
	urldate = {2019-12-28},
	volume = {20},
	year = {2015},
	bdsk-url-1 = {https://doi.org/10.4310/SDG.2015.v20.n1.a5}}

@book{Guillemin_Sternberg_1977,
	author = {Guillemin, Victor and Sternberg, Shlomo},
	isbn = {978-0-691-17950-6},
	langid = {english},
	publisher = {{American Mathematical Society}},
	title = {Geometric Asymptotics},
	year = {1977}}

@article{ashtekar_null_2018,
	author = {Ashtekar, Abhay and Campiglia, Miguel and Laddha, Alok},
	date-added = {2023-10-20 11:45:03 -0400},
	date-modified = {2023-10-20 11:45:03 -0400},
	doi = {10.1007/s10714-018-2464-3},
	issn = {1572-9532},
	journal = {General Relativity and Gravitation},
	keywords = {Asymptotic Symmetries,Classical,Field Theory,Gravity,Quantum,Review,Starred},
	langid = {english},
	month = oct,
	number = {11},
	pages = {140},
	title = {Null Infinity, the {{BMS}} Group and Infrared Issues},
	urldate = {2019-01-29},
	volume = {50},
	year = {2018},
	bdsk-url-1 = {https://doi.org/10.1007/s10714-018-2464-3}}

@article{strominger_bms_2014,
	author = {Strominger, Andrew},
	date-added = {2023-10-20 11:45:03 -0400},
	date-modified = {2023-10-20 11:45:03 -0400},
	doi = {10.1007/JHEP07(2014)152},
	issn = {1029-8479},
	journal = {Journal of High Energy Physics},
	keywords = {Asymptotic Symmetries,Boundaries,Classical,Field Theory,Gravity,Quantum},
	langid = {english},
	month = jul,
	number = {7},
	pages = {152},
	title = {On {{BMS}} Invariance of Gravitational Scattering},
	urldate = {2019-02-19},
	volume = {2014},
	year = {2014},
	bdsk-url-1 = {https://doi.org/10.1007/JHEP07(2014)152}}

@article{bacry_possible_1968,
	author = {Bacry, Henri and L{\'e}vy-Leblond, Jean-Marc},
	date-added = {2023-10-20 11:45:03 -0400},
	date-modified = {2023-10-20 11:45:03 -0400},
	doi = {10.1063/1.1664490},
	issn = {0022-2488},
	journal = {Journal of Mathematical Physics},
	keywords = {Carroll Geometry,Classical,Field Theory,Newton-Cartan},
	number = {10},
	pages = {1605--1614},
	publisher = {{American Institute of Physics}},
	title = {Possible {{Kinematics}}},
	urldate = {2021-11-10},
	volume = {9},
	year = {1968},
	bdsk-url-1 = {https://doi.org/10.1063/1.1664490}}

@article{ashtekar_radiative_1981,
	author = {Ashtekar, Abhay},
	date-added = {2023-10-20 11:45:03 -0400},
	date-modified = {2023-10-20 11:45:03 -0400},
	doi = {10.1063/1.525169},
	issn = {0022-2488},
	journal = {Journal of Mathematical Physics},
	keywords = {Asymptotic Symmetries,Boundaries,Classical,Fundations,Gravity},
	number = {12},
	pages = {2885--2895},
	title = {Radiative Degrees of Freedom of the Gravitational Field in Exact General Relativity},
	urldate = {2019-07-12},
	volume = {22},
	year = {1981},
	bdsk-url-1 = {https://doi.org/10.1063/1.525169}}

@article{figueroa-ofarrill_spatially_2019,
	author = {{Figueroa-O'Farrill}, Jos{\'e} and Prohazka, Stefan},
	date-added = {2023-10-20 11:45:03 -0400},
	date-modified = {2023-10-20 11:45:03 -0400},
	doi = {10.1007/JHEP01(2019)229},
	issn = {1029-8479},
	journal = {Journal of High Energy Physics},
	keywords = {Carroll Geometry,Cartan Geometry,Classical,Differential Geometry,Field Theory,Newton-Cartan,Review},
	langid = {english},
	number = {1},
	pages = {229},
	title = {Spatially Isotropic Homogeneous Spacetimes},
	urldate = {2020-09-23},
	volume = {2019},
	year = {2019},
	bdsk-url-1 = {https://doi.org/10.1007/JHEP01(2019)229}}

@book{penrose_spinors_1984,
	author = {Penrose, R. and Rindler, W.},
	date-added = {2023-10-20 11:45:03 -0400},
	date-modified = {2023-10-20 11:45:03 -0400},
	doi = {10.1017/CBO9780511524486},
	ids = {penrose_spinors_1984-1},
	isbn = {9780511524486},
	keywords = {Fundations,Gravity,Spin and Spinors},
	publisher = {{Cambridge University Press}},
	series = {Cambridge {{Monographs}} on {{Mathematical Physics}}},
	title = {Spinors and {{Space-Time Volume 2: Spinor and Twistor Methods in Space-Time Geometry}}},
	year = {1984},
	bdsk-url-1 = {https://doi.org/10.1017/CBO9780511524486}}

@article{gibbons_ashtekar-hansen_2019,
	archiveprefix = {arxiv},
	author = {Gibbons, G. W.},
	date-added = {2023-10-20 11:45:03 -0400},
	date-modified = {2023-10-20 11:45:03 -0400},
	eprint = {1902.09170},
	journal = {arXiv:1902.09170},
	keywords = {Asymptotic Symmetries,Classical,Field Theory,Gravity},
	title = {The {{Ashtekar-Hansen}} Universal Structure at Spatial Infinity Is Weakly Pseudo-{{Carrollian}}},
	urldate = {2019-02-26},
	year = {2019}}

@article{hansen_r._o._metric_1978,
	author = {{Hansen R. O.} and {Newman E. T.} and {Penrose Roger} and {Tod K. P.}},
	date-added = {2023-10-20 11:45:03 -0400},
	date-modified = {2023-10-20 11:45:03 -0400},
	doi = {10.1098/rspa.1978.0177},
	journal = {Proceedings of the Royal Society of London. A. Mathematical and Physical Sciences},
	keywords = {4D,Boundaries,Classical,Field Theory,Fundations,Gravity,Null Infinity,Twistors},
	number = {1715},
	pages = {445--468},
	title = {The Metric and Curvature Properties of {{H-space}}},
	urldate = {2019-09-09},
	volume = {363},
	year = {1978},
	bdsk-url-1 = {https://doi.org/10.1098/rspa.1978.0177}}

@article{herfray_tractor_2022,
	author = {Herfray, Yannick},
	date-added = {2023-10-20 11:45:03 -0400},
	date-modified = {2023-10-20 11:45:03 -0400},
	doi = {10.1007/s00023-022-01174-0},
	issn = {1424-0661},
	journal = {Annales Henri Poincar\'e},
	langid = {english},
	number = {9},
	pages = {3265--3310},
	title = {Tractor {{Geometry}} of {{Asymptotically Flat Spacetimes}}},
	urldate = {2022-09-09},
	volume = {23},
	year = {2022},
	bdsk-url-1 = {https://doi.org/10.1007/s00023-022-01174-0}}

@article{levy-leblond_nouvelle_1965,
	author = {{Levy-Leblond}, Jean-Marc},
	date-added = {2023-10-20 11:45:03 -0400},
	date-modified = {2023-10-20 11:45:03 -0400},
	journal = {Annales de l'I.H.P. Physique th\'eorique},
	keywords = {Carroll Geometry,Classical,Field Theory,Gravity},
	langid = {french},
	number = {1},
	pages = {1--12},
	title = {{Une nouvelle limite non-relativiste du groupe de Poincar\'e}},
	urldate = {2020-01-05},
	volume = {3},
	year = {1965}}

@article{Compere:2023qoa,
	archiveprefix = {arXiv},
	author = {Comp\`ere, Geoffrey and Gralla, Samuel E. and Wei, Hongji},
	doi = {10.1088/1361-6382/acf5c1},
	eprint = {2303.17124},
	journal = {Class. Quant. Grav.},
	number = {20},
	pages = {205018},
	primaryclass = {gr-qc},
	title = {{An asymptotic framework for gravitational scattering}},
	volume = {40},
	year = {2023},
	bdsk-url-1 = {https://doi.org/10.1088/1361-6382/acf5c1}}

@article{Borthwick_Proca_Projective,
	author = {Borthwick, Jack},
	copyright = {arXiv.org perpetual, non-exclusive license},
	date-added = {2023-07-24 15:45:26 -0400},
	date-modified = {2023-07-24 15:46:50 -0400},
	doi = {10.48550/arxiv.2109.05834},
	keywords = {Differential Geometry (math.DG), Mathematical Physics (math-ph), FOS: Mathematics, FOS: Mathematics, FOS: Physical sciences, FOS: Physical sciences, 53A20, 53B10, 83C99,preprint},
	publisher = {arXiv},
	title = {Projective differential geometry and asymptotic analysis in General Relativity},
	url = {https://arxiv.org/abs/2109.05834},
	year = {2021},
	bdsk-url-1 = {https://arxiv.org/abs/2109.05834},
	bdsk-url-2 = {https://doi.org/10.48550/ARXIV.2109.05834}}

@article{Cap:2014ab,
	author = {Andreas {\v C}ap and A. Rod Gover},
	date-added = {2023-04-25 16:32:37 -0400},
	date-modified = {2023-04-25 16:32:37 -0400},
	doi = {https://doi.org/10.1515/crelle-2014-0036},
	journal = {Journal f{\"u}r die reine und angewandte Mathematik (Crelles Journal)},
	number = {717},
	pages = {47-75},
	title = {{Projective Compactifications and Einstein Metrics}},
	volume = {2016},
	year = {2014},
	bdsk-url-1 = {https://doi.org/10.1515/crelle-2014-0036}}

@article{Cap:2014aa,
	author = {Andreas {\v C}ap and A. Rod Gover and H.R. Macbeth},
	date-added = {2023-04-25 16:32:37 -0400},
	date-modified = {2023-10-09 11:45:56 -0400},
	doi = {10.1007/s10711-013-9828-3},
	journal = {Geometriae Dedicata},
	number = {1},
	pages = {235-244},
	title = {{Einstein metrics in projective geometry}},
	volume = {168},
	year = {2014},
	bdsk-url-1 = {https://doi.org/10.1007/s10711-013-9828-3}}

@article{Ashtekar:1991vb,
	author = {Ashtekar, A. and Romano, Joseph D.},
	doi = {10.1088/0264-9381/9/4/019},
	journal = {Class. Quant. Grav.},
	pages = {1069--1100},
	reportnumber = {SU-GP-91-9-1},
	title = {{Spatial infinity as a boundary of space-time}},
	volume = {9},
	year = {1992},
	bdsk-url-1 = {https://doi.org/10.1088/0264-9381/9/4/019}}

@article{Flood:2018aa,
	author = {Keegan Flood and A. Rod Gover},
	date-added = {2022-12-02 16:07:22 +0100},
	date-modified = {2022-12-02 16:07:22 +0100},
	doi = {10.1007/s12220-018-0084-5},
	journal = {The Journal of Geometric Analysis},
	title = {{Metrics in Projective Differential Geometry: The Geometry of Solutions to the Metrizability Equation}},
	year = {2018},
	bdsk-url-1 = {https://doi.org/10.1007/s12220-018-0084-5}}

@article{Gover:2018vui,
	archiveprefix = {arXiv},
	author = {Gover, A. Rod and Snell, Daniel and Taghavi-Chabert, Arman},
	date-added = {2022-12-02 16:07:14 +0100},
	date-modified = {2022-12-02 16:07:14 +0100},
	eprint = {1806.09830},
	primaryclass = {math.DG},
	title = {{Distinguished curves and integrability in Riemannian, conformal, and projective geometry}},
	year = {2018}}

@article{Figueroa-OFarrill:2021sxz,
	author = {Figueroa-O'Farrill, Jos\'e and Have, Emil and Prohazka, Stefan and Salzer, Jakob},
	date-modified = {2023-10-09 11:31:41 -0400},
	doi = {10.1007/JHEP09(2022)007},
	journal = {{Journal of High Energy Physics}},
	number = {7},
	reportnumber = {EMPG-21-15},
	title = {{Carrollian and celestial spaces at infinity}},
	year = {2022},
	bdsk-url-1 = {https://doi.org/10.1007/JHEP09(2022)007}}

@article{Herfray:2021qmp,
	author = {Herfray, Yannick},
	date-modified = {2023-10-09 11:40:34 -0400},
	doi = {10.1088/1361-6382/ac635f},
	journal = {Class. Quant. Grav.},
	number = {21},
	pages = {215005},
	title = {{Carrollian manifolds and null infinity: a view from Cartan geometry}},
	volume = {39},
	year = {2022},
	bdsk-url-1 = {https://doi.org/10.1088/1361-6382/ac635f}}

@article{cap_projective_2012,
	author = {{\v C}ap, A. and Gover, A. R. and Hammerl, M.},
	doi = {10.1112/jlms/jds002},
	issn = {1469-7750},
	journal = {Journal of the London Mathematical Society},
	keywords = {Boundaries,Cartan Geometry,Differential Geometry,Projective Geometry},
	langid = {english},
	number = {2},
	pages = {433--454},
	title = {Projective {{BGG}} Equations, Algebraic Sets, and Compactifications of {{Einstein}} Geometries},
	urldate = {2021-09-01},
	volume = {86},
	year = {2012},
	bdsk-url-1 = {https://doi.org/10.1112/jlms/jds002}}

@article{cap_holonomy_2014,
	author = {{\v C}ap, A. and Gover, A. R. and Hammerl, M.},
	doi = {10.1215/00127094-2644793},
	issn = {0012-7094, 1547-7398},
	journal = {Duke Mathematical Journal},
	keywords = {Boundaries,Cartan Geometry,Differential Geometry},
	number = {5},
	pages = {1035--1070},
	publisher = {{Duke University Press}},
	title = {Holonomy Reductions of {{Cartan}} Geometries and Curved Orbit Decompositions},
	urldate = {2021-09-14},
	volume = {163},
	year = {2014},
	bdsk-url-1 = {https://doi.org/10.1215/00127094-2644793}}

@article{cap_projective_2016-1,
	author = {{\v C}ap, Andreas and Gover, A. Rod},
	doi = {10.1007/s00208-016-1370-9},
	issn = {1432-1807},
	journal = {Mathematische Annalen},
	keywords = {Differential Geometry,Projective Geometry,Tractors},
	langid = {english},
	number = {3},
	pages = {1587--1620},
	title = {Projective Compactness and Conformal Boundaries},
	urldate = {2023-05-25},
	volume = {366},
	year = {2016},
	bdsk-url-1 = {https://doi.org/10.1007/s00208-016-1370-9}}

@article{eastwood_bgg_2011,
	author = {Eastwood, Michael G. and Gover, A. Rod},
	doi = {10.3842/SIGMA.2011.060},
	issn = {18150659},
	journal = {SIGMA. Symmetry, Integrability and Geometry: Methods and Applications},
	keywords = {Differential Geometry,Projective Geometry,Tractors},
	langid = {english},
	pages = {060},
	publisher = {{SIGMA. Symmetry, Integrability and Geometry: Methods and Applications}},
	title = {The {{BGG Complex}} on {{Projective Space}}},
	urldate = {2023-05-25},
	volume = {7},
	year = {2011},
	bdsk-url-1 = {https://doi.org/10.3842/SIGMA.2011.060}}

@article{beig_einsteins_1982,
	author = {Beig, R. and Schmidt, B. G.},
	doi = {10.1007/BF01211056},
	issn = {1432-0916},
	journal = {Communications in Mathematical Physics},
	keywords = {Asymptotic Symmetries,Boundaries,Cartan Geometry,Classical,Field Theory},
	langid = {english},
	number = {1},
	pages = {65--80},
	title = {Einstein's Equations near Spatial Infinity},
	urldate = {2023-02-22},
	volume = {87},
	year = {1982},
	bdsk-url-1 = {https://doi.org/10.1007/BF01211056}}

@article{Compere:2011ve,
	archiveprefix = {arXiv},
	author = {Compere, Geoffrey and Dehouck, Fran},
	doi = {10.1088/0264-9381/28/24/245016},
	eprint = {1106.4045},
	journal = {Class. Quant. Grav.},
	note = {[Erratum: Class.Quant.Grav. 30, 039501 (2013)]},
	pages = {245016},
	primaryclass = {hep-th},
	title = {{Relaxing the Parity Conditions of Asymptotically Flat Gravity}},
	volume = {28},
	year = {2011},
	bdsk-url-1 = {https://doi.org/10.1088/0264-9381/28/24/245016}}

@article{penrose_asymptotic_1963,
	author = {Penrose, Roger},
	doi = {10.1103/PhysRevLett.10.66},
	journal = {Physical Review Letters},
	keywords = {Asymptotic Symmetries,Boundaries,Classical,Fundations,Gravity,Review},
	number = {2},
	pages = {66--68},
	title = {Asymptotic {{Properties}} of {{Fields}} and {{Space-Times}}},
	urldate = {2019-06-24},
	volume = {10},
	year = {1963},
	bdsk-url-1 = {https://doi.org/10.1103/PhysRevLett.10.66}}

@article{troessaert_bms4_2018,
	author = {Troessaert, C{\'e}dric},
	doi = {10.1088/1361-6382/aaae22},
	issn = {0264-9381},
	journal = {Classical and Quantum Gravity},
	keywords = {4D,Asymptotic Symmetries,Classical,Gravity},
	langid = {english},
	number = {7},
	pages = {074003},
	title = {The {{BMS4}} Algebra at Spatial Infinity},
	urldate = {2018-03-08},
	volume = {35},
	year = {2018},
	bdsk-url-1 = {https://doi.org/10.1088/1361-6382/aaae22}}

@article{Have:2024dff,
	archiveprefix = {arXiv},
	author = {Have, Emil and Nguyen, Kevin and Prohazka, Stefan and Salzer, Jakob},
	eprint = {2402.05190},
	month = {2},
	primaryclass = {hep-th},
	title = {Massive carrollian fields at timelike infinity},
	year = {2024}}

@article{JEDP_1987____A1_0,
	author = {H\"ormander, Lars},
	eid = {1},
	journal = {Journ\'ees \'equations aux d\'eriv\'ees partielles},
	language = {en},
	pages = {1--9},
	publisher = {Ecole polytechnique},
	title = {Remarks on the {Klein-Gordon} equation},
	year = {1987},
	zbl = {0655.35057}}

@book{Hormander:1997aa,
	author = {Lars H{\"o}rmander},
	publisher = {Springer-Verlag Berlin Heidelberg},
	title = {{Lectures on Nonlinear Hyperbolic Differential Equations}},
	year = {1997}}

@book{Wald:1984rg,
	address = {Chicago, USA},
	author = {Wald, Robert M.},
	doi = {10.7208/chicago/9780226870373.001.0001},
	publisher = {Chicago Univ. Pr.},
	title = {{General Relativity}},
	year = {1984},
	bdsk-url-1 = {https://doi.org/10.7208/chicago/9780226870373.001.0001}}

@article{AliMohamed:2021nuc,
	archiveprefix = {arXiv},
	author = {Mohamed, Mariem Magdy Ali and Valiente Kroon, Juan A.},
	date-modified = {2024-06-01 09:50:29 -0400},
	doi = {10.1088/1361-6382/ac1208},
	eprint = {2103.02389},
	journal = {Class. Quant. Grav.},
	number = {16},
	pages = {165015},
	primaryclass = {gr-qc},
	title = {{A comparison of Ashtekar\textquoteright{}s and Friedrich\textquoteright{}s formalisms of spatial infinity}},
	volume = {38},
	year = {2021},
	bdsk-url-1 = {https://doi.org/10.1088/1361-6382/ac1208}}

@article{Friedrich:1999wk,
	archiveprefix = {arXiv},
	author = {Friedrich, Helmut and Kannar, Janos},
	doi = {10.1063/1.533235},
	eprint = {gr-qc/9910077},
	journal = {J. Math. Phys.},
	pages = {2195--2232},
	title = {{Bondi type systems near space - like infinity and the calculation of the NP constants}},
	volume = {41},
	year = {2000},
	bdsk-url-1 = {https://doi.org/10.1063/1.533235}}

@article{Friedrich:1998-83,
	abstract = {Near space-like infinity an initial value problem for the conformal Einstein equations is formulated such that: (i) the data and equations are regular, (ii) space-like and null infinity have a finite representation, with their structure and location known a priori, and (iii) the setting relies entirely on general properties of conformal structures. A first analysis of this problem shows that the solutions develop in general a certain type of logarithmic singularity at the set where null infinity touches space-like infinity. These singularities form an intrinsic part of the solutions' conformal structure. Conditions on the free initial data near space-like infinity are derived which ensure that for solutions developing from these data singularities of this type cannot occur.},
	author = {Helmut Friedrich},
	date-modified = {2024-06-01 08:21:40 -0400},
	doi = {10.1016/S0393-0440(97)82168-7},
	issn = {0393-0440},
	journal = {Journal of Geometry and Physics},
	keywords = {Initial value problem, Einstein equations},
	number = {2},
	pages = {83-163},
	title = {Gravitational fields near space-like and null infinity},
	url = {https://www.sciencedirect.com/science/article/pii/S0393044097821687},
	volume = {24},
	year = {1998},
	bdsk-url-1 = {https://www.sciencedirect.com/science/article/pii/S0393044097821687},
	bdsk-url-2 = {https://doi.org/10.1016/S0393-0440(97)82168-7}}

@article{Kroon:2004pu,
	archiveprefix = {arXiv},
	author = {Valiente Kroon, Juan Antonio},
	doi = {10.1088/0264-9381/21/23/011},
	eprint = {gr-qc/0408062},
	journal = {Class. Quant. Grav.},
	pages = {5457--5492},
	title = {{Time asymmetric spacetimes near null and spatial infinity. I. Expansions of developments of conformally flat data}},
	volume = {21},
	year = {2004},
	bdsk-url-1 = {https://doi.org/10.1088/0264-9381/21/23/011}}

@article{Kroon:2004me,
	archiveprefix = {arXiv},
	author = {Valiente Kroon, Juan Antonio},
	date-modified = {2024-06-01 09:56:45 -0400},
	doi = {10.1088/0264-9381/22/9/015},
	eprint = {gr-qc/0412045},
	journal = {Class. Quant. Grav.},
	pages = {1683--1707},
	title = {{Time asymmetric spacetimes near null and spatial infinity. II. Expansions of developments of initial data sets with non-smooth conformal metrics}},
	volume = {22},
	year = {2005},
	bdsk-url-1 = {https://doi.org/10.1088/0264-9381/22/9/015}}

@article{Kroon:2003ix,
	archiveprefix = {arXiv},
	author = {Valiente Kroon, Juan Antonio},
	doi = {10.1007/s00220-004-1174-8},
	eprint = {gr-qc/0309016},
	journal = {Commun. Math. Phys.},
	pages = {211--234},
	title = {{Does asymptotic simplicity allow for radiation near spatial infinity?}},
	volume = {251},
	year = {2004},
	bdsk-url-1 = {https://doi.org/10.1007/s00220-004-1174-8}}

@article{Prabhu:2019fsp,
	archiveprefix = {arXiv},
	author = {Prabhu, Kartik},
	date-modified = {2024-06-14 13:19:50 -0400},
	doi = {10.1007/JHEP03(2019)148},
	eprint = {1902.08200},
	journal = {J. High Energ. Phys.},
	keywords = {Boundaries,Charges,Classical,Field Theory,Gravity},
	number = {3},
	pages = {148},
	primaryclass = {gr-qc},
	title = {{Conservation of asymptotic charges from past to future null infinity: Supermomentum in general relativity}},
	volume = {2019},
	year = {2019},
	bdsk-url-1 = {https://doi.org/10.1007/JHEP03(2019)148}}

@article{Prabhu:2021cgk,
	archiveprefix = {arXiv},
	author = {Prabhu, Kartik and Shehzad, Ibrahim},
	doi = {10.1007/JHEP08(2022)029},
	eprint = {2110.04900},
	journal = {JHEP},
	pages = {029},
	primaryclass = {gr-qc},
	title = {{Conservation of asymptotic charges from past to future null infinity: Lorentz charges in general relativity}},
	volume = {08},
	year = {2022},
	bdsk-url-1 = {https://doi.org/10.1007/JHEP08(2022)029}}

@article{Chrusciel:1989ye,
	author = {Chrusciel, Piotr T.},
	doi = {10.1063/1.528209},
	journal = {J. Math. Phys.},
	pages = {2094},
	reportnumber = {YCTP-G01-89},
	title = {{On the Structure of Spatial Infinity. 2. Geodesically Regular Ashtekar-hansen Structures}},
	volume = {30},
	year = {1989},
	bdsk-url-1 = {https://doi.org/10.1063/1.528209}}

@article{Capone:2022gme,
	archiveprefix = {arXiv},
	author = {Capone, Federico and Nguyen, Kevin and Parisini, Enrico},
	doi = {10.21468/SciPostPhys.14.2.014},
	eprint = {2204.06571},
	journal = {SciPost Phys.},
	number = {2},
	pages = {014},
	primaryclass = {hep-th},
	title = {{Charge and antipodal matching across spatial infinity}},
	volume = {14},
	year = {2023},
	bdsk-url-1 = {https://doi.org/10.21468/SciPostPhys.14.2.014}}

@article{Ashtekar:2023zul,
	archiveprefix = {arXiv},
	author = {Ashtekar, Abhay and Khera, Neev},
	doi = {10.1007/JHEP01(2024)085},
	eprint = {2311.14190},
	journal = {JHEP},
	pages = {085},
	primaryclass = {gr-qc},
	title = {{Unified treatment of null and spatial infinity IV: angular momentum at null and spatial infinity}},
	volume = {01},
	year = {2024},
	bdsk-url-1 = {https://doi.org/10.1007/JHEP01(2024)085}}

@article{Ashtekar:2024aa,
	author = {Ashtekar, Abhay and Khera, Neev},
	date = {2024/02/26},
	date-added = {2024-04-14 21:31:33 +0200},
	date-modified = {2024-04-14 21:31:33 +0200},
	doi = {10.1007/JHEP02(2024)210},
	id = {Ashtekar2024},
	isbn = {1029-8479},
	journal = {Journal of High Energy Physics},
	number = {2},
	pages = {210},
	title = {Unified treatment of null and spatial infinity III: asymptotically minkowski space-times},
	url = {https://doi.org/10.1007/JHEP02(2024)210},
	volume = {2024},
	year = {2024},
	bdsk-url-1 = {https://doi.org/10.1007/JHEP02(2024)210}}

@article{Hintz:2017xxu,
	archiveprefix = {arXiv},
	author = {Hintz, Peter and Vasy, Andr\'as},
	doi = {10.1007/s40818-020-0077-0},
	eprint = {1711.00195},
	month = {10},
	primaryclass = {math.AP},
	title = {{Stability of Minkowski space and polyhomogeneity of the metric}},
	year = {2017},
	bdsk-url-1 = {https://doi.org/10.1007/s40818-020-0077-0}}

@article{Hintz:2023kjj,
	archiveprefix = {arXiv},
	author = {Hintz, Peter and Vasy, Andr\'as},
	eprint = {2302.14613},
	month = {2},
	primaryclass = {math.DG},
	title = {{Microlocal analysis near null infinity in asymptotically flat spacetimes}},
	year = {2023}}

@article{Borthwick:2024wfn,
	archiveprefix = {arXiv},
	author = {Borthwick, Jack and Herfray, Yannick},
	date-modified = {2024-12-20 12:21:04 +0000},
	doi = {10.1007/s10711-024-00973-5},
	eprint = {2406.01800},
	journal = {Geometriae Dedicata},
	month = {06},
	note = {To appear},
	primaryclass = {math.DG},
	title = {{Projective and Carrollian geometry at time/space-like infinity on projectively compact Ricci flat Einstein manifolds}},
	year = {2024},
	bdsk-url-1 = {https://doi.org/10.1007/s10711-024-00973-5}}

@article{Longhi:1997zt,
	archiveprefix = {arXiv},
	author = {Longhi, G. and Materassi, M.},
	doi = {10.1063/1.532782},
	eprint = {hep-th/9803128},
	journal = {J. Math. Phys.},
	pages = {480--500},
	reportnumber = {DFF-301-03-98},
	title = {{A Canonical realization of the BMS algebra}},
	volume = {40},
	year = {1999},
	bdsk-url-1 = {https://doi.org/10.1063/1.532782}}

@article{Campiglia:2015lxa,
	archiveprefix = {arXiv},
	author = {Campiglia, Miguel},
	doi = {10.1007/JHEP11(2015)160},
	eprint = {1509.01408},
	journal = {JHEP},
	pages = {160},
	primaryclass = {hep-th},
	title = {{Null to time-like infinity Green\textquoteright{}s functions for asymptotic symmetries in Minkowski spacetime}},
	volume = {11},
	year = {2015},
	bdsk-url-1 = {https://doi.org/10.1007/JHEP11(2015)160}}

@article{Campiglia:2015kxa,
	archiveprefix = {arXiv},
	author = {Campiglia, Miguel and Laddha, Alok},
	doi = {10.1007/JHEP12(2015)094},
	eprint = {1509.01406},
	journal = {JHEP},
	pages = {094},
	primaryclass = {hep-th},
	title = {{Asymptotic symmetries of gravity and soft theorems for massive particles}},
	volume = {12},
	year = {2015},
	bdsk-url-1 = {https://doi.org/10.1007/JHEP12(2015)094}}

@article{Campiglia:2015qka,
	archiveprefix = {arXiv},
	author = {Campiglia, Miguel and Laddha, Alok},
	doi = {10.1007/JHEP07(2015)115},
	eprint = {1505.05346},
	journal = {JHEP},
	pages = {115},
	primaryclass = {hep-th},
	title = {{Asymptotic symmetries of QED and Weinberg\textquoteright{}s soft photon theorem}},
	volume = {07},
	year = {2015},
	bdsk-url-1 = {https://doi.org/10.1007/JHEP07(2015)115}}

@article{Gen:1997az,
	archiveprefix = {arXiv},
	author = {Gen, Uchida and Shiromizu, Tetsuya},
	doi = {10.1063/1.532666},
	eprint = {gr-qc/9709009},
	journal = {J. Math. Phys.},
	pages = {6573--6592},
	reportnumber = {UTAP-270, RESCEU-33-97},
	title = {{Timelike infinity and asymptotic symmetry}},
	volume = {39},
	year = {1998},
	bdsk-url-1 = {https://doi.org/10.1063/1.532666}}

@article{Cutler:1989,
	author = {C Cutler},
	doi = {10.1088/0264-9381/6/8/009},
	journal = {Classical and Quantum Gravity},
	number = {8},
	pages = {1075},
	title = {Properties of spacetimes that are asymptotically flat at timelike infinity},
	url = {https://dx.doi.org/10.1088/0264-9381/6/8/009},
	volume = {6},
	year = {1989},
	bdsk-url-1 = {https://dx.doi.org/10.1088/0264-9381/6/8/009}}

@article{Porrill:1982,
	author = {J. Porrill},
	issn = {00804630},
	journal = {Proceedings of the Royal Society of London. Series A, Mathematical and Physical Sciences},
	number = {1781},
	pages = {323--344},
	publisher = {Royal Society},
	title = {The Structure of Timelike Infinity for Isolated Systems},
	url = {http://www.jstor.org/stable/2397253},
	urldate = {2024-11-29},
	volume = {381},
	year = {1982},
	bdsk-url-1 = {http://www.jstor.org/stable/2397253}}

@article{Henneaux:2019yax,
	archiveprefix = {arXiv},
	author = {Henneaux, Marc and Troessaert, C\'edric},
	doi = {10.1134/S0081543820030104},
	eprint = {1904.04495},
	journal = {Proc. Steklov Inst. Math.},
	number = {1},
	pages = {127--149},
	primaryclass = {hep-th},
	title = {{The Asymptotic Structure of Gravity at Spatial Infinity in Four Spacetime Dimensions}},
	volume = {309},
	year = {2020},
	bdsk-url-1 = {https://doi.org/10.1134/S0081543820030104}}

@article{Henneaux:2018cst,
	archiveprefix = {arXiv},
	author = {Henneaux, Marc and Troessaert, C\'edric},
	doi = {10.1007/JHEP03(2018)147},
	eprint = {1801.03718},
	journal = {JHEP},
	pages = {147},
	primaryclass = {gr-qc},
	title = {{BMS Group at Spatial Infinity: the Hamiltonian (ADM) approach}},
	volume = {03},
	year = {2018},
	bdsk-url-1 = {https://doi.org/10.1007/JHEP03(2018)147}}

@article{Mohamed:2021rfg,
	archiveprefix = {arXiv},
	author = {Mohamed, Mariem Magdy Ali and Kroon, Juan A. Valiente},
	doi = {10.1063/5.0081834},
	eprint = {2112.03890},
	journal = {J. Math. Phys.},
	number = {5},
	pages = {052502},
	primaryclass = {gr-qc},
	title = {{Asymptotic charges for spin-1 and spin-2 fields at the critical sets of null infinity}},
	volume = {63},
	year = {2022},
	bdsk-url-1 = {https://doi.org/10.1063/5.0081834}}

@article{Mohamed:2023ttb,
	archiveprefix = {arXiv},
	author = {Mohamed, Mariem Magdy Ali},
	doi = {10.1098/rsta.2023.0038},
	eprint = {2307.00318},
	journal = {Phil. Trans. Roy. Soc. Lond. A},
	number = {2267},
	pages = {20230038},
	primaryclass = {gr-qc},
	title = {{Calculation of asymptotic charges at the critical sets of null infinity}},
	volume = {382},
	year = {2024},
	bdsk-url-1 = {https://doi.org/10.1098/rsta.2023.0038}}

@article{Mohamed:2023jwv,
	archiveprefix = {arXiv},
	author = {Mohamed, Mariem Magdy Ali and Prabhu, Kartik and Valiente Kroon, Juan A.},
	date-modified = {2024-06-01 09:55:43 -0400},
	doi = {10.1063/5.0187927},
	eprint = {2311.07294},
	journal = {J. Math. Phys.},
	number = {3},
	pages = {032501},
	primaryclass = {gr-qc},
	title = {{BMS-supertranslation charges at the critical sets of null infinity}},
	volume = {65},
	year = {2024},
	bdsk-url-1 = {https://doi.org/10.1063/5.0187927}}

@article{Chakraborty:2021sbc,
	archiveprefix = {arXiv},
	author = {Chakraborty, Sumanta and Ghosh, Debodirna and Hoque, Sk Jahanur and Khairnar, Aniket and Virmani, Amitabh},
	doi = {10.1007/JHEP02(2022)022},
	eprint = {2111.08907},
	journal = {JHEP},
	pages = {022},
	primaryclass = {hep-th},
	title = {{Supertranslations at timelike infinity}},
	volume = {02},
	year = {2022},
	bdsk-url-1 = {https://doi.org/10.1007/JHEP02(2022)022}}

@article{SenGupta:1966qer,
	author = {Sen Gupta, N. D.},
	doi = {10.1007/BF02740871},
	journal = {Nuovo Cim. A},
	number = {2},
	pages = {512--517},
	title = {{On an analogue of the Galilei group}},
	volume = {44},
	year = {1966},
	bdsk-url-1 = {https://doi.org/10.1007/BF02740871}}

@article{Regge:1974zd,
	author = {Regge, Tullio and Teitelboim, Claudio},
	doi = {10.1016/0003-4916(74)90404-7},
	journal = {Annals Phys.},
	pages = {286},
	reportnumber = {Print-74-0988 (IAS,PRINCETON)},
	title = {{Role of Surface Integrals in the Hamiltonian Formulation of General Relativity}},
	volume = {88},
	year = {1974},
	bdsk-url-1 = {https://doi.org/10.1016/0003-4916(74)90404-7}}

@article{Donnay:2023mrd,
	archiveprefix = {arXiv},
	author = {Donnay, Laura},
	doi = {10.1016/j.physrep.2024.04.003},
	eprint = {2310.12922},
	journal = {Phys. Rept.},
	pages = {1--41},
	primaryclass = {hep-th},
	title = {{Celestial holography: An asymptotic symmetry perspective}},
	volume = {1073},
	year = {2024},
	bdsk-url-1 = {https://doi.org/10.1016/j.physrep.2024.04.003}}

@article{Fuentealba:2024lll,
	archiveprefix = {arXiv},
	author = {Fuentealba, Oscar and Henneaux, Marc},
	eprint = {2412.05088},
	month = {12},
	primaryclass = {gr-qc},
	title = {{Logarithmic matching between past infinity and future infinity: The massless scalar field}},
	year = {2024}}

@article{Fuentealba:2023hzq,
	archiveprefix = {arXiv},
	author = {Fuentealba, Oscar and Henneaux, Marc},
	doi = {10.1007/JHEP11(2023)108},
	eprint = {2309.07600},
	journal = {JHEP},
	pages = {108},
	primaryclass = {hep-th},
	title = {{Simplifying (super-)BMS algebras}},
	volume = {11},
	year = {2023},
	bdsk-url-1 = {https://doi.org/10.1007/JHEP11(2023)108}}

@article{Fuentealba:2023syb,
	archiveprefix = {arXiv},
	author = {Fuentealba, Oscar and Henneaux, Marc and Troessaert, C\'edric},
	doi = {10.1103/PhysRevLett.131.111402},
	eprint = {2305.05436},
	journal = {Phys. Rev. Lett.},
	number = {11},
	pages = {111402},
	primaryclass = {hep-th},
	title = {{Asymptotic Symmetry Algebra of Einstein Gravity and Lorentz Generators}},
	volume = {131},
	year = {2023},
	bdsk-url-1 = {https://doi.org/10.1103/PhysRevLett.131.111402}}

@article{Fuentealba:2023rvf,
	archiveprefix = {arXiv},
	author = {Fuentealba, Oscar and Henneaux, Marc and Troessaert, C\'edric},
	doi = {10.1007/JHEP03(2023)073},
	eprint = {2301.05989},
	journal = {JHEP},
	pages = {073},
	primaryclass = {hep-th},
	title = {{A note on the asymptotic symmetries of electromagnetism}},
	volume = {03},
	year = {2023},
	bdsk-url-1 = {https://doi.org/10.1007/JHEP03(2023)073}}

@article{Fuentealba:2022xsz,
	archiveprefix = {arXiv},
	author = {Fuentealba, Oscar and Henneaux, Marc and Troessaert, C\'edric},
	doi = {10.1007/JHEP02(2023)248},
	eprint = {2211.10941},
	journal = {JHEP},
	pages = {248},
	primaryclass = {hep-th},
	title = {{Logarithmic supertranslations and supertranslation-invariant Lorentz charges}},
	volume = {02},
	year = {2023},
	bdsk-url-1 = {https://doi.org/10.1007/JHEP02(2023)248}}

@article{Henneaux:2018mgn,
	archiveprefix = {arXiv},
	author = {Henneaux, Marc and Troessaert, C\'edric},
	doi = {10.1007/JHEP05(2019)147},
	eprint = {1812.07445},
	journal = {JHEP},
	pages = {147},
	primaryclass = {hep-th},
	title = {{Asymptotic structure of a massless scalar field and its dual two-form field at spatial infinity}},
	volume = {05},
	year = {2019},
	bdsk-url-1 = {https://doi.org/10.1007/JHEP05(2019)147}}

@article{Henneaux:2018gfi,
	archiveprefix = {arXiv},
	author = {Henneaux, Marc and Troessaert, C\'edric},
	doi = {10.1007/JHEP05(2018)137},
	eprint = {1803.10194},
	journal = {JHEP},
	pages = {137},
	primaryclass = {hep-th},
	title = {{Asymptotic symmetries of electromagnetism at spatial infinity}},
	volume = {05},
	year = {2018},
	bdsk-url-1 = {https://doi.org/10.1007/JHEP05(2018)137}}

@article{Friedrich:2002ru,
	archiveprefix = {arXiv},
	author = {Friedrich, Helmut},
	doi = {10.1088/0264-9381/20/1/308},
	eprint = {gr-qc/0209034},
	journal = {Class. Quant. Grav.},
	pages = {101--118},
	title = {{Spin two fields on Minkowski space near space - like and null infinity}},
	volume = {20},
	year = {2003},
	bdsk-url-1 = {https://doi.org/10.1088/0264-9381/20/1/308}}

@article{Herberthson:1992gcz,
	author = {Herberthson, Magnus and Ludvigsen, Malcolm},
	doi = {10.1007/BF00756992},
	journal = {Gen. Rel. Grav.},
	number = {11},
	pages = {1185--1193},
	title = {{A relationship between future and past null infinity}},
	volume = {24},
	year = {1992},
	bdsk-url-1 = {https://doi.org/10.1007/BF00756992}}

@article{Gasperin:2024bfc,
	archiveprefix = {arXiv},
	author = {Gasper\'\i{}n, Edgar and Mohamed, Mariem Magdy Ali and Mena, Filipe C.},
	eprint = {2408.03389},
	month = {8},
	primaryclass = {gr-qc},
	title = {{Asymptotics of spin-0 fields and conserved charges on n-dimensional Minkowski spaces}},
	year = {2024}}

@article{He:2014laa,
	archiveprefix = {arXiv},
	author = {He, Temple and Lysov, Vyacheslav and Mitra, Prahar and Strominger, Andrew},
	doi = {10.1007/JHEP05(2015)151},
	eprint = {1401.7026},
	journal = {JHEP},
	pages = {151},
	primaryclass = {hep-th},
	title = {{BMS supertranslations and Weinberg\textquoteright{}s soft graviton theorem}},
	volume = {05},
	year = {2015},
	bdsk-url-1 = {https://doi.org/10.1007/JHEP05(2015)151}}

@article{Bekaert:2022ipg,
	archiveprefix = {arXiv},
	author = {Bekaert, Xavier and Oblak, Blagoje},
	doi = {10.1007/JHEP11(2022)022},
	eprint = {2209.02253},
	journal = {JHEP},
	pages = {022},
	primaryclass = {hep-th},
	title = {{Massless scalars and higher-spin BMS in any dimension}},
	volume = {11},
	year = {2022},
	bdsk-url-1 = {https://doi.org/10.1007/JHEP11(2022)022}}

@article{Sachs:1962zza,
	author = {Sachs, R.},
	doi = {10.1103/PhysRev.128.2851},
	journal = {Phys. Rev.},
	pages = {2851--2864},
	title = {{Asymptotic symmetries in gravitational theory}},
	volume = {128},
	year = {1962},
	bdsk-url-1 = {https://doi.org/10.1103/PhysRev.128.2851}}

@article{Mccarthy:1972ry,
	author = {McCarthy, P. J. M.},
	doi = {10.1103/PhysRevLett.29.817},
	journal = {Phys. Rev. Lett.},
	pages = {817--819},
	title = {{Asymptotically flat space-times and elementary particles}},
	volume = 29,
	year = 1972,
	bdsk-url-1 = {https://doi.org/10.1103/PhysRevLett.29.817}}

@article{McCarthy_72-I,
	author = {P. J. McCarthy},
	issn = {00804630},
	journal = {Proc. R. Soc. Lond. A},
	number = 1583,
	pages = {517--535},
	publisher = {The Royal Society},
	title = {Representations of the Bondi-Metzner-Sachs Group. I. Determination of the Representations},
	url = {http://www.jstor.org/stable/78209},
	urldate = {2024-03-04},
	volume = 330,
	year = 1972,
	bdsk-url-1 = {http://www.jstor.org/stable/78209}}

@article{McCarthy_73-II,
	author = {P. J. McCarthy},
	issn = {00804630},
	journal = {Proc. R. Soc. Lond. A},
	number = 1594,
	pages = {317--336},
	publisher = {The Royal Society},
	title = {Representations of the Bondi-Metzner-Sachs Group. II. Properties and Classification of the Representations},
	url = {http://www.jstor.org/stable/78362},
	urldate = {2024-03-04},
	volume = 333,
	year = 1973,
	bdsk-url-1 = {http://www.jstor.org/stable/78362}}

@article{McCarthy_73-III,
	author = {P. J. McCarthy and M. Crampin},
	issn = {00804630},
	journal = {Proc. R. Soc. Lond. A},
	number = 1602,
	pages = {301--311},
	publisher = {The Royal Society},
	title = {Representations of the Bondi-Metzner-Sachs Group. III. Poincare Spin Multiplicities and Irreducibility},
	url = {http://www.jstor.org/stable/78570},
	urldate = {2024-03-04},
	volume = 335,
	year = 1973,
	bdsk-url-1 = {http://www.jstor.org/stable/78570}}

@article{McCarthy_76-IV,
	author = {M. Crampin and P. J. McCarthy},
	issn = {00804630},
	journal = {Proc. R. Soc. Lond. A},
	number = 1664,
	pages = {55--70},
	publisher = {The Royal Society},
	title = {Representations of the Bondi-Metzner-Sachs Group. IV. Cantoni Representations are Induced},
	url = {http://www.jstor.org/stable/79309},
	urldate = {2024-03-04},
	volume = 351,
	year = 1976,
	bdsk-url-1 = {http://www.jstor.org/stable/79309}}

@article{McCarthy_75,
	author = {P. J. McCarthy},
	issn = {00804630},
	journal = {Proc. R. Soc. Lond. A},
	number = 1635,
	pages = {489--523},
	publisher = {The Royal Society},
	title = {The Bondi-Metzner-Sachs Group in the Nuclear Topology},
	url = {http://www.jstor.org/stable/78920},
	urldate = {2024-03-04},
	volume = 343,
	year = 1975,
	bdsk-url-1 = {http://www.jstor.org/stable/78920}}

@article{McCarthy_78,
	author = {P. J. McCarthy},
	journal = {Proc. R. Soc. Lond. A},
	pages = {141--171},
	publisher = {The Royal Society},
	title = {Lifting of projective representations of the Bondi---Metzner---Sachs group},
	url = {https://www.jstor.org/stable/pdf/79356.pdf},
	volume = 358,
	year = 1978,
	bdsk-url-1 = {https://www.jstor.org/stable/pdf/79356.pdf}}

@article{McCarthy_78errata,
	author = {P. J. McCarthy},
	journal = {Proc. R. Soc. Lond. A},
	number = 1695,
	pages = {495--498},
	publisher = {The Royal Society},
	title = {Hyperfunctions and asymptotic symmetries},
	url = {https://www.jstor.org/stable/79483.pdf},
	volume = 358,
	year = 1978,
	bdsk-url-1 = {https://www.jstor.org/stable/79483.pdf}}

@article{Girardello:1974sq,
	author = {Girardello, L. and Parravicini, G.},
	doi = {10.1103/PhysRevLett.32.565},
	journal = {Phys. Rev. Lett.},
	pages = {565--568},
	title = {{Continuous spins in the bondi-metzner-sachs group of asymptotic symmetry in general relativity}},
	volume = 32,
	year = 1974,
	bdsk-url-1 = {https://doi.org/10.1103/PhysRevLett.32.565}}

@article{Bekaert:2024jxs,
	archiveprefix = {arXiv},
	author = {Bekaert, Xavier and Donnay, Laura and Herfray, Yannick},
	eprint = {2412.06002},
	month = {12},
	primaryclass = {hep-th},
	title = {{BMS particles}},
	year = {2024}}

@article{Fiorucci:2024ndw,
	archiveprefix = {arXiv},
	author = {Fiorucci, Adrien and Matulich, Javier and Ruzziconi, Romain},
	doi = {10.1103/PhysRevD.110.L061502},
	eprint = {2404.02197},
	journal = {Phys. Rev. D},
	number = {6},
	pages = {L061502},
	primaryclass = {hep-th},
	title = {{Superrotations at spacelike infinity}},
	volume = {110},
	year = {2024},
	bdsk-url-1 = {https://doi.org/10.1103/PhysRevD.110.L061502}}

@article{Bekaert:2025kjb,
	archiveprefix = {arXiv},
	author = {Bekaert, Xavier and Herfray, Yannick},
	eprint = {2505.05368},
	month = {5},
	primaryclass = {hep-th},
	title = {{BMS representations for generic supermomentum}},
	year = {2025}}

\end{document}